\documentclass[11pt]{article}
\usepackage[margin=1in]{geometry}

\usepackage{caption}
\usepackage{subcaption}
\usepackage{amsmath, amssymb} 
\usepackage{tikz}
\usetikzlibrary{arrows.meta,positioning,calc,backgrounds}
\usepackage{mathtools}
\usepackage{amsthm}
\usepackage{enumitem}

\usepackage[authoryear,round]{natbib}
\usetikzlibrary{calc,positioning}

\usepackage{algorithm}
\usepackage{algorithmic}
\usepackage{tabularx}

\newtheorem{Theorem}{Theorem}
\newtheorem{Proposition}{Proposition}
\newtheorem{Lemma}{Lemma}

\newtheorem{Definition}{Definition}
\newtheorem{Corollary}{Corollary}
\newtheorem{Assumption}{Assumption}
\newtheorem{Example-set}{Example}

\newtheorem{Remark}{Remark}

\usepackage[dvipsnames]{xcolor}        % colors

\usepackage[colorlinks=true, linkcolor={rgb,255:red,0;green,20;blue,115}, citecolor={rgb,255:red,0;green,20;blue,115}]{hyperref}

\usepackage{breakurl}

\usepackage{cleveref}

\usepackage{enumitem}
\usepackage{bbding}
\usepackage{bm}

\crefformat{subfigure}{#2Fig.~#1#3}
\Crefformat{subfigure}{#2Fig.~#1#3}

\crefname{theorem}{Theorem}{Theorems}
\crefname{lemma}{Lemma}{Lemmas}
\crefname{property}{Property}{Properties}
\crefname{assumption}{Assumption}{Assumptions}
\crefname{proposition}{Proposition}{Propositions}
\crefname{property}{Property}{Properties}
\crefname{corollary}{Corollary}{Corollaries}
\crefname{figure}{Fig.}{Figs.}
\crefname{section}{Section}{Sections}
\crefname{definition}{Definition}{Definitions}
\crefname{table}{Table}{Tables}

\crefname{algorithm}{Algorithm}{Algorithms}
\crefname{example}{Example}{Examples}
\crefname{remark}{Remark}{Remarks}
\crefname{appendix}{Appendix}{Appendices}
\crefname{section}{Section}{Sections}

\usepackage{float}
\usepackage{placeins}

\usepackage{titletoc}

\usepackage[textsize=tiny]{todonotes}

\usepackage{adjustbox}

\usepackage{centernot}

\newcommand{\CI}{\mathrel{\perp\mspace{-10mu}\perp}}
\newcommand{\nCI}{\centernot{\CI}}

\newdimen\arrowsize

\pgfarrowsdeclare{arcsq}{arcsq}
{
	\arrowsize=0.5pt
	\advance\arrowsize by .5\pgflinewidth
	\pgfarrowsleftextend{-10\arrowsize-.10\pgflinewidth}
	\pgfarrowsrightextend{.10\pgflinewidth}
}
{
	\arrowsize=0.5pt
	\advance\arrowsize by .5\pgflinewidth
	\pgfsetdash{}{0pt} % do not dash
	\pgfsetroundjoin   % fix join
	\pgfsetroundcap    % fix cap
	\pgfpathmoveto{\pgfpoint{0\arrowsize}{0\arrowsize}}
	\pgfpatharc{-90}{-130}{7\arrowsize}
	\pgfusepathqstroke
	\pgfpathmoveto{\pgfpointorigin}
	\pgfpatharc{90}{130}{7\arrowsize}
	\pgfusepathqstroke
}
\pgfarrowsdeclare{arcsqblue}{arcsqblue}
{
	\arrowsize=0.5pt % Increase arrow thickness
	\advance\arrowsize by .5\pgflinewidth
	\pgfarrowsleftextend{-12\arrowsize-.10\pgflinewidth}
	\pgfarrowsrightextend{.10\pgflinewidth}
}
{
	\arrowsize=0.8pt % Increase arrow thickness
	\advance\arrowsize by .5\pgflinewidth
	\pgfsetdash{}{0pt} % do not dash
	\pgfsetroundjoin   % fix join
	\pgfsetroundcap    % fix cap
	\pgfsetcolor{blue} % Set the color to blue
	\pgfsetlinewidth{1.5pt} % Set line width
	\pgfpathmoveto{\pgfpoint{0\arrowsize}{0\arrowsize}}
	\pgfpatharc{-90}{-130}{6\arrowsize}
	\pgfusepathqstroke
	\pgfpathmoveto{\pgfpointorigin}
	\pgfpatharc{90}{130}{6\arrowsize}
	\pgfusepathqstroke
}

\tikzset{
    every path/.style={line width=0.75pt},
    every node/.style={font=\normalsize,inner sep=0pt,
        text=black,
        text opacity=1},
    selection/.style={
        draw,
        rectangle,
        fill=black!50,
        minimum width=5.5mm,
        minimum height=5.5mm,text=white,
    },
    latent/.style={
        draw,
        circle,
        dash pattern={on 3.0pt off 2.5pt},
        fill=black!50,
        line width=1.0pt, minimum size=5.5mm,text=white,
    },
    obs/.style={
        circle, draw=Gray, fill=Gray!10, line width=0.8pt, minimum size=5.5mm
    },
    geneobs/.style={
        rectangle,
        rounded corners=2pt,
        draw=Gray,
        fill=Gray!10,
        line width=0.8pt,
        minimum width=7.5mm,
        minimum height=5.5mm
    },
   target/.style={
        circle, 
        draw=OrangeRed, 
        fill=OrangeRed!10, 
        line width=1.5pt, 
        minimum size=5.5mm, 
        inner sep=0pt
   },
   genetarget/.style={
        rectangle,
        rounded corners=2pt,
        draw=OrangeRed,
        fill=OrangeRed!10,
        line width=1.5pt,
        minimum width=7.5mm,
        minimum height=5.5mm,
        inner sep=0pt
    },
    mb/.style={
        circle, 
        draw=RoyalBlue, 
        fill=RoyalBlue!10, 
        line width=1pt, 
        minimum size=5.5mm, 
        inner sep=0pt
    }
    }
\usetikzlibrary{arrows.meta}

\makeatletter
\pgfarrowsdeclare{circle}{circle}
{
    \pgfarrowsleftextend{0pt}
    \pgfarrowsrightextend{6\pgflinewidth}
}
{
    \pgfpathcircle{\pgfqpoint{1.5pt}{0pt}}{1.5pt} % Adjust this value for size (e.g., 3pt)
    \pgfusepathqstroke % Outline the circle without filling
}
\makeatother

\usepackage{xspace}
\newcommand{\wrt}{{w.r.t.\@\xspace}}         % wrt.,
\newcommand{\eg}{e.g.\@\xspace}           % e.g.
\newcommand{\ie}{i.e.\@\xspace}           % i.e.
\usepackage{threeparttable}
\usepackage{booktabs}    % Standard package for three-line tables
\usepackage{multirow}    % Handles merged rows
\usepackage{xcolor}      % Color support
\usepackage{colortbl}    % Table background color support
\usepackage{pifont}      % Provides check and cross marks
\usepackage{fontawesome5}% Provides GitHub and web icons
\usepackage{arydshln} % Dashed lines
\usepackage{array}
\newcolumntype{C}[1]{>{\centering\arraybackslash}m{#1}}

\definecolor{lightblue}{RGB}{220,240,244}

\newlength{\localfigbodyheight}
\newlength{\seqfigbodyheight}
\date{}

\begin{document}

\title{
Testing the Validity of Instrumental Variable Sets in Causal Additive Models with Non-Constant Effects 
}

\author{Xichen Guo\textsuperscript{1}, Feng Xie\textsuperscript{*1}, Bingbing Tang\textsuperscript{1}, Yan Zeng\textsuperscript{1}, Zhang Hao\textsuperscript{2}, \\Zhi Geng\textsuperscript{1,3}, Ruichu Cai\textsuperscript{4,5}, and Kun Zhang\textsuperscript{6,7}%
}

\maketitle

\begingroup
\makeatletter
\insert\footins{%
    \reset@font\footnotesize
    \interlinepenalty\interfootnotelinepenalty
    \splittopskip\footnotesep
    \splitmaxdepth\dp\strutbox
    \floatingpenalty\@MM
    \hsize\columnwidth
    \@parboxrestore
    \noindent
    \textsuperscript{1}Department of Applied Statistics, Beijing Technology and Business University, Beijing, China.
    \textsuperscript{2}Shenzhen Institutes of Advanced Technology, Chinese Academy of Sciences, Shenzhen, China.
    \textsuperscript{3}School of Mathematical Sciences, Peking University, Beijing, China.
    \textsuperscript{4}School of Computer Science, Guangdong University 
    of Technology, Guangzhou, China. 
    \textsuperscript{5}Peng Cheng Laboratory, Shenzhen, China. 
    \textsuperscript{6}Department of Philosophy, Carnegie Mellon University, Pittsburgh, USA.
    \textsuperscript{7}Mohamed bin Zayed University of Artificial Intelligence, Abu Dhabi, United Arab Emirates.
    *Corresponding author: Feng Xie (e-mails: fengxie@btbu.edu.cn).\par
}
\makeatother
\endgroup

\begin{abstract}
Instrumental variable (IV) methods are powerful for causal effect estimation with unmeasured confounding, but in practice researchers often face a set of candidate IVs whose validity is difficult to determine from observational data.
This paper studies the problem of testing the validity of IV sets under Causal Additive Models with Non-Constant Effects (CAM-NCE). 
To address this problem, we propose a testable condition, termed the {Cross Auxiliary-based independence Test} (CAT) condition, for assessing IV set validity from observational data. 
We show that, under the completeness condition,  if the CAT condition is violated, the corresponding set cannot be a valid IV set. 
Furthermore, under a cross distributional non-degeneracy condition, we establish that the CAT condition becomes both necessary and sufficient for characterizing valid IV sets under CAM-NCE. 
We then extend the CAT condition to settings with covariates and develop a practical finite-sample algorithm for testing the validity of candidate IV sets. 
Extensive experiments on synthetic data and three real-world datasets demonstrate the effectiveness and practical utility of the proposed method.
\end{abstract}

\medskip 
\noindent 
{\bf Keywords}: 
Causal inference, instrumental variables, validity, non-constant causal effects, unmeasured confounders.

\section{Introduction}\label{Sec-introduction}
%----------------------%
Causal effect estimation from observational data is a fundamental problem in modern machine learning and probabilistic inference~\citep{peters2011causal,Cheng2023causal}, with broad applications in recommendation systems~\citep{cao2023mitigating,zhang2023personalized,liao2024instrumental}, online systems and sequential decision-making~\citep{shi2021temporal,lin2023towards,zheng2024causal}, and causality-aware clustering~\citep{li2026mechanisms,kim2024hierarchical,kim2026causal,cao2025fwcec}. 
% \IEEEPARstart{C}{ausal} effect estimation is increasingly used in modern learning systems to support reliable decision-making from observational data~\cite{Cheng2023causal,kun2024causal}, with applications in recommendation systems~\cite{ding2024causal,zhu2024mitigating,zheng2024causal}, online systems and sequential decision-making~\cite{zeng2025causal,liao2024instrumental,Fujiicausal}, and causality-aware clustering~\cite{kim2024hierarchical,kim2026causal,cao2025fwcec}. 
Instrumental variable (IV) methods provide a powerful approach to causal effect estimation by leveraging exogenous variables to mitigate unmeasured confounding. Informally, a valid IV must satisfy three key conditions: 
($\mathcal{C}$1) it is associated with the treatment (relevance); 
($\mathcal{C}$2) it affects the outcome only through the treatment (exclusion restriction); and 
($\mathcal{C}$3) it is independent of unmeasured confounders (exogeneity) 
(see Figure~\ref{fig-one-IV-assumptions-example} and Section~\ref{Sec-Notations and Definitions} for formal definitions). 
For instance, in Figure~\ref{fig-one-IV-assumptions-example}, \emph{Colonialist Mortality} serves as a valid IV for estimating the causal effect of \emph{Institutions} on \emph{Economic Development}. 

Because of unmeasured confounding, it is generally difficult to determine from observational data alone which variables can serve as valid IVs. 
In practice, valid IVs may sometimes be justified \emph{a priori} by domain expertise or historical knowledge. 
However, such information is often unavailable in many empirical settings, leaving researchers with a set of candidate IVs whose validity is uncertain. 
Therefore, developing data-driven methods to test IV validity remains a critical yet challenging task~\citep{cheng2024data,wu2025instrumental}.

\begin{figure}[!t]
\centering
\begin{tikzpicture}[
    node distance=1.3cm and 1.5cm,
    line width=0.8pt,
    >=latex,
    font=\footnotesize,
    ivnode/.style={rectangle, rounded corners=2pt, draw=black!70, fill=blue!8,
                   text width=2.5cm, align=center, minimum height=1.1cm},
    uinode/.style={rectangle, rounded corners=2pt, draw=black!70, fill=gray!10,
                   text width=2.5cm, align=center, minimum height=1.1cm},
    crossmark/.style={text=red!90!black, font=\scriptsize},
    dashedarrow/.style={dashed, draw=black!50, ->},
    solidarrow/.style={draw=black, ->}
]

% ----------- -----------
\node[ivnode] (G) {Colonialist\\Mortality};
\node[uinode, right=of G, fill=white] (X) {Institutions};
\node[uinode, right=of X, fill=white] (Y) {Economic\\Development};
\node[uinode, above=1.0cm of X] (U) {Unmeasured\\Variables};

\draw[solidarrow] (G) -- node[above,sloped,pos=0.5] {\scriptsize $\mathcal{C}1$} (X);
\draw[solidarrow] (U) -- (X);
\draw[solidarrow] (U) -- (Y);
\draw[solidarrow] (X) -- (Y);

\draw[dashedarrow] (U) -- node[above,sloped,pos=0.4] {\scriptsize $\mathcal{C}3$} (G);

\draw[dashedarrow, bend right=18, looseness=0.9]
    (G.south east) to
    node[sloped, pos=0.5, above, inner sep=1pt, align=center]
    {\scriptsize $\mathcal{C}2$}
    (Y.south west);

\node at ($(G)!0.5!(X)+(0.3,0.00)$) {\textcolor{green!70!black}{\large{\checkmark}}}; 
\node[crossmark] at ($(U)!0.4!(G)+(0.05,0.0)$) {\XSolidBrush};
\node[crossmark] at ($(G)!0.54!(Y)+(0.00,-1.0)$) {\XSolidBrush};
\end{tikzpicture}
% \vspace{-2mm}
\caption{Graphical illustration of a valid IV model.
Solid arrows denote causal relationships, while dashed arrows denote prohibited relationships for IV validity.
The variable \emph{Colonialist Mortality} serves as a valid IV for the causal effect of
\emph{Institutions} on \emph{Economic Development}. Unmeasured
Variables (\eg, \emph{Cultural
Difference}) act as unmeasured confounders \citep{acemoglu2001colonial}. More details are provided in Section~\ref{Section-real-world}.}
\label{fig-one-IV-assumptions-example}
\end{figure}

%------------------Existing methods for discrete data-------------%

Testing the validity of IVs has attracted considerable attention in recent years. 
For discrete treatment settings, several approaches address this problem by imposing testable constraints on the joint distribution, including Pearl's seminal instrumental inequality~\citep{pearl1995testability}, the generalized instrumental inequality~\citep{kedagni2020generalized}, and their extensions~\citep{manski2003partial,palmer2011nonparametric,kitagawa2015test,wang2017falsification,huber2015testing,mourifie2017testing,farbmacher2022instrument}. 
Although these methods have been successfully applied in various domains, they typically rely on discrete treatment variables, which limits their applicability in many practical settings—such as studies involving continuous exposures (e.g., vitamin levels~\citep{skaaby2013vitamin}).

%------------------Existing methods for continous data-------------%

Another line of research studies IV validity in the Causal Additive Models with Constant Effects (CAM-CE). Existing approaches in this line can be broadly divided into two categories.
\begin{itemize}
    \item \textbf{IV Set.} 
    Representative methods include the \emph{Proportion of Invalid IVs} approaches, such as the \emph{Majority Rule} constraint, which assumes that more than $50\%$ of the candidate IVs are valid~\citep{han2008detecting,kang2016instrumental,bowden2016consistent,windmeijer2019use,hartford2021valid}, and the \emph{Plurality Rule} constraint, which requires the number of valid IVs exceeds that of any group of invalid IVs sharing the same ratio-estimator limit~\citep{hartwig2017robust,guo2018confidence,windmeijer2021confidence,lin2024instrumental}. 
    Another representative condition is the \emph{InSIDE constraint}, which assumes that the pleiotropic effects of IVs on the outcome are uncorrelated with their effects on the treatment~\citep{bowden2015mendelian,kolesar2015identification,sanderson2022mendelian}. 
    In addition, the \emph{Two Valid IVs} constraint requires the existence of at least two valid instruments and further relies on either the \emph{Rank-Faithfulness} assumption~\citep{silva2017learning,cheng2023discovering} or an \emph{algebraic condition} discussed in~\cite{guo2025data}.

    \item \textbf{Single IV.} 
    Typical approaches include the IV-GIN method, which leverages the Generalized Independent Noise (GIN) condition~\citep{xie2020generalized} within linear non-Gaussian causal models~\citep{xie2022testability}, and the IV-PIM method, which is based on the Principle of Independent Mechanisms (PIM)~\citep{janzing2012information}  and evaluates instrument validity by decomposing the spectral measure of the covariance matrix of the covariates~\citep{burauel2023evaluating}.
\end{itemize}
However, these approaches primarily focus on constant causal effects, and their theoretical guarantees do not directly extend to settings with non-constant causal effects.

Causal additive models with non-constant effects (CAM-NCE) provide a more flexible framework for capturing heterogeneous treatment--outcome relationships. 
Chu et al.~\citep{chu2001semi} were among the first to study IV validity in this setting by introducing the concept of \emph{semi-instruments}, showing that semi-instrument validity is testable under additive models. 
However, their testability condition relies on the prior assumption that the exogeneity condition ($\mathcal{C}$3) holds. 
More recently, \cite{guo2024testability} proposed the Auxiliary-based Independence Test (AIT) for a single IV and showed that, under certain conditions, a valid IV satisfies the AIT condition. 
Nevertheless, the AIT condition encounters difficulties in verifying the exclusion restriction ($\mathcal{C}$2) (see the illustrative example in Section~\ref{Subsection-CAT} and Proposition~5 in ~\citep{guo2024testability}). 
Moreover, the exclusion restriction is often difficult to satisfy in practice. 
For example, in Mendelian randomization studies, genetic variants may have pleiotropic effects on the outcome, thereby violating ($\mathcal{C}$2)~\citep{burgess2017review}.

\begin{table}[h]
\centering
\caption{Summary of research coverage on IV validity testing under continuous-variable settings.}
\label{tab:IV-validity-summary}

\renewcommand{\arraystretch}{1.25}
\setlength{\tabcolsep}{4pt}
\setlength{\arrayrulewidth}{0.4pt}

\small
\begin{tabular}{%
    >{\raggedright\arraybackslash}m{0.14\linewidth} |
    >{\centering\arraybackslash}m{0.52\linewidth} |
    >{\centering\arraybackslash}m{0.27\linewidth}
}
\toprule

\textbf{Category}
&
\textbf{Constant Effect}
&
\textbf{Non-Constant Effect}
\\
\hline

Single IV
&
\textcolor{green!70!black}{\large\checkmark}\;
\cite{xie2022testability}; and \cite{burauel2023evaluating}
&
\textcolor{green!70!black}{\large\checkmark}\;
\cite{chu2001semi}; and \cite{guo2024testability}
\\[2pt]
\hline

IV Set
&
\textcolor{green!70!black}{\large\checkmark}\;
\cite{han2008detecting,kang2016instrumental,bowden2016consistent,windmeijer2019use,hartford2021valid, hartwig2017robust,guo2018confidence,windmeijer2021confidence,lin2024instrumental, bowden2015mendelian,kolesar2015identification,sanderson2022mendelian, silva2017learning,cheng2023discovering}; and \cite{guo2025data}
&
\textcolor{red!90!black}{\large\XSolidBrush}
\\

\bottomrule
\end{tabular}

\vspace{2mm}

\raggedright
\footnotesize
\textit{Note:}
The check mark
\textcolor{green!70!black}{\checkmark}
indicates that studies have addressed the corresponding setting, whereas the cross
\textcolor{red!90!black}{\XSolidBrush}
denotes that research in this setting remains unexplored.
\end{table}

% \begin{table}[h]
% \centering
% \caption{Summary of research coverage on IV validity testing under continuous-variable settings.}
% \label{tab:IV-validity-summary}
% \renewcommand{\arraystretch}{1.25}
% \setlength{\tabcolsep}{4pt}
% \setlength{\arrayrulewidth}{0.4pt}

% \small
% \begin{tabular}{%
%     >{\raggedright\arraybackslash}m{0.14\linewidth} |
%     >{\centering\arraybackslash}m{0.52\linewidth} |
%     >{\centering\arraybackslash}m{0.27\linewidth}
% }
% \toprule
% \textbf{Category}
% &
% \textbf{Constant Effect}
% &
% \textbf{Non-Constant Effect}
% \\
% \midrule

% Single IV
% &
% \textcolor{green!70!black}{\large\checkmark}\;
% \cite{xie2022testability}; and \cite{burauel2023evaluating}
% &
% \textcolor{green!70!black}{\large\checkmark}\;
% \cite{chu2001semi}; and \cite{guo2024testability}
% \\[2pt]

% IV Set
% &
% \textcolor{green!70!black}{\large\checkmark}\;
% \cite{han2008detecting,kang2016instrumental,bowden2016consistent,windmeijer2019use,hartford2021valid,hartwig2017robust,
% guo2018confidence,windmeijer2021confidence,lin2024instrumental,bowden2015mendelian,kolesar2015identification,sanderson2022mendelian,silva2017learning,cheng2023discovering}; and \cite{guo2025data}
% &\textcolor{red!90!black}{\large\XSolidBrush}
% \\

% \bottomrule
% \end{tabular}

% \vspace{2mm}

% \raggedright
% \footnotesize
% \textit{Note:}
% The check mark
% \textcolor{green!70!black}{\checkmark}
% indicates that studies have addressed the corresponding setting,
% whereas the cross
% \textcolor{red!90!black}{\XSolidBrush}
% denotes that research in this setting remains unexplored.
% \end{table}

\cref{tab:IV-validity-summary} summarizes existing studies on IV validity testing under continuous-variable settings, categorized by whether the analysis involves a single IV or a set of IVs, and whether the underlying causal effect is constant or non-constant. 
As shown in the table, prior research has primarily focused on either single IV settings or IV sets under constant-effect assumptions. 
However, testing the validity of IV sets—particularly when the treatment–outcome relationship exhibits non-constant causal effects—remains underexplored. 
To bridge this gap, this paper investigates the problem of testing the validity of IV sets within the CAM-NCE framework, where candidate instruments may violate either the exclusion restriction or the exogeneity condition.
Specifically, our main contributions are as follows:
\begin{enumerate}
    \item[1.] We introduce a testable condition, termed the \textbf{C}ross \textbf{A}uxiliary-based Independence \textbf{T}est (CAT) condition, for assessing the validity of IV sets within CAM-NCE.
    
    \item[2.] We establish that, under the completeness condition (Assumption~\ref{Ass-completeness}) and the cross distributional non-degeneracy condition (Assumption~\ref{Ass-higher-order-condition}), the CAT condition is necessary and sufficient for detecting all invalid IV sets under CAM-NCE. 
    
    \item[3.] We develop a practical testing procedure for the CAT condition that accounts for covariates and finite-sample considerations.
    
    \item[4.] We empirically validate the proposed method through extensive experiments on both synthetic and real-world datasets, demonstrating its effectiveness in detecting invalid IV sets.
\end{enumerate}

\section{Preliminaries and Problem Definition}\label{Sec-Preliminaries}
% \vspace{-1mm}
\subsection{Notations and Definitions}\label{Sec-Notations and Definitions}

In this paper, a causal system is represented by a \emph{directed acyclic graph (DAG)} $\mathcal{G}$, in which nodes correspond to random variables and directed edges represent direct causal influences between them. For brevity, we use ``w.r.t.'' to denote ``with respect to''. Throughout this paper, the major symbols and notations are summarized in~\cref{Table-symbols}.

\begin{table}[htp]
\centering
\small
\setlength{\tabcolsep}{1pt} 

\caption{Symbols and notations used in this paper.}
\begin{tabular}{p{3.2cm}p{10cm}}
\toprule
\textbf{Symbol} & \textbf{Description}\\ 
\midrule
% $\mathcal{G}$   & A directed acyclic graph (DAG)  \\
% IV    & Instrumental Variable (or instrument)  \\
$X$    & A treatment (exposure) variable \\
$Y$    & An outcome variable  \\
$Z_i$  & A candidate (potential) IV  \\
$\mathbf{Z}$ & A candidate (potential) IV set  \\
$\mathbf{U}$ & Unmeasured confounders between $X$ and $Y$ \\
$\mathbf{W}$    & Covariates   \\
$\mathcal{X}$    & The residual of variable $X$ after regressing on covariates $\mathbf{W}$ \\
$\mathcal{Z}_i$    & The residual of variable $Z_i$ after regressing on covariates $\mathbf{W}$ \\
$|\mathbf{Z}|$  & The number of variables in set $\mathbf{Z}$\\
$\varepsilon_{*}$ & The noise term of a variable \\
$\mathbb{E}(X)$ & The expected value of random variable $X$\\
$\operatorname{Cov}(X, Y)$ & The covariance between random variables $X$ and $Y$\\
$\operatorname{Var}(X)$ & The variance of random variable $X$\\
${A} \CI {B}$  & $A$ is statistically independent of $B$ \\
${A} \nCI {B}$ & $A$ is statistically dependent on $B$ \\
$\mathbb{R}$ &  The field of real numbers\\
$\mathbb{R}^{|*|}\rightarrow\mathbb{R}$ & A mapping from $\mathbb{R}^{|*|}$ to $\mathbb{R}$ \\
$f_{bias}(X)$ & The bias between estimated causal effect of $X$ on $Y$ and ground-truth causal effect of $X$ on $Y$ \\
$\mathcal{V}_{X \to Y||Z_i}$ & The auxiliary variable of causal relationship $X \to Y$ relative to $Z_i$. We often use $\mathcal{V}_i$ as a shorthand when there is no ambiguity \\ 
$\widehat{Q}$ & The sample estimate of the corresponding population quantity $Q$ \\
\bottomrule
\end{tabular}

\label{Table-symbols}
\end{table}

\begin{figure*}[t!]
\centering
%%%%%%%%%%%%%%
%% (a) Valid IV set
%%%%%%%%%%%%%%
\begin{tikzpicture}[scale=1.05, line width=0.8pt, inner sep=0.4mm, shorten >=.1pt, shorten <=.1pt]
\tikzstyle{every node}+=[inner sep=0pt]
% ----- 节点 -----
\draw (1.1, 1.25) node(Z2) [circle, draw=black, fill=white, minimum size=0.55cm] {\footnotesize $Z_2$};
\draw (0.3, 0.0)  node(Z1) [circle, draw=black, fill=white, minimum size=0.55cm] {\footnotesize $Z_1$};
\draw (1.9, 0.0)  node(X)  [circle, draw=black, fill=white, minimum size=0.55cm] {\footnotesize $X$};
\draw (3.5, 0.0)  node(Y)  [circle, draw=black, fill=white, minimum size=0.55cm] {\footnotesize $Y$};
\draw (2.7, 1.25) node(U)  [circle, draw=black, fill=gray!15, minimum size=0.55cm] {\footnotesize $\mathbf{U}$};

\draw[-arcsq] (Z1) -- (X);
\draw[-arcsq] (Z2) -- (X);
\draw[-arcsq] (X) -- (Y);
\draw[-arcsq] (U) -- (X);
\draw[-arcsq] (U) -- (Y);

\draw (2.0, -0.8) node {\small (a) Valid IV set };
\end{tikzpicture}
\hspace{5mm}
%%%%%%%%%%%%%%
%% (b) Violates C2
%%%%%%%%%%%%%%
\begin{tikzpicture}[scale=1.05, line width=0.8pt, inner sep=0.4mm, shorten >=.1pt, shorten <=.1pt]
\tikzstyle{every node}+=[inner sep=0pt]

\draw (1.1, 1.25) node(Z2) [circle, draw=black, fill=white, minimum size=0.55cm] {\footnotesize $Z_2$};
\draw (0.3, 0.0)  node(Z1) [circle, draw=black, fill=white, minimum size=0.55cm] {\footnotesize $Z_1$};
\draw (1.9, 0.0)  node(X)  [circle, draw=black, fill=white, minimum size=0.55cm] {\footnotesize $X$};
\draw (3.5, 0.0)  node(Y)  [circle, draw=black, fill=white, minimum size=0.55cm] {\footnotesize $Y$};
\draw (2.7, 1.25) node(U)  [circle, draw=black, fill=gray!15, minimum size=0.55cm] {\footnotesize $\mathbf{U}$};

\draw[-arcsq] (Z1) -- (X);
\draw[-arcsq] (Z2) -- (X);
\draw[-arcsq] (X) -- (Y);
\draw[-arcsq] (U) -- (X);
\draw[-arcsq] (U) -- (Y);
\draw[-arcsq, red!90!black, line width=1pt] (Z1) edge[bend right=25] (Y);

\draw (2.0, -0.8) node {\small (b) Invalid IV set violating $\mathcal{C}2$ };
\end{tikzpicture}
\hspace{5mm}
%%%%%%%%%%%%%%
%% (c) Violates C3
%%%%%%%%%%%%%%
\begin{tikzpicture}[scale=1.05, line width=0.8pt, inner sep=0.4mm, shorten >=.1pt, shorten <=.1pt]
\tikzstyle{every node}+=[inner sep=0pt]

\draw (1.1, 1.25) node(Z2) [circle, draw=black, fill=white, minimum size=0.55cm] {\footnotesize $Z_2$};
\draw (0.3, 0.0)  node(Z1) [circle, draw=black, fill=white, minimum size=0.55cm] {\footnotesize $Z_1$};
\draw (1.9, 0.0)  node(X)  [circle, draw=black, fill=white, minimum size=0.55cm] {\footnotesize $X$};
\draw (3.5, 0.0)  node(Y)  [circle, draw=black, fill=white, minimum size=0.55cm] {\footnotesize $Y$};
\draw (2.7, 1.25) node(U)  [circle, draw=black, fill=gray!15, minimum size=0.55cm] {\footnotesize $\mathbf{U}$};

\draw[-arcsq] (Z1) -- (X);
\draw[-arcsq] (Z2) -- (X);
\draw[-arcsq] (X) -- (Y);
\draw[-arcsq] (U) -- (X);
\draw[-arcsq] (U) -- (Y);
\draw[-arcsq, red!90!black, line width=1pt] (U) -- (Z2);

\draw (2.0, -0.8) node {\small (c) Invalid IV set violating $\mathcal{C}3$};
\end{tikzpicture}

%%%%%%%%%%%%%%
%% Caption
%%%%%%%%%%%%%%
\vspace{-1mm}
\caption{Graphical illustration of IV set $\mathbf{Z} = \{Z_1, Z_2\}$ models,
where $\mathbf{U}$ denotes unmeasured confounders.
(a) $\mathbf{Z}$ is a valid IV set.
(b) $\mathbf{Z}$ is an invalid IV set, violating Condition $\mathcal{C}2$ because of the edge $Z_1 \to Y$.
(c) $\mathbf{Z}$ is an invalid IV set, violating Condition $\mathcal{C}3$ because of the edge $\mathbf{U} \to Z_2$.}
\label{Fig-IV-set}
\vspace{-3mm}
\end{figure*}

We next introduce the formal definitions of an IV and an IV set, which will be used throughout the paper.
\begin{Definition}[\textbf{Instrumental Variable (IV)}~\citep{pearl2009causality}]\label{Def-IV}
A variable $Z_i$ is said to be an IV w.r.t. the causal relation $X \to Y$ if it satisfies the following three conditions:
\begin{description}[leftmargin=15pt,itemsep=0pt,topsep=0pt,parsep=0pt]
    \item[$\mathcal{C}1$. (Relevance)] $Z_i$ is associated with the treatment $X$, i.e., $Z_i \nCI X$.
    \item[$\mathcal{C}2$. (Exclusion Restriction)] $Z_i$ does not directly affect the outcome $Y$, i.e., $Z_i \CI Y \mid \{X, \mathbf{U}\}$.
    \item[$\mathcal{C}3$. (Exogeneity)] $Z_i$ is independent of the unmeasured confounders $\mathbf{U}$, i.e., $Z_i \CI \mathbf{U}$.
\end{description}
\end{Definition}

\begin{Definition}[\textbf{IV Set}]\label{Def-IV-Set}
A set of variables $\mathbf{Z}$ is said to be a valid IV set \wrt $X \to Y$ if every nonempty subset $\mathbf{Z}' \subseteq \mathbf{Z}$ satisfies Conditions $\mathcal{C}1$--$\mathcal{C}3$ of Definition~\ref{Def-IV}, with $Z_i$ replaced by $\mathbf{Z}'$. Otherwise, $\mathbf{Z}$ is referred to as an invalid IV set \wrt $X \to Y$. 
\end{Definition}

As illustrated in Figure~\ref{Fig-IV-set}, Figure~\ref{Fig-IV-set}(a) shows a valid IV set $\{Z_1,Z_2\}$, where both variables satisfy Conditions $\mathcal{C}1$--$\mathcal{C}3$. 
By contrast, Figure~\ref{Fig-IV-set}(b) shows an invalid IV set in which $Z_1$ violates the exclusion restriction ($\mathcal{C}2$) due to the direct edge $Z_1 \to Y$, while Figure~\ref{Fig-IV-set}(c) shows another invalid IV set in which $Z_2$ violates exogeneity ($\mathcal{C}3$) due to its dependence on the unmeasured confounders $\mathbf{U}$.

\subsection{Causal Additive Models with Non-Constant Effects}\label{Sub-Section-Problem-Definition}

In this paper, we focus on \textbf{C}ausal \textbf{A}dditive \textbf{M}odels with \textbf{N}on-\textbf{C}onstant \textbf{E}ffects (CAM-NCE), involving the treatment $X$, the outcome $Y$, a candidate IV set $\mathbf{Z}$, and unmeasured confounders $\mathbf{U}$.
Without loss of generality, all variables are assumed to be mean-centered. We first present the covariate-free case for simplicity, and discuss the extension to baseline covariates in~\cref{Sec-CAT-Condition-Covariates}. Specifically, the data generation process under CAM-NCE can be expressed as:
\begin{equation}\label{Eq-Main-Model}
    \begin{aligned}
        X &= g_X(\mathbf{Z}) + \varphi_{X}(\mathbf{U}) + \varepsilon_{X},\\
        Y &= f(X) + g_Y(\widetilde{\mathbf{Z}}) + \varphi_{Y}(\mathbf{U}) + \varepsilon_{Y}, 
    \end{aligned}
\end{equation}
where the function $f(\cdot)$ represents the true but unknown causal effect of interest. 
The functions $f(\cdot)$, $g_{*}(\cdot)$, and $\varphi_{*}(\cdot)$ are assumed to be smooth functions defined on appropriate domains, i.e., $f:\mathbb{R}\!\to\!\mathbb{R}$, $g_X:\mathbb{R}^{|\mathbf{Z}|}\!\to\!\mathbb{R}$, $g_Y:\mathbb{R}^{|\widetilde{\mathbf{Z}}|}\!\to\!\mathbb{R}$, and $\varphi_{*}:\mathbb{R}^{|\mathbf{U}|}\!\to\!\mathbb{R}$. 
The noise terms $\varepsilon_X$ and $\varepsilon_Y$ are assumed to be mutually independent. The nonzero function $g_Y(\widetilde{\mathbf{Z}})$ indicates that the subset $\widetilde{\mathbf{Z}}\!\subseteq\!\mathbf{Z}$ directly affects the outcome $Y$, thereby violating the \emph{exclusion restriction} condition ($\mathcal{C}2$). 
Furthermore, if there exists any $Z_i \!\in\! \mathbf{Z}$ that is statistically dependent on $\mathbf{U}$, this implies a violation of the \emph{exogeneity} condition ($\mathcal{C}3$).

Note that, even given a valid IV, identifying the nonparametric structural function $f(\cdot)$ generally requires additional conditions. 
Following the standard additive nonparametric IV literature, we assume that a solution exists and impose the following completeness condition to ensure uniqueness given a valid IV~\citep{newey2003instrumental,ai2003efficient,darolles2011nonparametric,chernozhukov2007instrumental,newey2013nonparametric,singh2019kernel,bennett2019deep}.

\begin{Assumption}
[\textbf{Completeness Condition}]\label{Ass-completeness}
Given a valid IV $Z_i \in \mathbf{Z}$, for any measurable function $\psi(X)$ satisfying $\mathbb{E}[|\psi(X)|] < +\infty$, $\mathbb{E}[\psi(X)|Z_i]=0$ almost surely if and only if $\psi(X)=0$ almost surely. 
\end{Assumption}

The completeness condition is \emph{generic}, in the sense that it holds for ``most'' $s(X | Z_i)$—where $s(x | z_i)$ denotes the conditional density of $X$ given $Z_i$—if it holds for one~\citep{newey2013nonparametric,andrews2017examples}.
In particular, the finite-support case and models belonging to exponential families (such as Gaussian, Poisson, Binomial, or certain multivariate extensions) are known to satisfy completeness~\citep{newey2003instrumental,hu2018nonparametric}. Further sufficient conditions for completeness have been established in~\cite{d2011completeness,andrews2017examples}; and \cite{ hu2018nonparametric}.

\begin{Remark}
We highlight the following points regarding the proposed CAM-NCE model:
\begin{enumerate}[leftmargin=12pt, itemsep=2pt, topsep=2pt]
    \item \textbf{Relation to constant-effect models.} 
    When the functions $f(\cdot)$, $g_{*}(\cdot)$, and $\varphi_{*}(\cdot)$ are linear, the above model reduces to the \emph{Causal Additive Model with Constant Effects} (CAM-CE), such as the \emph{Additive Linear, Constant-Effects Model} (ALICE)~\citep{han2008detecting,kang2016instrumental,bowden2016consistent,windmeijer2019use,hartford2021valid,hartwig2017robust,guo2018confidence,windmeijer2021confidence,lin2024instrumental,bowden2015mendelian,kolesar2015identification,sanderson2022mendelian,silva2017learning,cheng2023discovering}. 
    In this work, we investigate a more challenging and general setting where $f(\cdot)$, $g_{*}(\cdot)$, and $\varphi_{*}(\cdot)$ can be nonlinear. 

    \item \textbf{Conditions to be assessed.} 
    Condition $\mathcal{C}1$ (relevance) can be readily assessed through standard statistical dependence tests. 
    Hence, our primary focus lies in addressing the remaining Conditions $\mathcal{C}2$ and $\mathcal{C}3$. 
    Notably, causal discovery methods based on conditional independence tests, such as the FCI (Fast Causal Inference) algorithm~\citep{spirtes1995causal} and its extensions~\citep{colombo2012learning,akbari2021recursive}, often yield a fully connected subgraph over $\{X, Y, Z_i\}$ in the presence of unmeasured confounders $\mathbf{U}$ (since $Z_i \nCI Y \mid X$). 
    As a result, verifying Conditions $\mathcal{C}2$--$\mathcal{C}3$ and identifying valid IVs remain challenging tasks.

    \item \textbf{Explicit representation of confounders.} 
    Several previous studies implicitly absorb unmeasured confounding into composite noise terms $\delta = \varphi_{X}(\mathbf{U}) + \varepsilon_{X}$ and $\epsilon = \varphi_{Y}(\mathbf{U}) + \varepsilon_{Y}$, which are typically correlated (e.g.,~\cite{newey2003instrumental,guo2016control,bennett2019deep,singh2019kernel}). In contrast, we explicitly represent $\mathbf{U}$ to facilitate subsequent theoretical derivations and analysis.
\end{enumerate}
\end{Remark}

\noindent\textbf{Our Goal.}  
The goal of this work is to develop a data-driven framework for testing the validity of IV sets under CAM-NCE, where Assumption~\ref{Ass-completeness} holds. 
Specifically, given a candidate IV set $\mathbf{Z}$ for the causal relation $X\to Y$, we aim to assess whether the variables in $\mathbf{Z}$ satisfy Conditions $\mathcal{C}1$--$\mathcal{C}3$.

\section{CAT Condition for Testing IV Set Validity}\label{Sec-CAT-Condition}

In this section, we introduce a testable criterion for assessing the validity of IV sets under CAM-NCE, termed the \emph{Cross Auxiliary-based Independence Test} (CAT) condition. 
We first show that, under the \emph{completeness condition} (Assumption~\ref{Ass-completeness}), the CAT condition is necessary for IV set validity. 
We then present a counterexample showing that completeness alone is insufficient to rule out all invalid IV sets. 
Finally, by imposing an additional \emph{cross distributional non-degeneracy condition} (Assumption~\ref{Ass-higher-order-condition}), we establish that the CAT condition becomes both necessary and sufficient for characterizing valid IV sets.

\subsection{CAT Condition: Definition and Necessity}\label{Subsection-CAT}
% \vspace{-2mm}
We first introduce the key concept of the \emph{auxiliary variable} together with the definition of the CAT condition for an IV set, which characterizes cross-independence relationships between the auxiliary variable and another distinct IV.
\begin{Definition}[\textbf{Auxiliary Variable}]\label{Def-auxiliary-variable} 
Let $X$, $Y$, and $Z_i \in \mathbf{Z}$ denote the treatment, outcome, and candidate IV, respectively. The auxiliary variable for the causal relationship  $X \to Y$ relative to $Z_i$ is defined as
% \vspace{-2mm}
\begin{align}
\label{Eq-auxiliary-variable}
    \mathcal{V}_{X \to Y||Z_i} \coloneqq Y - h_i(X),
\end{align}
% \vspace{-2mm}
where $h_i(\cdot)$ is a nonzero function satisfying $\mathbb{E}[ \mathcal{V}_{X \to Y||Z_i}| Z_i] = {0}$. 
\end{Definition} 

The notion of an auxiliary variable, or related residual-type constructions, has been used in various contexts~\citep{drton2004iterative,chen2017identification,cai2019triad,xie2024generalized,guo2024testability}.  
Different from these works, we use auxiliary variables to construct cross-independence relations among candidate IVs. 
To the best of our knowledge, such \emph{cross-independence}  relations have not been used to assess IV set validity under CAM-NCE.

\begin{Definition}[\textbf{CAT Condition}]\label{Definition-CAT-condition-set}
Let $X$, $Y$, and $\mathbf{Z}^\prime  \subseteq \mathbf{Z}$ denote the treatment, outcome, and a candidate IV set, respectively.
We say that $\{X,Y \|\ \mathbf{Z}^\prime\}$ satisfies the \textbf{C}ross \textbf{A}uxiliary-based Independence \textbf{T}est (CAT) condition if and only if, for every pair of distinct IVs $\{Z_i, Z_j\} \subseteq \mathbf{Z}^\prime$, the following cross-independence relationships hold:
\begin{equation}\label{Eq-CAT-Condition}
    \begin{aligned}
        \mathcal{V}_{X \to Y||Z_i}  \CI Z_j, \text{ and  } \mathcal{V}_{X \to Y||Z_j} \CI Z_i.
    \end{aligned}
\end{equation}
\end{Definition} 

Intuitively, the CAT condition performs a cross-check among candidate IVs: the auxiliary variable constructed using one candidate IV is tested for independence from another candidate IV in the set. 
For a pair $\{Z_i,Z_j\}$, CAT requires both $\mathcal{V}_{X \to Y\|Z_i} \CI Z_j$ and $\mathcal{V}_{X \to Y\|Z_j} \CI Z_i$ to hold.

We next provide a simple example to illustrate how the CAT condition detects violations of IV validity.

\begin{Example-set}[\textbf{Intuitive Example of the CAT Condition}]\label{Example-IV-set}
Consider the two causal graphs shown in Figure~\ref{Fig-IV-set}(a) and Figure~\ref{Fig-IV-set}(b).
Suppose that the corresponding data-generating mechanisms are given as follows:
\begin{itemize}[leftmargin=10pt]
    \item \textbf{Figure~\ref{Fig-IV-set}(a):}  
    $U = \varepsilon_U$, $Z_1 = \varepsilon_{Z_1}$, $Z_2 = \varepsilon_{Z_2}$, $X = 2 Z_1 +  Z_2 + 1.5 U + \varepsilon_X$, and 
    $Y = \exp(X) + 0.5 U + \varepsilon_Y.$ %\log(X)
    \item \textbf{Figure~\ref{Fig-IV-set}(b):}  
    $U = \varepsilon_U$,  $Z_1 = \varepsilon_{Z_1}$,  $Z_2 = \varepsilon_{Z_2}$, $X = 2 Z_1 +  Z_2 + 1.5 U + \varepsilon_X$, and  $Y = \exp(X) + 5 Z_1 + 0.5 U + \varepsilon_Y$. %\log(X)
\end{itemize}

\begin{figure}[ht]
    \centering
    \includegraphics[width=0.7\linewidth]{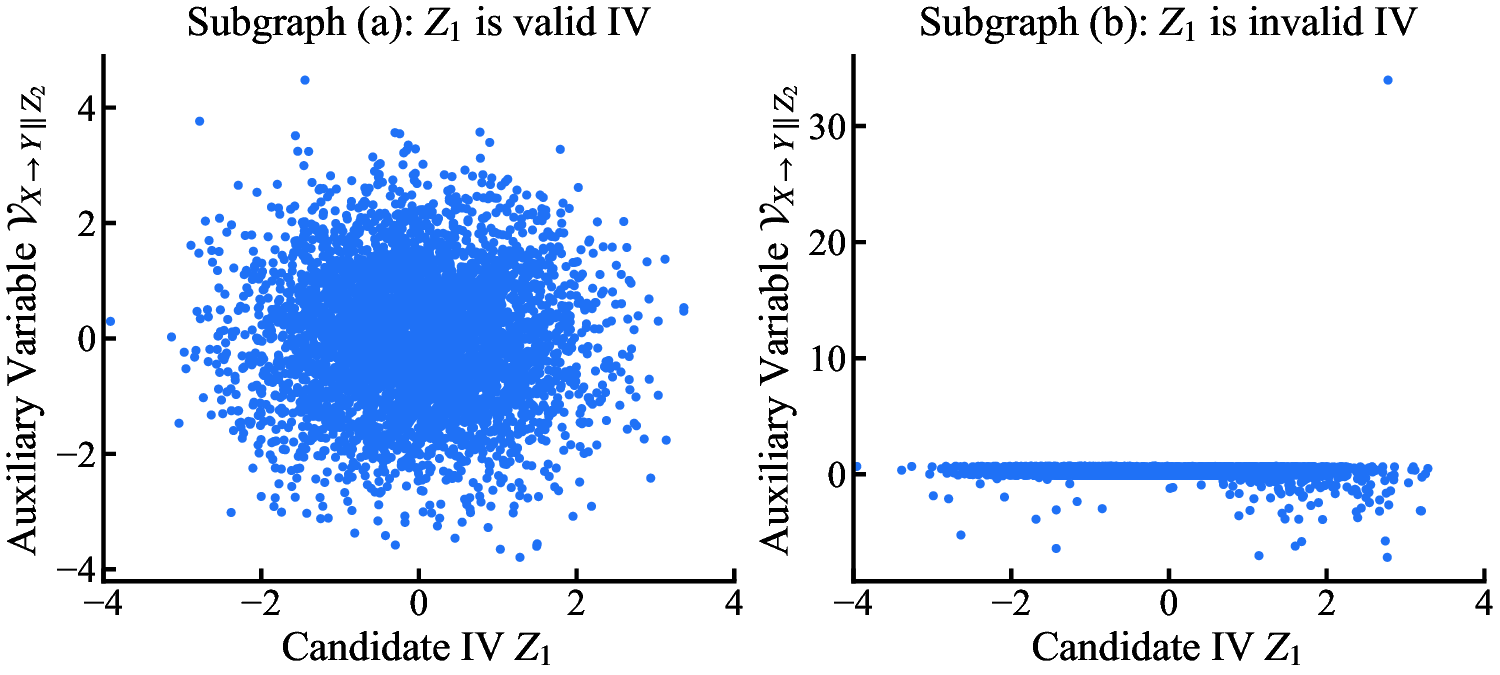}
    \caption{Scatter plots of candidate IV $Z_1$ and auxiliary variable $\mathcal{V}_{X \to Y \| Z_2}$ for Subgraphs (a) and (b) in Example \ref{Example-IV-set} when all noise terms
follow a standard Gaussian distribution.}
    \label{fig:example_nonconsatnt}
\end{figure}

Assume that all noise terms are mutually independent and follow a standard Gaussian distribution. Let $\mathbf{Z}^\prime = \{Z_1, Z_2\}$.

In Figure~\ref{Fig-IV-set}(a), both $Z_1$ and $Z_2$ are valid IVs. 
For either reference IV, the auxiliary variable removes the causal contribution $\exp(X)$ and becomes
\[
\mathcal{V}_{X \to Y\|Z_1}
=
\mathcal{V}_{X \to Y\|Z_2}
=
Y-\exp(X)
=
0.5U+\varepsilon_Y.
\]
Since $Z_1$, $Z_2$, and $U$ are mutually independent, we have 
$\mathcal{V}_{X \to Y\|Z_1}\CI Z_2$ and 
$\mathcal{V}_{X \to Y\|Z_2}\CI Z_1$. Hence, $\mathbf{Z}'$ satisfies the CAT condition. 
This is visualized in Figure~\ref{fig:example_nonconsatnt}, where $Z_1$ and $\mathcal{V}_{X \to Y\|Z_2}$ show no apparent dependence in the valid-IV-set case.

In Figure~\ref{Fig-IV-set}(b), however, $Z_1$ directly affects $Y$ and violates the exclusion restriction, while $Z_2$ remains valid. 
Using $Z_2$ as the reference IV, we obtain
\[
\mathcal{V}_{X \to Y\|Z_2}
=
Y-\exp(X)
=
5Z_1+0.5U+\varepsilon_Y,
\]
which is dependent on $Z_1$. 
Thus, $\mathcal{V}_{X \to Y\|Z_2}\nCI Z_1$, and the CAT condition is violated. 
As shown in Figure~\ref{fig:example_nonconsatnt}, this invalid-IV-set case exhibits a visible dependence pattern between $Z_1$ and $\mathcal{V}_{X \to Y\|Z_2}$, induced by the direct effect $Z_1\to Y$.

\end{Example-set}

It is worth noting that the validity of a single IV, such as $Z_1$ in Figure~\ref{Fig-IV-set}(b), cannot in general be verified from the observed distribution of $(X,Y,Z_1)$ alone. 
Indeed, the IV assumptions impose no testable constraints on this joint distribution, and different causal structures may induce the same observational distribution (see Section~3 of~\cite{chu2001semi}, and Proposition~3 of ~\cite{guo2024testability}). 
This motivates the need for testable criteria that exploit relations among multiple candidate IVs.

Building on the above intuition, we next establish that the CAT condition provides a necessary condition for IV set validity under CAM-NCE.

\begin{Theorem}[\textbf{{Necessary Condition for IV Set Validity}}]\label{Theorem-Necessary-Condition-IV-set}
Let $X$, $Y$, and $\mathbf{Z}^\prime  \subseteq \mathbf{Z}$ be the treatment, outcome, and candidate IV set in CAM-NCE, respectively. Suppose that $X$, $Y$, and $\mathbf{Z}^\prime$ are statistically dependent, and that Assumption~\ref{Ass-completeness} holds. If the candidate set $\mathbf{Z}^\prime$ is a valid IV set \wrt $X \to Y$, then $\{X, Y||\mathbf{Z}^\prime\}$ satisfies the CAT condition. 
\end{Theorem}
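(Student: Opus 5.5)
The plan is to identify the auxiliary variable explicitly, show it equals the confounding-plus-noise term, and then read off the cross-independence from the validity assumptions.

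\textbf{Step 1: reduce the model under validity.} Suppose $\mathbf{Z}'$ is a valid IV set \wrt $X\to Y$. Every $Z_i\in\mathbf{Z}'$ satisfies $\mathcal{C}2$, so $g_Y$ cannot depend on any element of $\mathbf{Z}'$. Then, restricted to the variables relevant for $\mathbf{Z}'$, the outcome equation in \eqref{Eq-Main-Model} reads
\[
Y=f(X)+\varphi_Y(\mathbf{U})+\varepsilon_Y .
\]
By $\mathcal{C}3$ applied to the whole subset $\mathbf{Z}'$ (Definition~\ref{Def-IV-Set}), we have $\mathbf{Z}'\CI\mathbf{U}$. Since $\varepsilon_Y$ is exogenous noise independent of $(\mathbf{Z},\mathbf{U})$, this gives $\mathbf{Z}'\CI(\mathbf{U},\varepsilon_Y)$. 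Consequently $\mathbf{Z}'\CI \varphi_Y(\mathbf{U})+\varepsilon_Y$, and this term has mean zero because the variables are centered.

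\textbf{Step 2: pin down $h_i$ via completeness.} Fix $Z_i\in\mathbf{Z}'$ and let $h_i$ be any function with $\mathbb{E}[Y-h_i(X)\mid Z_i]=0$. The function $f$ itself is one such choice, since $\mathbb{E}[Y-f(X)\mid Z_i]=\mathbb{E}[\varphi_Y(\mathbf{U})+\varepsilon_Y]=0$ by Step 1. Subtracting the two moment equations gives
\[
\mathbb{E}[f(X)-h_i(X)\mid Z_i]=0 .
\]
Assumption~\ref{Ass-completeness} applies because $Z_i$ is a valid IV, provided $\mathbb{E}|f(X)-h_i(X)|<\infty$; this integrability I would take as implicit in the definition of the auxiliary variable. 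It then yields $h_i=f$ almost surely. This is the main step, and the only delicate point is that integrability condition. The dependence hypothesis on $X,Y,\mathbf{Z}'$ ensures $f$, hence $h_i$, is nonzero and relevance holds, so Definition~\ref{Def-auxiliary-variable} is not vacuous. It follows that
\[
\mathcal{V}_{X\to Y\|Z_i}=\varphi_Y(\mathbf{U})+\varepsilon_Y \quad\text{almost surely,}
\]
and the same expression is obtained for every $i$.

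\textbf{Step 3: cross-independence.} For distinct $Z_i,Z_j\in\mathbf{Z}'$, Step 2 gives $\mathcal{V}_{X\to Y\|Z_i}=\varphi_Y(\mathbf{U})+\varepsilon_Y$. This is a measurable function of $(\mathbf{U},\varepsilon_Y)$, which by Step 1 is independent of $Z_j$. Therefore $\mathcal{V}_{X\to Y\|Z_i}\CI Z_j$, and symmetrically $\mathcal{V}_{X\to Y\|Z_j}\CI Z_i$. This is exactly \eqref{Eq-CAT-Condition}, so $\{X,Y\|\mathbf{Z}'\}$ satisfies the CAT condition.
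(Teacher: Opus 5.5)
There is a genuine gap in Step~1, and it carries into Steps~2 and~3. Validity of $\mathbf{Z}'$ only gives $\mathbf{Z}'\cap\widetilde{\mathbf{Z}}=\emptyset$; it does not make $g_Y$ vanish. The theorem is about a subset $\mathbf{Z}'$ of a larger candidate set $\mathbf{Z}$, and $\mathbf{Z}\setminus\mathbf{Z}'$ may contain instruments with direct effects on $Y$. This is the situation of Figure~\ref{Fig-invalid-IV-set-A2}, and of Theorem~\ref{The-algorithm-correctness}, where this result is applied to valid subsets of a mixed candidate set. So the outcome equation stays $Y=f(X)+g_Y(\widetilde{\mathbf{Z}})+\varphi_Y(\mathbf{U})+\varepsilon_Y$, and ``restricting to the variables relevant for $\mathbf{Z}'$'' does not remove the $g_Y(\widetilde{\mathbf{Z}})$ term. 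The auxiliary variable is then $g_Y(\widetilde{\mathbf{Z}})+\varphi_Y(\mathbf{U})+\varepsilon_Y$, and both later steps need an independence you have not established. In Step~2, $f$ solves the moment restriction only if $\mathbb{E}[g_Y(\widetilde{\mathbf{Z}})\mid Z_i]$ is constant. In Step~3 you need $Z_j\CI\bigl(\widetilde{\mathbf{Z}},\mathbf{U},\varepsilon_Y\bigr)$, not merely $Z_j\CI(\mathbf{U},\varepsilon_Y)$. Condition $\mathcal{C}3$ concerns only $\mathbf{U}$, so it cannot deliver this. For instance, if $Z_j$ and some element of $\widetilde{\mathbf{Z}}$ shared a common cause, $\mathcal{V}_{X\to Y\|Z_i}$ would depend on $Z_j$ through $g_Y(\widetilde{\mathbf{Z}})$. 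Your argument must therefore invoke something that excludes such configurations.

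The paper's proof keeps $g_Y(\widetilde{\mathbf{Z}})$ throughout and closes this gap with the structural content of CAM-NCE rather than with $\mathcal{C}1$--$\mathcal{C}3$ alone. Each $Z_k\in\mathbf{Z}'$ is generated solely by its own noise, $Z_k=\varepsilon_{Z_k}$; the noise terms are mutually independent; and $Z_j\notin\widetilde{\mathbf{Z}}$. Hence $\varepsilon_{Z_j}$ is independent of $g_Y(\widetilde{\mathbf{Z}})+\varphi_Y(\mathbf{U})+\varepsilon_Y$ by Lemma~\ref{lemma_multi_function_indep}. If you replace your Step~1 independence with this one, Steps~2 and~3 go through as written. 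Your Step~2 is then more explicit than the paper's, which simply cites the nonparametric IV literature for $h_i=f$: you verify that $f$ solves the moment restriction and flag the integrability needed to invoke Assumption~\ref{Ass-completeness}.

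One small correction there: centering $X$ and $Y$ does not make $g_Y(\widetilde{\mathbf{Z}})+\varphi_Y(\mathbf{U})+\varepsilon_Y$ mean-zero, since its mean is $-\mathbb{E}[f(X)]$. Completeness therefore gives $h_i=f+c$ for a constant $c$. This only shifts the auxiliary variable and leaves the independence intact.
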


\begin{proof}[Proof sketch]
Since $\mathbf Z'$ is a valid IV set, each $Z_i\in\mathbf Z'$ is a valid IV. 
Under Assumption~\ref{Ass-completeness}, the conditional moment restriction 
$\mathbb E[Y-h_i(X)\mid Z_i]=0$ admits a unique solution.
By the standard identification result in nonparametric IV models~\citep{newey2003instrumental,newey2013nonparametric,d2011completeness,hu2018nonparametric}, this solution coincides with the true causal effect function $f(\cdot)$, \ie, $h_i(X)=f(X)$. Hence, for any pair of distinct IVs $\{Z_i,Z_j\}\subseteq\mathbf Z'$, the auxiliary variables satisfy 
\[
\mathcal V_{X\to Y\|Z_i}
=
\mathcal V_{X\to Y\|Z_j}
=
g_Y(\widetilde{\mathbf Z})+\varphi_Y(\mathbf U)+\varepsilon_Y. 
\]
Here, because $\mathbf Z'$ is a valid IV set, $Z_i$ and $Z_j$ do not directly affect $Y$ and are independent of $\mathbf U$. Moreover, under CAM-NCE, they are also independent of the remaining variables and noise terms involved in $g_Y(\widetilde{\mathbf Z})+\varepsilon_Y$.
Using the fact that measurable functions of disjoint subsets of mutually independent random variables are independent (see Theorem~\ref{The_function_indep} and Lemma~\ref{lemma_multi_function_indep} in Appendix), we have  
\[
    \mathcal V_{X\to Y\|Z_i}\CI Z_j,
    \quad
    \mathcal V_{X\to Y\|Z_j}\CI Z_i .
\]
Thus, every pair of distinct IVs in $\mathbf Z'$ satisfies the CAT condition. Consequently, $\{X, Y||\mathbf{Z}'\}$ satisfies the CAT condition. 
The full proof is provided in Appendix~\ref{proof-theroem-necessary}. 
\end{proof}

Theorem \ref{Theorem-Necessary-Condition-IV-set} states that if $\{X, Y||\mathbf{Z}^\prime\}$ violates the CAT condition, then the candidate IV set $\mathbf{Z}^\prime$ \wrt $X \to Y$ is invalid. Otherwise, $\mathbf{Z}^\prime$ may or may not be valid.

% \subsection{Theoretical Implications under the CAM-NCE Framework}
\subsection{Sufficient Condition for Characterizing IV Set Validity}

In the previous section, we have shown that the CAT condition is necessary for IV set validity. 
A natural question is \emph{whether, under Assumption~\ref{Ass-completeness}, every invalid IV set necessarily violates the CAT condition}. Unfortunately, the answer is negative. To clarify this issue, we present a counterexample below.

\begin{Example-set}\label{Example-same-coefficient-A2}(\textbf{Counterexample})
Let $Z_1=\varepsilon_{Z_1}$, $Z_2 = \varepsilon_{Z_2}$, $X = g_{X1}(Z_1) + g_{X2}(Z_2) + \varphi_X(\mathbf{U}) + \varepsilon_X$ and $Y = f(X) + g_{Y1}(Z_1) + g_{Y2}(Z_2) + \varphi_Y(\mathbf{U}) + \varepsilon_Y$, with $g_{Y1}(Z_1) = a \cdot g_{X1}(Z_1) + b_1$, $g_{Y2}(Z_2) = a \cdot g_{X2}(Z_2) + b_2$, where constant $a \neq 0$. In this model, both $Z_1$ and $Z_2$ directly affect $Y$ and hence violate the exclusion restriction. However, the same observational distribution can be generated by an alternative causal structure in which $\{Z_1,Z_2\}$ forms a valid IV set. 
Specifically, define $f^{\prime}(X) = f(X) + aX$, $Z_1^{\prime} = Z_1$, $Z_2^{\prime} = Z_2$, $X^{\prime} = X$, and $Y^{\prime} = f^{\prime}( X^{\prime}) + \varphi_Y(\mathbf{U}) + \varepsilon_{Y} - a\cdot \{\varphi_X(\mathbf{U}) + \varepsilon_X\} + b_1 + b_2$. 
Then, we have 
\begin{equation}\nonumber
\begin{aligned}
Y'
&= f(X)+aX+\varphi_Y(\mathbf U)+\varepsilon_Y
-a\{\varphi_X(\mathbf U)+\varepsilon_X\}
+b_1+b_2  \\
&= f(X)+a\{g_{X1}(Z_1)+g_{X2}(Z_2)\}
+\varphi_Y(\mathbf U)+\varepsilon_Y
+b_1+b_2  \\
&= f(X)+g_{Y1}(Z_1)+g_{Y2}(Z_2)
+\varphi_Y(\mathbf U)+\varepsilon_Y
=Y .
\end{aligned}
\end{equation}
Thus, $(X,Y,Z_1,Z_2)$ and $(X',Y',Z_1',Z_2')$ have the same observational distribution, although $\{Z_1,Z_2\}$ is invalid in the original structure but valid in the alternative structure. 
This shows that Assumption~\ref{Ass-completeness} alone cannot distinguish all invalid IV sets.
\end{Example-set}

The above example shows that Assumption~\ref{Ass-completeness} alone does not guarantee that the CAT condition can rule out all invalid IV sets. To obtain a sufficient condition, we introduce an additional non-degeneracy condition that excludes those cases and makes IV set validity testable under the CAM-NCE.

\begin{Assumption}[\textbf{Cross Distributional Non-degeneracy Condition}]\label{Ass-higher-order-condition} 
For any invalid candidate IV set $\mathbf{Z}'\subseteq\mathbf{Z}$ with $|\mathbf{Z}'|\geq 2$, there exists a pair of distinct candidate IVs $\{Z_i,Z_j\}\subseteq\mathbf{Z}'$ such that the joint densities $p(\mathcal{V}_{X\to Y\|Z_i},Z_j)$ and $p(\mathcal{V}_{X\to Y\|Z_j},Z_i)$ are twice continuously differentiable. Moreover, at least one of the following cross second-order partial derivatives,
$$\frac{\partial^2 \operatorname{log}p(\mathcal{V}_{X \to Y||Z_i},Z_j)}{\partial \mathcal{V}_{X \to Y||Z_i}\partial Z_j} \text{ or } \frac{\partial^2 \operatorname{log}p(\mathcal{V}_{X \to Y||Z_j},Z_i)}{\partial \mathcal{V}_{X \to Y||Z_j}\partial Z_i}$$ is non-zero on a set with non-zero Lebesgue measure, where $\mathcal{V}_{X \to Y||Z_i} = g_Y(\widetilde{\mathbf{Z}}) + \varphi_{Y}(\mathbf{U}) + \varepsilon_Y -f_{bias}^i (X)$, and $f_{bias}^i (X)=h_i(X)-f(X)$.
\end{Assumption}

Assumption \ref{Ass-higher-order-condition} is a natural condition that one expects to hold to identify the invalid IV set. 
Its intuition comes from the linear separability of the logarithm of the joint density of independent variables. 
Specifically, for a set of independent random variables with a twice-differentiable joint density, the Hessian matrix of the log-density is diagonal~\citep{lin1997factorizing}. 
Applying this property to $\mathcal{V}_{X \to Y\|Z_i}$ and $Z_j$, if 
$\mathcal{V}_{X \to Y\|Z_i}\CI Z_j$, then their joint density factorizes as 
$p(\mathcal{V}_{X \to Y \| Z_i}, Z_j) = p(\mathcal{V}_{X \to Y \| Z_i}) \cdot p(Z_j)$. 
Equivalently, the log-density is additively separable, and hence the corresponding off-diagonal Hessian entry vanishes: 
$\frac{\partial^2 \log p(\mathcal{V}_{X \to Y \| Z_i}, Z_j)}{\partial \mathcal{V}_{X \to Y \| Z_i} \, \partial Z_j} = 0$. 
The same argument applies after exchanging $Z_i$ and $Z_j$. 
Assumption~\ref{Ass-higher-order-condition} requires that, for an invalid IV set, such cross derivatives do not vanish for at least one cross pair, thereby making the violation detectable through the CAT condition.

\begin{Remark} 
Assumption~\ref{Ass-higher-order-condition}, referred to as the Cross Distributional Non-degeneracy Condition, can be viewed as a distributional analogue of the Algebraic Equation Condition in~\cite{guo2025data}. 
While Guo et al.'s condition is derived from an explicit algebraic expansion under constant-effect models, our condition is formulated through cross second-order derivatives of joint log-densities and does not require an invertible transformation between observed variables and latent noise terms. 
This makes the proposed condition applicable to the more general CAM-NCE framework.
\end{Remark}

The following example gives a concrete illustration of how a violation of IV validity can induce a nonzero cross second-order derivative in the joint density. For clarity, we present the calculation in a constant-effect setting, where closed-form expressions are available. The example is intended to illustrate the intuition behind the more general non-constant-effect case.
 
\begin{Example-set}[\textbf{Illustration of the Cross Distributional Non-degeneracy Condition}]
\label{ex:invalid-IV-nonzero-derivative}
Consider the following data-generating process: $U=\varepsilon_U$, $Z_1=\varepsilon_{Z_1}$, $Z_2=\varepsilon_{Z_2}$, $X=Z_1+Z_2+Z_1^2+Z_2^2+U+\varepsilon_X$, and $Y=X+2Z_1Z_2+U+\varepsilon_Y$, where $\varepsilon_{Z_1},\varepsilon_{Z_2},\varepsilon_U, \varepsilon_X,\varepsilon_Y
\stackrel{\mathrm{ind}}{\sim} N(0,1)$. Here, the candidate IVs \(Z_1\) and \(Z_2\) are invalid because they directly affect \(Y\) through the term \(2Z_1Z_2\), thereby violating the exclusion restriction. 

For $Z_1$, by the definition of the auxiliary variable, $\mathcal{V}_1=Y-\beta_1 X=2Z_1Z_2+U+\varepsilon_Y$, where $\beta_1= \frac{\operatorname{Cov}(Y,Z_1)}{\operatorname{Cov}(X,Z_1)} = 1$. The joint density of \((\mathcal{V}_1,Z_2)\) can be factorized as $p(\mathcal{V}_1,Z_2)=p(\mathcal{V}_1\mid Z_2)p(Z_2)$, where \(Z_2\sim N(0,1)\). 
Conditional on $Z_2=z_2$, since $Z_1$, $U$, and $\varepsilon_Y$ are mutually independent standard normal variables, we have $\mathcal{V}_1\mid Z_2=z_2\sim N(0,4z_2^2+2)$. Therefore, $\log p(\mathcal{V}_1,z_2)=\log p(\mathcal{V}_1|z_2)p(z_2)=-\log(2\pi)-\frac{z_2^2}{2}-\frac{1}{2}\log(4z_2^2+2)-\frac{\mathcal{V}_1^2}{2(4z_2^2+2)}$. 
Then, $\frac{\partial^2 \log p(\mathcal{V}_1,z_2)}
    {\partial \mathcal{V}_1\,\partial z_2}
    =
    \frac{8\mathcal{V}_1z_2}{(4z_2^2+2)^2}$. Since this quantity is nonzero whenever \(\mathcal{V}_1z_2\neq 0\), the cross second-order partial derivative is nonzero on a set with nonzero Lebesgue measure. Therefore, the candidate IV set $\{Z_1,Z_2\}$ satisfies the cross distributional non-degeneracy condition (Assumption \ref{Ass-higher-order-condition}). 
\end{Example-set}

To better understand the Cross Distributional Non-degeneracy Condition 
(Assumption~\ref{Ass-higher-order-condition}), we next provide a sufficient condition under which this assumption fails.

\begin{Proposition}[\textbf{Sufficient Condition for Violation of Assumption \ref{Ass-higher-order-condition}}]
\label{Pro-violate-assumption2}
Let $X$, $Y$, and ${\mathbf{Z}}'\subseteq\mathbf{Z}$ be the treatment, outcome, and a candidate IV set in CAM-NCE, respectively. 
Suppose that each $Z_i\in{\mathbf{Z}}'$ is relevant and exogenous but violates only the exclusion restriction. 
Assume further that, for $\mathbf{Z}'$, its direct causal effect on $Y$ is an affine transformation of its direct causal effect on $X$, namely, $g_{Y}({\mathbf{Z}}')=a\cdot g_{X}({\mathbf{Z}}')+b$, where $a$ is a nonzero constant, and $b$ is a constant. 
Then Assumption~\ref{Ass-higher-order-condition} fails for the invalid IV set ${\mathbf{Z}}'$. 
This special case is illustrated graphically in Figure~\ref{Fig-invalid-IV-set-A2}. 
\end{Proposition}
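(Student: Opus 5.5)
The idea is to generalize the reparametrization in Example~\ref{Example-same-coefficient-A2}. We exhibit, for every pair $\{Z_i,Z_j\}\subseteq\mathbf Z'$, a particular admissible choice of $h_i$ in Definition~\ref{Def-auxiliary-variable} for which the auxiliary variable is independent of every other candidate in $\mathbf Z'$. Then both cross log-densities are additively separable, and both cross second-order derivatives in Assumption~\ref{Ass-higher-order-condition} vanish identically. Since this happens for every pair, no pair can witness the assumption, so it fails for $\mathbf Z'$.

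\textbf{Step 1 (candidate solution).} Set $h^\ast(X)=f(X)+aX+b$ and compute
\[
Y-h^\ast(X)=a\,g_X(\mathbf Z')+b+\varphi_Y(\mathbf U)+\varepsilon_Y-aX-b
=\varphi_Y(\mathbf U)+\varepsilon_Y-a\{\varphi_X(\mathbf U)+\varepsilon_X\}.
\]
Here we use $X=g_X(\mathbf Z')+\varphi_X(\mathbf U)+\varepsilon_X$ and $\widetilde{\mathbf Z}=\mathbf Z'$, since every element violates only the exclusion restriction. If $\mathbf Z\neq\mathbf Z'$, the terms of $g_X,g_Y$ involving variables outside $\mathbf Z'$ are absorbed into the noise: I will assume they are valid-type contributions that cancel in the same way, or else restrict to $\mathbf Z=\mathbf Z'$ as in Figure~\ref{Fig-invalid-IV-set-A2}.

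\textbf{Step 2 (moment condition and uniqueness).} Each $Z_i$ is exogenous, so $Z_i\CI \mathbf U$ and $Z_i\CI(\varepsilon_X,\varepsilon_Y)$. By mean-centering, $\mathbb E[Y-h^\ast(X)\mid Z_i]=\mathbb E[\varphi_Y(\mathbf U)+\varepsilon_Y-a\varphi_X(\mathbf U)-a\varepsilon_X]=0$. Hence $h^\ast$ is a valid choice of $h_i$, and it is nonzero because $a\neq0$. To argue that this is \emph{the} auxiliary variable, I would invoke completeness (Assumption~\ref{Ass-completeness}) relative to $Z_i$: any two solutions differ by some $\psi(X)$ with $\mathbb E[\psi(X)\mid Z_i]=0$, so they agree almost surely. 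This gives $f^i_{bias}(X)=aX+b$ for every $i$.

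\textbf{Step 3 (independence and vanishing derivatives).} The common auxiliary variable $\mathcal V=\varphi_Y(\mathbf U)+\varepsilon_Y-a\{\varphi_X(\mathbf U)+\varepsilon_X\}$ is a measurable function of $(\mathbf U,\varepsilon_X,\varepsilon_Y)$. That vector is independent of each $Z_j$, so Theorem~\ref{The_function_indep} and Lemma~\ref{lemma_multi_function_indep} give $\mathcal V_{X\to Y\|Z_i}\CI Z_j$ for all $i\neq j$. Consequently, wherever the densities exist and are twice differentiable,
\[
\log p(\mathcal V_{X\to Y\|Z_i},Z_j)=\log p(\mathcal V)+\log p(Z_j),
\]
so the mixed partial derivative is identically zero, and likewise with $i$ and $j$ swapped. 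If instead the densities are not twice continuously differentiable, the first clause of the assumption already fails for that pair. Either way, no pair in $\mathbf Z'$ meets the requirements of Assumption~\ref{Ass-higher-order-condition}, while $\mathbf Z'$ is invalid, so the assumption fails.

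\textbf{Main obstacle.} The delicate step is Step 2, the use of completeness. Assumption~\ref{Ass-completeness} is stated only for valid IVs, whereas here each $Z_i$ is invalid. I would first argue that completeness concerns only the conditional law of $X$ given $Z_i$. That law is unchanged under the observationally equivalent model of Example~\ref{Example-same-coefficient-A2}, in which $Z_i$ is valid, so the assumption applies there and pins down $h_i=h^\ast$. If this transfer is deemed too loose, a fallback is to note that the proposition only needs \emph{some} admissible $h_i$ for which the cross derivatives vanish. That gives the conclusion under the paper's existential definition of the auxiliary variable.
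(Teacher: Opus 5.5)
Your proof is correct and is essentially the paper's argument. The paper writes the data as an observationally equivalent model with $f'(X)=f(X)+aX$ in which $\mathbf Z'$ is valid, applies Theorem~\ref{Theorem-Necessary-Condition-IV-set} there, and transfers the vanishing cross derivatives back through equality of distributions. That amounts to your direct computation with $h^\ast$, and your explicit uniqueness step via completeness and your treatment of the differentiability clause spell out what the paper leaves implicit.

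One thing to fix is Step~1 when $\mathbf Z\neq\mathbf Z'$. Write the parts of $g_X,g_Y$ that depend on variables outside $\mathbf Z'$ as $\phi_X(\mathbf Z\setminus\mathbf Z')$ and $\phi_Y(\widetilde{\mathbf Z}\setminus\mathbf Z')$. These do not cancel; they remain in the auxiliary variable as $\phi_Y(\widetilde{\mathbf Z}\setminus\mathbf Z')-a\,\phi_X(\mathbf Z\setminus\mathbf Z')$. This is harmless because each $Z_j\in\mathbf Z'$ is independent of $(\mathbf Z\setminus\mathbf Z',\mathbf U,\varepsilon_X,\varepsilon_Y)$, so Step~3 goes through verbatim, which is how the paper handles it. There is therefore no need to restrict to $\mathbf Z=\mathbf Z'$; in fact Figure~\ref{Fig-invalid-IV-set-A2} has $\mathbf Z\setminus\mathbf Z'$ nonempty.
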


\begin{figure}[h]
\centering
\begin{tikzpicture}[
    scale=0.9,
    line width=0.8pt,
    inner sep=0.4mm,
    shorten >=.1pt,
    shorten <=.1pt
]
\tikzstyle{every node}+=[inner sep=0pt]

% nodes
\draw (3.5,1.5) node(U)
    [circle, draw=black, fill=gray!15, minimum size=0.6cm]
    {\footnotesize $\mathbf{U}$};

\draw (-0.3,0) node(Z)
    [circle, draw=black, fill=white, minimum size=0.6cm]
    {\footnotesize ${\mathbf Z}'$};

\node[
    rectangle,
    rounded corners=2pt,
    draw=black!70,
    fill=white,
    text width=1.0cm,
    align=center,
    minimum height=5mm
] (S) at (1.25,1.5)
{\footnotesize $\mathbf Z\setminus {\mathbf Z}'$};

\draw (2.3,0) node(D)
    [circle, draw=black, fill=white, minimum size=0.55cm]
    {\footnotesize $X$};

\draw (4.6,0) node(Y)
    [circle, draw=black, fill=white, minimum size=0.55cm]
    {\footnotesize $Y$};

% arrows
\draw[-arcsq] (U) -- (D);
\draw[-arcsq] (U) -- (Y);

\draw[-arcsq]
    (Z) -- (D)
    node[midway, above]
    {$g_X({\mathbf Z}')$};

\draw[-arcsq] (S) -- (D);

\draw[-arcsq] (D) -- (Y);

\draw[-arcsq, red!85!black, line width=1pt]
    (Z) edge[bend right=33]
    node[pos=0.5, above, text=red!85!black]
    {$g_Y({\mathbf Z}')$}
    (Y);

\end{tikzpicture}

\caption{Graphical illustration of an invalid IV set $\mathbf{Z}' \subseteq \mathbf{Z}$, where every IV in $\mathbf{Z}'$ violates only the exclusion restriction condition ($\mathcal{C}2$), while all variables in $\mathbf{Z}\setminus\mathbf{Z}'$ are valid IVs. For the invalid IV set $\mathbf{Z}'$, the direct effects satisfy $g_{Y}({\mathbf{Z}}')=a\cdot g_{X}({\mathbf{Z}}')+b$, where $a\neq0$.}
\label{Fig-invalid-IV-set-A2}
\end{figure}

\begin{proof}
See Appendix \ref{proof-Coro-same-constant} for its proof. 
\end{proof}

As shown in Proposition~\ref{Pro-violate-assumption2}, when Assumption~\ref{Ass-higher-order-condition} is violated, certain IV sets may fail to satisfy the exclusion restriction, leading to non-identifiability of causal effects. We will now show that under additional Assumption~\ref{Ass-higher-order-condition}, IV sets can be uniquely identified.

\begin{Proposition}[\textbf{Sufficient Condition for IV Set Validity}]\label{Pro-Test-Higher-CAM-NCE}
Let $X$, $Y$, and $\mathbf{Z}^\prime \subseteq \mathbf{Z}$ be the treatment, outcome, and candidate IV set in CAM-NCE, respectively. Suppose that $X$, $Y$, and $\mathbf{Z}'$ are statistically dependent, and that Assumptions \ref{Ass-completeness} and \ref{Ass-higher-order-condition} hold. If the candidate IV set $\mathbf{Z}^\prime$ is invalid, then $\{X, Y||\mathbf{Z}^\prime\}$ violates the CAT condition. 
\end{Proposition}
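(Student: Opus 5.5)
The natural route is a contrapositive argument built on Assumption~\ref{Ass-higher-order-condition}. The assumption is phrased so that, for an invalid set, some cross second-order derivative of a log joint density is nonzero. The fact recalled after the assumption, that independence forces additive separability of the log-density, then turns that nonzero derivative into a dependence that breaks the CAT condition.

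First I handle the degenerate size. If $|\mathbf{Z}'|=1$, Definition~\ref{Definition-CAT-condition-set} quantifies over an empty family of pairs. Assumption~\ref{Ass-higher-order-condition} applies only to $|\mathbf{Z}'|\ge 2$, so I restrict to that case; this is implicitly the scope of the statement. Now let $\mathbf{Z}'$ be invalid with $|\mathbf{Z}'|\ge 2$. Statistical dependence of $X$, $Y$, $\mathbf{Z}'$ guarantees that the conditional moment problems defining the auxiliary variables are nondegenerate. Assuming, as the paper does, that solutions exist, each $\mathcal{V}_{X\to Y\|Z_i}=Y-h_i(X)$ is well defined. Substituting the CAM-NCE model \eqref{Eq-Main-Model} gives
\[
\mathcal{V}_{X\to Y\|Z_i}=g_Y(\widetilde{\mathbf Z})+\varphi_Y(\mathbf U)+\varepsilon_Y-f^i_{bias}(X),
\]
which is exactly the representation used in Assumption~\ref{Ass-higher-order-condition}.

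Next I invoke Assumption~\ref{Ass-higher-order-condition}. It gives a pair $\{Z_i,Z_j\}\subseteq\mathbf Z'$ with $C^2$ joint densities, and, without loss of generality, $\partial^2\log p(\mathcal V_i,Z_j)/\partial\mathcal V_i\partial Z_j\neq0$ on a set $A$ of positive Lebesgue measure. I then argue by contradiction. Suppose $\mathcal V_i\CI Z_j$. Then $p(v,z)=p_{\mathcal V_i}(v)\,p_{Z_j}(z)$ almost everywhere, and by continuity of the $C^2$ densities this holds everywhere on the support, which is open where $p>0$. Hence
\[
\log p(v,z)=\log p_{\mathcal V_i}(v)+\log p_{Z_j}(z)
\]
there, and the mixed partial vanishes identically, following the diagonal-Hessian fact of \citep{lin1997factorizing}. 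This contradicts the derivative being nonzero on $A$. So $\mathcal V_{X\to Y\|Z_i}\nCI Z_j$, the cross-independence in \eqref{Eq-CAT-Condition} fails for this pair, and $\{X,Y\|\mathbf Z'\}$ violates the CAT condition.

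The main obstacle is the regularity step. I need to pass from "independent" to "the log-density is additively separable with $C^2$ marginals" on the region where the derivative is evaluated. The plan is to note three things. The marginals of a $C^2$ product density are $C^2$ wherever the other factor is positive. The logarithm is defined only where $p>0$, and $A$ necessarily lies there. A measure-zero discrepancy cannot affect continuous functions. Completeness (Assumption~\ref{Ass-completeness}) enters only to guarantee that the $h_i$, and hence the $\mathcal V_i$, are well defined when $Z_i$ happens to be valid.
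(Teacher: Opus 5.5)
Your proof is correct and follows the paper's argument. Both take the pair supplied by Assumption~\ref{Ass-higher-order-condition}, suppose the corresponding CAT independence holds, factorize the joint density so that the mixed partial of the log-density vanishes, and reach a contradiction. Beyond the paper, you also note that the statement implicitly needs $|\mathbf{Z}'|\ge 2$, since CAT is vacuous for singletons, and you justify the passage from a.e.\ factorization to pointwise additive separability on $\{p>0\}$; both are points the paper's proof glosses over.
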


\begin{proof}
See Appendix~\ref{prooof-Pro-Test-Higher-CAM-NCE} for its proof.
\end{proof}

Proposition~\ref{Pro-Test-Higher-CAM-NCE} shows that, under Assumptions~\ref{Ass-completeness} and~\ref{Ass-higher-order-condition}, every invalid IV set violates the CAT condition. 
Combining this result with Theorem~\ref{Theorem-Necessary-Condition-IV-set}, we obtain the following necessary and sufficient characterization of IV set validity under CAM-NCE.

\begin{Theorem}[\textbf{Necessary and Sufficient Condition for IV Set Validity}]\label{Theorem-necessary-sufficient-condition-CAM-NCE}
Let $X$, $Y$, and $\mathbf{Z}^\prime \subseteq \mathbf{Z}$ be the treatment, outcome, and candidate IV set in CAM-NCE, respectively. Suppose that $X$, $Y$, and $\mathbf{Z}'$ are statistically dependent, and that Assumptions \ref{Ass-completeness} and \ref{Ass-higher-order-condition} hold. The candidate IV set $\mathbf{Z}^\prime$ is a valid IV set relative to $X \to Y$ if and only if $\{X, Y||\mathbf{Z}^\prime\}$ satisfies the CAT condition. 
\end{Theorem}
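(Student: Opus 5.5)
The plan is to obtain Theorem~\ref{Theorem-necessary-sufficient-condition-CAM-NCE} by combining the two directions already established. The hypotheses of the theorem are the union of those in Theorem~\ref{Theorem-Necessary-Condition-IV-set} and Proposition~\ref{Pro-Test-Higher-CAM-NCE}: the CAM-NCE model, statistical dependence among $X$, $Y$, and $\mathbf{Z}'$, and Assumptions~\ref{Ass-completeness} and~\ref{Ass-higher-order-condition}.

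For the ``only if'' direction, suppose $\mathbf{Z}'$ is a valid IV set. Theorem~\ref{Theorem-Necessary-Condition-IV-set} needs only Assumption~\ref{Ass-completeness} and the dependence hypothesis, both of which are in force, so $\{X,Y\|\mathbf{Z}'\}$ satisfies the CAT condition. Assumption~\ref{Ass-higher-order-condition} is not needed here, since it constrains only invalid sets.

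For the ``if'' direction, argue by contraposition. Suppose $\{X,Y\|\mathbf{Z}'\}$ satisfies the CAT condition but $\mathbf{Z}'$ is invalid. Proposition~\ref{Pro-Test-Higher-CAM-NCE} then says the CAT condition is violated, which is a contradiction. Hence $\mathbf{Z}'$ must be valid.

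The only point needing care is the degenerate case $|\mathbf{Z}'|=1$. There the CAT condition holds vacuously, because there are no distinct pairs, and Assumption~\ref{Ass-higher-order-condition} applies only when $|\mathbf{Z}'|\ge 2$. I will therefore check how Proposition~\ref{Pro-Test-Higher-CAM-NCE} treats singletons. Definition~\ref{Def-IV-Set} and the paper's remark that single-IV validity is untestable both suggest the statement is meant for $|\mathbf{Z}'|\ge 2$. I would make this restriction explicit, or inherit it directly from the proposition, rather than prove anything new. Beyond this bookkeeping I expect no real obstacle: all the substantive work lies in the two cited results.
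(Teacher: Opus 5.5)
Your proposal matches the paper's proof: the paper also gets the ``only if'' direction from Theorem~\ref{Theorem-Necessary-Condition-IV-set} and the ``if'' direction from Proposition~\ref{Pro-Test-Higher-CAM-NCE}, and simply combines them. Your remark about $|\mathbf{Z}'|=1$ is a valid point that the paper's proof does not address: an invalid singleton satisfies the CAT condition vacuously, and Assumption~\ref{Ass-higher-order-condition} says nothing about it, so making the restriction $|\mathbf{Z}'|\ge 2$ explicit is the right fix.
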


\begin{proof}
See Appendix~\ref{proof-Theorem-necessary-sufficient-condition-CAM-NCE} for its proof.
\end{proof}

The theoretical implications of the CAT condition are summarized in Figure~\ref{fig:cat_theory_summary}.

\begin{figure}[h]
\centering
\begin{tikzpicture}[
    node distance=4.5cm,
    box/.style={
        rectangle,
        rounded corners,
        draw=black,
        line width=0.7pt,
        minimum width=3.55cm,
        minimum height=1.0cm,
        align=center,
        font=\small
    },
    assump/.style={
        rectangle,
        rounded corners,
        draw=black,
        dashed,
        line width=0.7pt,
        minimum width=5.8cm,
        minimum height=1.1cm,
        align=center,
        font=\small
    },
    arrow/.style={
        -{Stealth[length=3.5mm]},
        line width=0.9pt
    },
    botharrow/.style={
        {Stealth[length=3.0mm]}-{Stealth[length=3.0mm]},
        line width=0.9pt
    },
    label/.style={
        font=\small\bfseries,
        align=center,
        yshift=2pt % 稍微向上抬高避免贴着箭头
    }
]

% Title
\node[font=\large\bfseries] at (0,2.8) {Theoretical Role of the CAT Condition};

% --- 第一组 (First Row) ---
\node[assump] (a1) at (0, 1.5) {Completeness Condition\\(Assumption~1)};
\node[box] (valid1) at (-5.0, 0) {Valid IV Set};
\node[box] (cat1) at (5.0, 0) {CAT Condition Holds};

\draw[arrow] (valid1) -- node[above, label] {Necessary Condition} (cat1);

% --- 第二组 (Second Row) ---
\node[assump, minimum width=10.0cm, minimum height=1.3cm] (a13) at (0, -2.0) {
    Completeness Condition (Assumption~1) \\ $+$\\
    Cross Distributional Non-degeneracy Condition (Assumption~2)
};

\node[box] (valid2) at (-5.0, -3.8) {Valid IV Set};
\node[box] (cat2) at (5.0, -3.8) {CAT Condition Holds};

\draw[botharrow] (valid2) -- node[above, label] {Necessary \& Sufficient Condition} (cat2);

% --- 背景图层 (Background layers) ---
\begin{scope}[on background layer]
    % one
    \node[
        rounded corners,
        fill=blue!5,
        draw=blue!30,
        line width=0.5pt,
        minimum width=14.0cm,
        minimum height=2.9cm,
        inner sep=0pt
    ] at (0.0, 0.7) {};

    % two
    \node[
        rounded corners,
        fill=orange!5,
        draw=orange!30,
        line width=0.5pt,
        minimum width=14.0cm,
        minimum height=3.3cm,
        inner sep=0pt
    ] at (0.0, -2.8) {};
\end{scope}

\end{tikzpicture}
\caption{Flowchart of the theoretical implications of the CAT condition.}
\label{fig:cat_theory_summary}
\end{figure}

As shown in Figure~\ref{fig:cat_theory_summary}, the CAT condition is necessary for IV set validity under the completeness condition, and becomes both necessary and sufficient when the Cross Distributional Non-degeneracy Condition is additionally imposed.

\section{Practical Testing Algorithm from Data}\label{Sec-CAT Algorithm}
In this section, we first extend the CAT condition to settings with baseline covariates, where IV validity is assessed after adjusting for observed covariates. 
We then develop a finite-sample testing algorithm for assessing the validity of candidate IV sets from data. 
\vspace{-2mm}

\subsection{CAT Condition with Covariates}\label{Sec-CAT-Condition-Covariates}
In practice, observed covariates such as age, gender, and other background variables may affect the treatment, outcome, and candidate IVs. It is therefore necessary to assess IV validity after adjusting for such covariates.

Under CAM-NCE with covariates, the data-generating process is given by 
\begin{equation}\label{Eq-Main-Model-covariates}
    \begin{aligned}
        X &= g_X(\mathbf{Z}) + t_X(\mathbf{W}) + {\varphi_{X}(\mathbf{U}) + \varepsilon_{X}},\\
        Y &= f(X) + t_Y(\mathbf{W}) + g_Y(\widetilde{\mathbf{Z}}) + {\varphi_{Y}(\mathbf{U}) + \varepsilon_{Y}}, 
    \end{aligned}
\end{equation} 
where $\mathbf{W}$ denotes the covariates.
Below, we show that the additive structure in Equation~\eqref{Eq-Main-Model-covariates} allows the CAT condition in Definition~\ref{Definition-CAT-condition-set} to be extended to covariate settings through regression adjustment.

\begin{Definition}[\textbf{CAT Condition with Covariates}]\label{Definition-CAT-condition-set-covariates} 
Let $X$, $Y$, $\mathbf{W}$, and $\mathbf{Z}^\prime \subseteq \mathbf{Z}$ denote the treatment, outcome, covariates, and candidate IV set, respectively. Furthermore, let $\mathcal{X}$, $\mathcal{Y}$, and $\boldsymbol{\mathcal{Z}}'$ denote the residuals obtained by regressing $X$, $Y$, and $\mathbf{Z}'$ on $\mathbf{W}$, respectively (e.g., for each IV, $\mathcal{Z}_i \coloneqq Z_i - \mathbb{E}[Z_i|\mathbf{W}]$).  
We say that $\{X, Y||(\mathbf{Z}^\prime, \mathbf{W})\}$ satisfies the CAT condition if and only if for every pair of distinct IVs $\{Z_i, Z_j\} \subseteq \mathbf{Z}^\prime$, the following cross-independence relationships hold: 
\begin{equation}
\begin{aligned}
\mathcal{V}_{\mathcal{X} \to \mathcal{Y}||\mathcal{Z}_i} \CI \mathcal{Z}_j, \text{ and  } \mathcal{V}_{\mathcal{X} \to \mathcal{Y}||\mathcal{Z}_j} \CI \mathcal{Z}_i.
\end{aligned}
\end{equation}
\end{Definition}

Based on Definition \ref{Definition-CAT-condition-set-covariates} and Theorem \ref{Theorem-Necessary-Condition-IV-set}, we derive the following necessary condition for IV set validity in the presence of covariates $\mathbf{W}$.

\begin{Corollary}[\textbf{{Necessary Condition for IV Set Validity with Covariates}}]\label{Corollary-Necessary-Condition-IV-set-covariates}
Let $X$, $Y$, $\mathbf{W}$, and $\mathbf{Z}' \subseteq \mathbf{Z}$ be the treatment, outcome, covariates, and candidate IV set in CAM-NCE, respectively. Suppose that $X$, $Y$, $\mathbf{W}$, and $\mathbf{Z}'$ are statistically dependent, and that Assumption \ref{Ass-completeness} holds for the residualized variables. If the candidate IV set $\mathbf{Z}'$ is a valid IV set \wrt $X \to Y$ given $\mathbf{W}$, then $\{X, Y||(\mathbf{Z}', \mathbf{W})\}$ satisfies the CAT condition.
\end{Corollary}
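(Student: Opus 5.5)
The plan is to reduce the covariate setting to the covariate-free setting of Theorem~\ref{Theorem-Necessary-Condition-IV-set} by residualizing on $\mathbf{W}$, and then to rerun that theorem's argument on the residualized variables. Throughout, ``valid given $\mathbf{W}$'' will be read as: each $Z_i\in\mathbf{Z}'$ satisfies $\mathcal{C}1$--$\mathcal{C}3$ after conditioning on $\mathbf{W}$, so $g_Y$ does not depend on $\mathbf{Z}'$, and $(\mathbf{Z}',\mathbf{W})\CI(\mathbf{U},\varepsilon_X,\varepsilon_Y)$ in the sense needed below.

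\textbf{Step 1: residualize the model.} Subtracting conditional expectations given $\mathbf{W}$ from Equation~\eqref{Eq-Main-Model-covariates}, the terms $t_X(\mathbf{W})$ and $t_Y(\mathbf{W})$ cancel. This gives
\[
\mathcal{X}=\bigl\{g_X(\mathbf{Z})-\mathbb{E}[g_X(\mathbf{Z})\mid\mathbf{W}]\bigr\}+\bigl\{\varphi_X(\mathbf{U})+\varepsilon_X-\mathbb{E}[\varphi_X(\mathbf{U})\mid \mathbf{W}]\bigr\},
\]
and similarly
\[
\mathcal{Y}=f(X)-\mathbb{E}[f(X)\mid\mathbf{W}]+\bigl(\text{terms in }\widetilde{\mathbf{Z}},\mathbf{U},\varepsilon_Y\bigr).
\]
Because $f$ is nonlinear, $\mathcal{Y}$ is not simply a function of $\mathcal{X}$. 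I would therefore treat $\mathbf{W}$ as a fixed conditioning variable, and use that $\mathbb{E}[Y-h(X)\mid Z_i,\mathbf{W}]=0$ is equivalent to the residualized moment condition that defines $\mathcal{V}_{\mathcal{X}\to\mathcal{Y}\|\mathcal{Z}_i}$. In the residualized model, the role of $g_X(\mathbf{Z})$ is played by the component of $\mathcal{X}$ driven by $\boldsymbol{\mathcal{Z}}'$, and the roles of $\varphi_*(\mathbf{U})+\varepsilon_*$ are played by the residualized confounder and noise terms.

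\textbf{Step 2: check the hypotheses of Theorem~\ref{Theorem-Necessary-Condition-IV-set}.} Validity of $\mathbf{Z}'$ given $\mathbf{W}$ means two things. First, no $Z_i\in\mathbf{Z}'$ enters $g_Y$, so the residualized outcome has no direct $\mathcal{Z}_i$ term. Second, $\mathcal{Z}_i$, which is a function of $(Z_i,\mathbf{W})$, is independent of the residualized confounder/noise part, which is a function of $(\mathbf{U},\varepsilon_Y,\mathbf{W})$, after conditioning on $\mathbf{W}$. Relevance carries over from the dependence hypothesis. Completeness for the residualized variables is assumed directly in the statement. By the identification step in the proof of Theorem~\ref{Theorem-Necessary-Condition-IV-set}, each residualized moment equation then has the unique solution given by the true (residualized) structural function. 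Hence
\[
\mathcal{V}_{\mathcal{X}\to\mathcal{Y}\|\mathcal{Z}_i}=\mathcal{V}_{\mathcal{X}\to\mathcal{Y}\|\mathcal{Z}_j}=\text{(residualized }g_Y(\widetilde{\mathbf{Z}})+\varphi_Y(\mathbf{U})+\varepsilon_Y),
\]
which involves neither $Z_i$ nor $Z_j$.

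\textbf{Step 3: independence.} Apply Theorem~\ref{The_function_indep} and Lemma~\ref{lemma_multi_function_indep}: measurable functions of disjoint blocks of mutually independent variables are independent. This yields $\mathcal{V}_{\mathcal{X}\to\mathcal{Y}\|\mathcal{Z}_i}\CI\mathcal{Z}_j$ and, symmetrically, $\mathcal{V}_{\mathcal{X}\to\mathcal{Y}\|\mathcal{Z}_j}\CI\mathcal{Z}_i$ for every pair, which is the CAT condition of Definition~\ref{Definition-CAT-condition-set-covariates}.

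\textbf{Main obstacle.} The hard part is that both $\mathcal{V}$ and $\mathcal{Z}_j$ depend on the shared variable $\mathbf{W}$, so the disjoint-blocks argument does not apply verbatim. My first attempt would be to assume, as in the additive regression-adjustment framework, that $\mathbf{W}$ enters $\mathbf{Z}'$ additively, i.e.\ $Z_j=m_j(\mathbf{W})+\eta_j$, so that $\mathcal{Z}_j=\eta_j$ does not depend on $\mathbf{W}$. The same additivity must also make the residualized outcome free of $\mathbf{W}$. This requires the $\mathbf{W}$-dependence of $f(X)$ to be absorbed; if that fails, I would instead prove conditional independence given $\mathbf{W}$ and then argue that, with $\mathbf{W}$ independent of the remaining blocks, this upgrades to marginal independence of the residuals.
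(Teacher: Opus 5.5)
Your route is the paper's route: residualize on $\mathbf{W}$, claim that the residualized system has the covariate-free CAM-NCE form, and invoke Theorem~\ref{Theorem-Necessary-Condition-IV-set}. The paper's proof does the same and simply asserts that the residualized variables ``follow the same structural form as the covariate-free CAM-NCE model''. The difficulty you flag as the main obstacle is exactly where that assertion breaks, and neither of your repairs closes it.

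The failing steps are two. The first is the claimed equivalence in Step~1 between $\mathbb{E}[Y-h(X)\mid Z_i,\mathbf{W}]=0$ and $\mathbb{E}[\mathcal{Y}-h(\mathcal{X})\mid \mathcal{Z}_i]=0$. The second is the resulting claim in Step~2 that the residualized moment equation is solved by a ``true residualized structural function''. Write $X=\mathcal{X}+\mathbb{E}[X\mid\mathbf{W}]$. The residualized outcome then contains $f(\mathcal{X}+\mathbb{E}[X\mid\mathbf{W}])-\mathbb{E}[f(X)\mid\mathbf{W}]$. For nonlinear $f$ and non-constant $\mathbb{E}[X\mid\mathbf{W}]$, this is in general not a function of $\mathcal{X}$ plus terms free of $\boldsymbol{\mathcal{Z}}'$. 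Definition~\ref{Definition-CAT-condition-set-covariates} only subtracts some $h_i(\mathcal{X})$, so the $\mathcal{X}$-by-$\mathbf{W}$ interaction stays in the auxiliary variable and carries $\mathcal{Z}_j$ with it.

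Your first repair ($Z_j=m_j(\mathbf{W})+\eta_j$) only makes $\mathcal{Z}_j$ free of $\mathbf{W}$ and does nothing about $f$. In your second repair the upgrade step is sound: $\mathcal{V}\CI\mathcal{Z}_j\mid\mathbf{W}$ together with $\mathcal{Z}_j\CI\mathbf{W}$ does give $\mathcal{V}\CI\mathcal{Z}_j$. But the conditional independence itself fails, for the same reason.

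In fact the corollary is false as stated when $f$ is nonlinear. Let $W,Z_1,Z_2,U,\varepsilon_X,\varepsilon_Y$ be i.i.d.\ $N(0,1)$, with $X=Z_1+Z_2+W+U+\varepsilon_X$ and $Y=X^2+U+\varepsilon_Y$. Then:
\begin{itemize}
\item $\{Z_1,Z_2\}$ is a valid IV set given $W$, with $\mathcal{Z}_i=Z_i$ and $\mathcal{X}=Z_1+Z_2+U+\varepsilon_X$.
\item $\mathcal{Y}=Y-\mathbb{E}[Y\mid W]=\mathcal{X}^2+2W\mathcal{X}-4+U+\varepsilon_Y$.
\item Because $\mathcal{X}\mid\mathcal{Z}_1\sim N(\mathcal{Z}_1,3)$, completeness holds for the residualized variables.
\item Since $\mathbb{E}[W\mathcal{X}\mid\mathcal{Z}_1]=0$, the unique solution of the moment equation is $h_1(x)=x^2-4$.
\end{itemize}
Hence $\mathcal{V}_{\mathcal{X}\to\mathcal{Y}\|\mathcal{Z}_1}=2W\mathcal{X}+U+\varepsilon_Y$. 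Its conditional variance given $\mathcal{Z}_2=z$ is $4z^2+14$, so the CAT condition fails for a valid IV set. The auxiliary variable is not even conditionally independent of $Z_2$ given $W$, since it contains $2wZ_2$ when $W=w$; this is why your conditional route cannot start.

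So an additional hypothesis is needed that lets residualization commute with $f$, for example $f$ linear or $X\CI\mathbf{W}$, together with your additive-entry assumption on the $Z_j$. Alternatively, the auxiliary variable must be built from a moment condition that conditions on $\mathbf{W}$ without residualizing $X$, such as $\mathbb{E}[Y-h_i(X)-\tau_i(\mathbf{W})\mid Z_i,\mathbf{W}]=0$.
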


\begin{proof}
See Appendix~\ref{proof-Corollary-Necessary-Condition-IV-set-covariates} for its proof.  
\end{proof}

Corollary \ref{Corollary-Necessary-Condition-IV-set-covariates} states that if $\{X,Y||(\mathbf{Z}', \mathbf{W})\}$ violates the CAT condition, then $\mathbf{Z}'$ is an invalid IV set. Furthermore, to obtain a necessary and sufficient characterization in the
presence of covariates, we impose Assumption~\ref{Ass-higher-order-condition} on the covariate-adjusted residual variables.

\begin{Corollary}[\textbf{Necessary and Sufficient Condition for IV Set Validity with Covariates}]\label{Corollary-necessary-sufficient-condition-CAM-NCE-covariates}
Let $X$, $Y$, $\mathbf{W}$, and $\mathbf{Z}' \subseteq \mathbf{Z}$ be the treatment, outcome, covariates, and candidate IV set in a CAM-NCE, respectively. Suppose that $X$, $Y$, $\mathbf{W}$, and $\mathbf{Z}'$ are statistically dependent. Further suppose that Assumptions~\ref{Ass-completeness} and~\ref{Ass-higher-order-condition} hold for the covariate-adjusted residual variables. The candidate IV set $\mathbf{Z}^\prime$ is a valid IV set relative to $X \to Y$ given $\mathbf{W}$ if and only if $\{X, Y||(\mathbf{Z}^\prime, \mathbf{W})\}$ satisfies the CAT condition. 
\end{Corollary}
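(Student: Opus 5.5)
The plan is to reduce Corollary~\ref{Corollary-necessary-sufficient-condition-CAM-NCE-covariates} to Theorem~\ref{Theorem-necessary-sufficient-condition-CAM-NCE} applied to the covariate-adjusted residuals $(\mathcal{X},\mathcal{Y},\boldsymbol{\mathcal{Z}}')$. Concretely, we show that these residuals again follow a covariate-free CAM-NCE of the form~\eqref{Eq-Main-Model}. Moreover, $\mathbf{Z}'$ should be a valid IV set for $X\to Y$ given $\mathbf{W}$ exactly when $\boldsymbol{\mathcal{Z}}'$ is a valid IV set for $\mathcal{X}\to\mathcal{Y}$ in that residual model. Once this is in place, the CAT condition of Definition~\ref{Definition-CAT-condition-set-covariates} is literally the CAT condition of Definition~\ref{Definition-CAT-condition-set} for the residual system. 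The equivalence then follows from the theorem, since by hypothesis Assumptions~\ref{Ass-completeness} and~\ref{Ass-higher-order-condition} hold for these variables.

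The first step is to write down the residual model. Subtracting $\mathbb{E}[\cdot\mid\mathbf{W}]$ from~\eqref{Eq-Main-Model-covariates} kills $t_X(\mathbf{W})$ and $t_Y(\mathbf{W})$. It leaves
\[
\mathcal{X}=\tilde g_X+\tilde\varphi_X+\tilde\varepsilon_X, \qquad \mathcal{Y}=\widetilde{f(X)}+\tilde g_Y+\tilde\varphi_Y+\tilde\varepsilon_Y,
\]
where tildes denote centering given $\mathbf{W}$. I would work under the standard setup implicit in the covariate extension: $\mathbf{W}$ is a baseline variable, the $\mathbf{Z}$'s enter through $\mathcal{Z}_i$, and $\mathbf{U},\varepsilon_X,\varepsilon_Y$ are independent of $\mathbf{W}$, so their centered versions are just recentered. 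Validity given $\mathbf{W}$ means $\mathcal{Z}_i\CI\mathbf{U}$ and that $\mathcal{Z}_i$ has no direct path to $Y$. Hence the exogeneity and exclusion statuses of each $\mathcal{Z}_i$ in the residual model coincide with those of $Z_i$ given $\mathbf{W}$.

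The main obstacle is the nonlinear term $f(X)$. Under the additive structure one residualizes $f(X)$ rather than $X$ itself: the conditional moment $\mathbb{E}[\mathcal{Y}-h(\mathcal{X})\mid\mathcal{Z}_i]=0$ must be identified with the structural function. My first attempt is to treat the residual system as a CAM-NCE in its own right, with structural map $\mathcal{X}\mapsto\mathcal{Y}$. The leftover $\mathbb{E}[f(X)\mid\mathbf{W}]$ is absorbed into the centered terms, as in the paper's regression-adjustment argument. Completeness is then taken to hold for $(\mathcal{X},\mathcal{Z}_i)$, which is exactly how the hypotheses are phrased.

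With this model fixed, necessity is Corollary~\ref{Corollary-Necessary-Condition-IV-set-covariates}. For sufficiency, suppose $\mathbf{Z}'$ is invalid given $\mathbf{W}$; then $\boldsymbol{\mathcal{Z}}'$ is invalid in the residual model. Applying Proposition~\ref{Pro-Test-Higher-CAM-NCE} with Assumption~\ref{Ass-higher-order-condition} on the residual densities gives a pair $\{\mathcal{Z}_i,\mathcal{Z}_j\}$ with a nonvanishing cross log-density derivative. This rules out factorization of the joint density, so the residual CAT condition fails. Combining the two directions gives the stated equivalence.
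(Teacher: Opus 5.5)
Your route is the paper's route. You residualize on $\mathbf W$, claim that $(\mathcal X,\mathcal Y,\boldsymbol{\mathcal Z}')$ is a covariate-free CAM-NCE in which validity given $\mathbf W$ becomes ordinary validity, and invoke Theorem~\ref{Theorem-necessary-sufficient-condition-CAM-NCE}. Your split into Corollary~\ref{Corollary-Necessary-Condition-IV-set-covariates} plus Proposition~\ref{Pro-Test-Higher-CAM-NCE} is just that theorem unpacked. The sufficiency half is fine: it only uses Assumption~\ref{Ass-higher-order-condition} on the residual densities, plus the fact that a nonvanishing cross log-derivative rules out factorization.

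The gap is at the step you yourself flag as the main obstacle, and ``absorbing'' the leftover into the centered terms does not close it. Write $X=\mathcal X+m(\mathbf W)$ with $m(\mathbf W)=\mathbb E[X\mid\mathbf W]$. For nonlinear $f$, the quantity $f(X)-\mathbb E[f(X)\mid\mathbf W]$ is in general not of the form $h(\mathcal X)+r(\mathbf W)$; for $f(x)=x^2$ it contains the interaction $2m(\mathbf W)\mathcal X$. The part not captured by $h_i(\mathcal X)$ stays in $\mathcal V_{\mathcal X\to\mathcal Y\|\mathcal Z_i}$ and depends on $\mathcal Z_j$ through $\mathcal X$. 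The paper's proof, via that of Corollary~\ref{Corollary-Necessary-Condition-IV-set-covariates}, asserts the same reduction without justification. So this is a gap you share with the paper, not a departure from it.

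The step genuinely fails. Take $W,Z_1,Z_2,U,\varepsilon_X,\varepsilon_Y$ i.i.d.\ $N(0,1)$, $X=Z_1+Z_2+W+U+\varepsilon_X$, and $Y=X^2+U+\varepsilon_Y$. Then $\{Z_1,Z_2\}$ is valid given $W$, $\mathcal Z_i=Z_i$, $\mathcal X=Z_1+Z_2+U+\varepsilon_X$, and
\[
\mathcal Y=\mathcal X^2+2W\mathcal X-4+U+\varepsilon_Y .
\]
Since $\mathcal X\mid\mathcal Z_1\sim N(\mathcal Z_1,3)$ is complete, the unique solution of $\mathbb E[\mathcal Y-h_1(\mathcal X)\mid\mathcal Z_1]=0$ is $h_1(x)=x^2-4$. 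Hence $\mathcal V_{\mathcal X\to\mathcal Y\|\mathcal Z_1}=2W\mathcal X+U+\varepsilon_Y$, whose conditional variance given $\mathcal Z_2=z$ is $4z^2+14$, so it is not independent of $\mathcal Z_2$.

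So CAT fails for a valid set, even though all hypotheses of the corollary hold. (With $\mathbf Z=\{Z_1,Z_2\}$, Assumption~\ref{Ass-higher-order-condition} holds vacuously.) The ``only if'' direction therefore cannot be obtained this way. Citing Corollary~\ref{Corollary-Necessary-Condition-IV-set-covariates} does not help either, because the same example contradicts it.

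To make the argument correct you must \emph{assume}, not derive, that the residualized system is itself a covariate-free CAM-NCE. That is, $\mathcal Y=f^{\ast}(\mathcal X)+e$ with $e$ independent of every valid $\mathcal Z_j$; this holds, for example, when $f$ is linear. With that added hypothesis, your argument goes through as written.
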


\begin{proof}
See Appendix~\ref{proof-Corollary-necessary-sufficient-condition-CAM-NCE-covariates} for its proof.
\end{proof}
% \vspace{-2mm}

\subsection{CAT Algorithm with Finite Samples}
In this subsection, we develop a practical CAT algorithm for finite-sample data. For generality, we present the algorithm in the setting with baseline covariates $\mathbf{W}$. 
Let $\widehat{\mathcal X}$, $\widehat{\mathcal Y}$, and 
$\widehat{\boldsymbol{\mathcal Z}}=\{\widehat{\mathcal Z}_1,\ldots,\widehat{\mathcal Z}_m\}$ denote the residuals obtained by regressing $X$, $Y$, and each candidate IV $Z_i\in\mathbf Z=\{Z_1,\ldots,Z_m\}$ on $\mathbf W$, respectively. 
When no covariates are available, this residualization step is omitted and the original variables are used directly.

Since the CAT results are given in terms of population-level  (Theorems~\ref{Theorem-Necessary-Condition-IV-set}$\sim$\ref{Theorem-necessary-sufficient-condition-CAM-NCE}
and Corollaries~\ref{Corollary-Necessary-Condition-IV-set-covariates}$\sim$\ref{Corollary-necessary-sufficient-condition-CAM-NCE-covariates}), implementing it with observational samples requires addressing three practical questions:
\begin{itemize}
    \item[$\mathcal{Q}1$.] How can one efficiently search for a valid IV set from a collection of candidate IVs?
    \item[$\mathcal{Q}2$.] How can one estimate the auxiliary variable $\mathcal{V}_{\mathcal{X} \to \mathcal{Y}\|\mathcal{Z}_i}$ for each candidate IV $Z_i$?
    \item[$\mathcal{Q}3$.] How can the CAT condition be implemented with finite samples?
\end{itemize}

We next address these questions in turn.

\noindent\textbf{$\mathcal{Q}1$: Searching for candidate IV sets.}
Since the CAT condition is defined through pairwise cross-independence relations, we focus on candidate subsets with at least two IVs. 
Searching over all such subsets of $\widehat{\boldsymbol{\mathcal Z}}$ requires examining $2^m-m-1$ possibilities, which can be computationally expensive when $m$ is large. 
To make the search tractable, we introduce a user-specified parameter $K\geq 2$, denoting the target size of the IV set to be selected. 
Given $K$, we restrict attention to candidate subsets of size $K$, resulting in $\binom{m}{K}$ subsets. 
Specifically, we consider all subsets $\mathcal S_c\subseteq \widehat{\boldsymbol{\mathcal Z}}$ with $|\mathcal S_c|=K$. 
When prior knowledge about $K$ is unavailable, we suggest evaluating $K$ over a range of plausible values, starting from small values, and using the CAT-based score below to select among the resulting candidate subsets.

\noindent\textbf{$\mathcal{Q}2$: Estimating auxiliary variables.}
To construct the auxiliary variable for each candidate IV, the key step is to estimate the function $h_i(\cdot)$ satisfying the conditional moment restriction $\mathbb{E}\!\left[\mathcal{Y}-h_i(\mathcal{X})\mid \mathcal{Z}_i\right]=0$. Under the additive model, if $\mathcal{Z}_i$ is a valid IV, then $h_i(\cdot)$ is identifiable under the completeness condition (Assumption~\ref{Ass-completeness}) and coincides with the structural response function of the treatment on the outcome~\citep{newey2003instrumental}. 
In this case, $h_i(\cdot)$ can be consistently estimated using suitable IV estimators. 
If $\mathcal{Z}_i$ is invalid, the resulting estimator may be biased, and this bias will be reflected in the corresponding auxiliary variable. Various estimators have been developed for additive nonparametric IV models, including sieve-based estimators~\citep{newey2013nonparametric}, kernel-based estimators~\citep{singh2019kernel,muandet2020dual}, deep IV estimators~\citep{hartford2017deep,lin2019one}, and control-function estimators~\citep{newey1999nonparametric,guo2016control}. 
In our implementation, we use the control-function IV estimator for non-constant effects~\citep{guo2016control}. 
When prior knowledge suggests a constant-effect model, we instead use the two-stage least squares estimator~\citep{wooldridge2010econometric}. 
Given the estimated function $\widehat h_i(\cdot)$, the empirical auxiliary variable is constructed as $\widehat{\mathcal{V}}_{\mathcal{X}\to\mathcal{Y}\|\mathcal{Z}_i}= \widehat{\mathcal Y}-\widehat h_i(\widehat{\mathcal X})$.

\noindent\textbf{$\mathcal{Q}3$: Implementing the CAT condition with finite samples.}
For each candidate subset $\mathcal{S}_c$, the CAT condition requires every pair of distinct candidate IVs to satisfy two directed cross-independence relations. 
Specifically, for each pair $\{\widehat{\mathcal{Z}}_i,\widehat{\mathcal{Z}}_j\}\subseteq \mathcal{S}_c$, we need to assess
\[
\widehat{\mathcal{V}}_{\mathcal{X}\to\mathcal{Y}\|\mathcal{Z}_i}
    \CI \widehat{\mathcal{Z}}_j,
    \quad
\widehat{\mathcal{V}}_{\mathcal{X}\to\mathcal{Y}\|\mathcal{Z}_j}
    \CI \widehat{\mathcal{Z}}_i.
\]
One could use formal independence tests for continuous variables and determine whether the CAT condition holds based on the corresponding $p$-values. 
However, since our goal is to compare many candidate subsets in finite samples, we use distance correlation as a numerical dependence score. 
According to~
\cite{szekely2007measuring,szekely2009brownian}, distance correlation is zero if and only if the two random vectors are independent. Hence, whenever the population-level CAT condition holds, the corresponding population distance correlation is zero. In finite samples, smaller sample distance correlations therefore provide stronger empirical support for the CAT condition. Let $\widehat{\operatorname{dCor}}(\cdot,\cdot)$ denote the sample distance correlation. For each candidate subset $\mathcal{S}_c$, we define its CAT score as the sum of directed pairwise distance correlations: 
\vspace{-2mm}
\begin{equation}\nonumber
    \begin{aligned}
\widehat{T}_{\mathcal{S}_c}
=\sum_{\substack{\{\widehat{\mathcal{Z}}_i,\widehat{\mathcal{Z}}_j\}\subseteq \mathcal{S}_c\\ i < j}}
\Bigg[
&\widehat{\operatorname{dCor}}\!\left(
\widehat{\mathcal{V}}_{\mathcal{X}\to\mathcal{Y}\|\mathcal{Z}_i},
\widehat{\mathcal{Z}}_j
\right) 
+
\widehat{\operatorname{dCor}}\!\left(
\widehat{\mathcal{V}}_{\mathcal{X}\to\mathcal{Y}\|\mathcal{Z}_j},
\widehat{\mathcal{Z}}_i
\right)
\Bigg].
\end{aligned}
\end{equation}
For a subset of size $K$, this score aggregates $K(K-1)/2$ unordered IV pairs and $K(K-1)$ directed distance-correlation terms. 
Finally, we select the candidate subset with the smallest empirical CAT score: 
\[
    \widehat{\mathcal S}
    =
    \arg\min_{\mathcal{S}_c\subseteq \widehat{\boldsymbol{\mathcal{Z}}},\ |\mathcal{S}_c|=K}
    \widehat{T}_{\mathcal{S}_c}.
\]
The selected subset $\widehat{\mathcal S}$ is therefore the candidate IV set that is most consistent with the CAT condition in finite samples.

\begin{algorithm}[h]
\caption{CAT}
\label{algorithm-CAT}
\begin{algorithmic}[1]
    \REQUIRE Observed dataset $\mathcal{D}=\{X,Y,\mathbf{W},\mathbf{Z}\}$; $K$, the number of valid IVs to select.
    \ENSURE Selected IV set $\widehat{\mathcal{S}}$ and its corresponding distance-correlation score $T_{\mathcal{S}}$. 

    \STATE \textbf{Initialize:} the selected IV set $\widehat{\mathcal{S}}\gets \emptyset$, the distance correlation matrix 
    $\mathbf{M}\gets \mathbf{0}_{m\times m}$; 

    \IF{$\mathbf{W}\neq \emptyset$}
        \STATE $\widehat{\mathcal{X}},\widehat{\mathcal{Y}},\widehat{\boldsymbol{\mathcal{Z}}}
        \gets$ residuals from the regressions of $X,Y,\mathbf{Z}$ on $\mathbf{W}$, respectively;
    \ELSE
        \STATE $\widehat{\mathcal{X}},\widehat{\mathcal{Y}},\widehat{\boldsymbol{\mathcal{Z}}}\gets X,Y,\mathbf{Z}$;
    \ENDIF

    % \STATE Let $\boldsymbol{\mathcal{Z}}=\{\mathcal{Z}_1,\ldots,\mathcal{Z}_m\}$;

    \FOR{$i=1,\ldots,m$}
        \IF{the non-constant causal effect of $X\to Y$ is adopted}
            \STATE $\widehat{h}_i(\widehat{\mathcal{X}})
            \gets \text{Control-Function IV Estimator}(\widehat{\mathcal{X}},\widehat{\mathcal{Y}},\widehat{\mathcal{Z}}_i)$; 
        \ELSE
            \STATE $\widehat{h}_i(\widehat{\mathcal{X}})
            \gets \widehat{\beta}_i\widehat{\mathcal{X}}$, where
            $\widehat{\beta}_i\gets
            \frac{\operatorname{Cov}(\widehat{\mathcal{Y}},\widehat{\mathcal{Z}}_i)}
            {\operatorname{Cov}(\widehat{\mathcal{X}},\widehat{\mathcal{Z}}_i)}$;
        \ENDIF
        \STATE $\widehat{\mathcal{V}}_i
        \gets
        \widehat{\mathcal{Y}}-\widehat{h}_i(\widehat{\mathcal{X}})$;
    \ENDFOR

    % \STATE \textbf{Initialize:} $\mathbf{M}\in\mathbb{R}^{m\times m}$ with all entries equal to $0$;

    \FOR{each unordered pair $\{\mathcal{Z}_i,\mathcal{Z}_j\}$ with $1\leq i<j\leq m$}
        \STATE
        $\widehat{\mathbf{M}}_{ij}
        \gets
        \widehat{\operatorname{dCor}}(\widehat{\mathcal{V}}_i,\widehat{\mathcal{Z}}_j)
        +
        \widehat{\operatorname{dCor}}(\widehat{\mathcal{V}}_j,\widehat{\mathcal{Z}}_i)$;
        \STATE $\widehat{\mathbf{M}}_{ji}\gets \widehat{\mathbf{M}}_{ij}$;
    \ENDFOR

    \STATE Let $\mathcal{P}_K=\{\mathcal{S}_c\subseteq \widehat{\boldsymbol{\mathcal{Z}}}:|\mathcal{S}_c|=K\}$;

    \FOR{each candidate subset $\mathcal{S}_c\in \mathcal{P}_K$}
        \STATE $\widehat{T}_{\mathcal{S}_c}\gets 0$;
        \FOR{each unordered pair $\{\widehat{\mathcal{Z}}_i,\widehat{\mathcal{Z}}_j\}\subseteq \mathcal{S}_c$ with $i<j$}
            \STATE $\widehat{T}_{\mathcal{S}_c}
            \gets
            \widehat{T}_{\mathcal{S}_c}+\widehat{\mathbf{M}}_{ij}$;
        \ENDFOR
    \ENDFOR

    \STATE
    $\widehat{\mathcal{S}}
    \gets
    \arg\min_{\mathcal{S}_c\in\mathcal{P}_K}
    \widehat{T}_{\mathcal{S}_c}$;

    \RETURN $\widehat{\mathcal{S}}$ and $\widehat{T}_{\mathcal{S}}$. 
\end{algorithmic}
\end{algorithm}

Based on the above three components, the entire procedure is summarized in Algorithm~\ref{algorithm-CAT}. 
Overall, the algorithm proceeds in three main stages. 
First, it adjusts for covariates by residualizing the treatment, outcome, and candidate IVs with respect to $\mathbf W$ (Lines~2--6). 
Second, it estimates the auxiliary variables for each candidate IV by estimating the corresponding function $h_i(\cdot)$, as described in $\mathcal{Q}2$ (Lines~7--14). 
Third, it searches over candidate IV subsets of size $K$ and evaluates each subset using the CAT-based score defined in $\mathcal{Q}3$ (Lines~15--26). 
The subset with the smallest empirical CAT score is returned as the estimated IV set that is most consistent with the CAT condition.

Below, we establish the correctness of the CAT algorithm in selecting a valid IV set as the sample size tends to infinity.

\begin{Theorem}[\textbf{Correctness of the CAT Algorithm}]
\label{The-algorithm-correctness}
Assume that the observed data $\{X,Y,\mathbf{W},\mathbf{Z}\}$ are generated from CAM-NCE, and that the candidate IV set $\mathbf{Z}$ contains at least $K$ valid IVs. 
Suppose that Assumptions~\ref{Ass-completeness}--\ref{Ass-higher-order-condition} hold, and that the estimators used in Algorithm~\ref{algorithm-CAT} for covariate adjustment, auxiliary-variable construction, and distance correlation are consistent. 
Then, as the sample size tends to infinity, Algorithm~\ref{algorithm-CAT} outputs a valid IV subset $\widehat{\mathcal{S}}$ with $|\widehat{\mathcal{S}}|=K$. 
In particular, if $\mathbf{Z}$ contains exactly $K$ valid IVs, then Algorithm~\ref{algorithm-CAT} outputs the full valid IV set.
\end{Theorem}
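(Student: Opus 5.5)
The plan is to reduce the finite-sample statement to a population-level separation between valid and invalid subsets of size $K$, and then let consistency of the plug-in score carry that separation over as $n\to\infty$. For each candidate subset $\mathcal S_c$ with $|\mathcal S_c|=K$, define the population CAT score
\[
T_{\mathcal S_c}=\sum_{\{\mathcal Z_i,\mathcal Z_j\}\subseteq\mathcal S_c,\ i<j}\Big[\operatorname{dCor}(\mathcal V_{\mathcal X\to\mathcal Y\|\mathcal Z_i},\mathcal Z_j)+\operatorname{dCor}(\mathcal V_{\mathcal X\to\mathcal Y\|\mathcal Z_j},\mathcal Z_i)\Big].
\]
Each summand is nonnegative, and by \cite{szekely2007measuring} it vanishes if and only if the corresponding independence holds. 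Hence $T_{\mathcal S_c}=0$ exactly when $\{X,Y\|(\mathcal S_c,\mathbf W)\}$ satisfies the CAT condition of Definition~\ref{Definition-CAT-condition-set-covariates}.

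The first step is to compute the sign of $T_{\mathcal S_c}$ for each subset, using Corollary~\ref{Corollary-necessary-sufficient-condition-CAM-NCE-covariates}. Since $K\ge2$, Assumption~\ref{Ass-higher-order-condition} applies to every invalid subset considered. It follows that:
\begin{itemize}
\item if $\mathcal S_c$ is a valid IV set, then $T_{\mathcal S_c}=0$;
\item if $\mathcal S_c$ is invalid, then $T_{\mathcal S_c}>0$.
\end{itemize}
Because $\mathbf Z$ contains at least $K$ valid IVs, at least one valid $K$-subset exists. I also need that any $K$ individually valid IVs form a valid set in the sense of Definition~\ref{Def-IV-Set}. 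Under CAM-NCE this seems immediate: exclusion means none appear in $g_Y$, and exogeneity of the joint vector holds when each is a root noise independent of $\mathbf U$. Only finitely many subsets are involved ($\binom mK$), so
\[
\delta:=\min\{T_{\mathcal S_c}:\mathcal S_c \text{ invalid}\}>0.
\]

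The second step is to show that the estimated score converges in probability to the population score for every subset, i.e. $\widehat T_{\mathcal S_c}\to T_{\mathcal S_c}$. Each $\widehat T_{\mathcal S_c}$ is a finite sum of the entries $\widehat{\mathbf M}_{ij}$, so it suffices to show $\widehat{\operatorname{dCor}}(\widehat{\mathcal V}_i,\widehat{\mathcal Z}_j)\to\operatorname{dCor}(\mathcal V_i,\mathcal Z_j)$. Here the sample dCor is evaluated on estimated residuals and estimated auxiliary variables, not on true draws. This is the main obstacle. I would take the stated consistency assumptions to mean that these plug-in quantities are consistent, and argue as follows:
\begin{itemize}
\item The residualization and the $\widehat h_i$ are consistent, so $\frac1n\sum_k|\widehat{\mathcal V}_{i,k}-\mathcal V_{i,k}|\to0$ and similarly for $\widehat{\mathcal Z}_j$.
\item The sample distance covariance is built from pairwise distances $|a_k-a_l|$, which are $1$-Lipschitz. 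With finite first moments, its perturbation is therefore controlled by these average errors.
\item Combined with the strong consistency of dCor on the true samples \citep{szekely2007measuring}, this gives the required convergence.
\end{itemize}
If that Lipschitz argument proves delicate, I would simply fold it into the hypothesis "the distance-correlation estimator is consistent".

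The final step is the selection argument. With probability tending to one, every one of the finitely many $\widehat T_{\mathcal S_c}$ lies within $\delta/3$ of its limit. On that event:
\begin{itemize}
\item every valid subset has $\widehat T<\delta/3$;
\item every invalid subset has $\widehat T>2\delta/3$.
\end{itemize}
The $\arg\min$ is therefore attained at a valid subset, so $\widehat{\mathcal S}$ is valid with $|\widehat{\mathcal S}|=K$ with probability approaching one. If $\mathbf Z$ contains exactly $K$ valid IVs, the valid $K$-subset is unique and equals the full valid IV set, which gives the last claim.
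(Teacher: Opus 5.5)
Your proposal is correct and follows the paper's proof essentially step for step: you define a population score $T_{\mathcal S_c}$ that vanishes exactly on valid $K$-subsets and is strictly positive on invalid ones (via the necessary-and-sufficient characterization), take the positive gap over the finitely many invalid subsets, use uniform convergence of $\widehat T_{\mathcal S_c}$, and conclude with a margin argument for the $\arg\min$. Your extra touches make explicit points the paper leaves implicit: you invoke the covariate version, Corollary~\ref{Corollary-necessary-sufficient-condition-CAM-NCE-covariates}, rather than Theorem~\ref{Theorem-necessary-sufficient-condition-CAM-NCE}, and you check that $K$ individually valid IVs form a valid set. The paper handles the plug-in dCor step exactly as your fallback does, by absorbing it into the consistency hypothesis.
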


\begin{proof}
See Appendix~\ref{proof-The-algorithm-correctness} for its proof. 
\end{proof}

\noindent\textbf{Computational complexity.} 
We finally analyze the computational complexity of the CAT algorithm. 
Let $n$ denote the sample size, $m=|\mathbf{Z}|$ denote the number of candidate IVs, and $p=|\mathbf{W}|$ denote the number of covariates. 
The time complexity consists of four main components:
\begin{itemize}[leftmargin=15pt,itemsep=0pt,topsep=0pt,parsep=0pt]
    \item[1.] \textbf{Covariate residualization.} 
    Regressing $X$, $Y$, and the $m$ candidate IVs on $\mathbf{W}$ to obtain residualized variables costs $\mathcal{O}(nmp^2)$. 

\item[2.] \textbf{Auxiliary-variable estimation:} 
For each candidate IV, we estimate $h_i(\cdot)$ using the estimator adopted in this work, i.e., the semiparametric control-function estimator for non-constant effects and two-stage least squares for constant effects. 
This step costs $\mathcal{O}(mn)$ in total.

    \item[3.] \textbf{Pairwise distance-correlation calculation.} 
    Computing the directed pairwise distance correlations for all candidate IV pairs costs $\mathcal{O}(m^2n^2)$ using the standard distance-correlation estimator.

    \item[4.] \textbf{Candidate subset selection.} 
    After the pairwise distance-correlation matrix is computed, evaluating all candidate subsets of size $K$ costs $\mathcal{O}\!\left(\binom{m}{K}K^2\right)$.
\end{itemize}
Hence, the overall computational complexity is
\[
\mathcal{O}\!\left(nmp^2 + mn + m^2n^2 + \binom{m}{K}K^2\right).
\]

\section{Experiments}\label{Sec-experimets}
In this section, we evaluate the proposed CAT method on synthetic datasets generated under both CAM-CE and CAM-NCE, corresponding to the constant-effect and non-constant-effect settings, respectively. Our goal is to examine whether CAT can correctly distinguish valid IV sets from invalid ones under different types of IV assumption violations. We first consider the constant-effect setting, where representative existing IV methods are applicable and thus serve as baselines, and then proceed to the more general non-constant-effect setting, which is the primary focus of this paper. It is noteworthy that the simulated data are generated solely according to the causal additive model in Equation~\eqref{Eq-Main-Model}; no additional constraints are imposed to ensure that the distributional condition in Assumption~\ref{Ass-higher-order-condition} holds. Our source code is available at \url{https://github.com/guoxichen0/CAT}.

Across both settings, we consider three representative invalid IV scenarios involving both valid and invalid IVs. 
In Case~1, the invalid IVs violate the exclusion restriction condition~($\mathcal{C}2$); 
in Case~2, the invalid IVs violate the exogeneity condition~($\mathcal{C}3$); 
and in Case~3, the invalid IVs violate both the exclusion restriction~($\mathcal{C}2$) and exogeneity~($\mathcal{C}3$) conditions. 
Each experiment is repeated 100 times with independently generated data. 
All noise terms are independently drawn from \(U(-1,1)\). 
For each case, we vary the sample size over \(n\in\{1000,3000,5000\}\).

\subsection{Synthetic Data under Constant Effects}\label{sec_synthetic_constant}

We first evaluate CAT under the CAM-CE framework, where the structural response function takes the linear form $f(X)=\beta X$, yielding a constant causal effect. Following~\cite{guo2018confidence} and related works, the true constant causal effect is fixed at \(\beta=1\). Other constant coefficients in the structural equations are independently sampled from \([-1.5,-0.5]\cup[0.5,1.5]\). We report comparison results under nonlinear structural components with constant treatment effects, settings with covariates, and the ALICE model.

Since existing IV selection methods are primarily developed for constant-effect or linear models, the CAM-CE setting enables direct comparison with the following representative baselines: 
\begin{enumerate}[leftmargin=15pt,itemsep=0pt,topsep=0pt,parsep=0pt]
    \item NAIVE: the least-squares regression coefficient of \(Y\) on \(X\);
    \item MR-Egger~\citep{bowden2015mendelian}: implemented using the code available at \url{https://academic.oup.com/ije/article/44/2/512/754653/};
    \item TSHT~\citep{guo2018confidence}: implemented using the code available at \url{https://cran.r-project.org/web/packages/RobustIV/};
    \item CIIV~\citep{windmeijer2021confidence}: implemented using the code available at \url{https://github.com/xlbristol/CIIV/};
    \item sisVIVE~\citep{kang2016instrumental}: implemented using the code available at \url{https://cran.r-project.org/web/packages/sisVIVE/};
    \item IV-tetrad~\citep{silva2017learning}: implemented using the code available at \url{https://www.homepages.ucl.ac.uk/~ucgtrbd/code/iv_discovery/}.
\end{enumerate}
\noindent\textbf{Metrics.} We evaluate the methods through their resulting causal effect estimates, summarized using boxplots, where estimates closer to the true effect with smaller variability and dispersion indicate better performance. 
For methods that first select IVs, including TSHT, CIIV, sisVIVE, IV-tetrad, and CAT, we apply the same IV estimator as in~\cite{guo2018confidence} after IV selection to ensure a fair comparison. NAIVE and MR-Egger directly produce causal effect estimates and are therefore evaluated using their original outputs. 

\begin{figure*}[h]
\centering
\subfloat[Case 1]{%
    \includegraphics[width=0.33\textwidth]{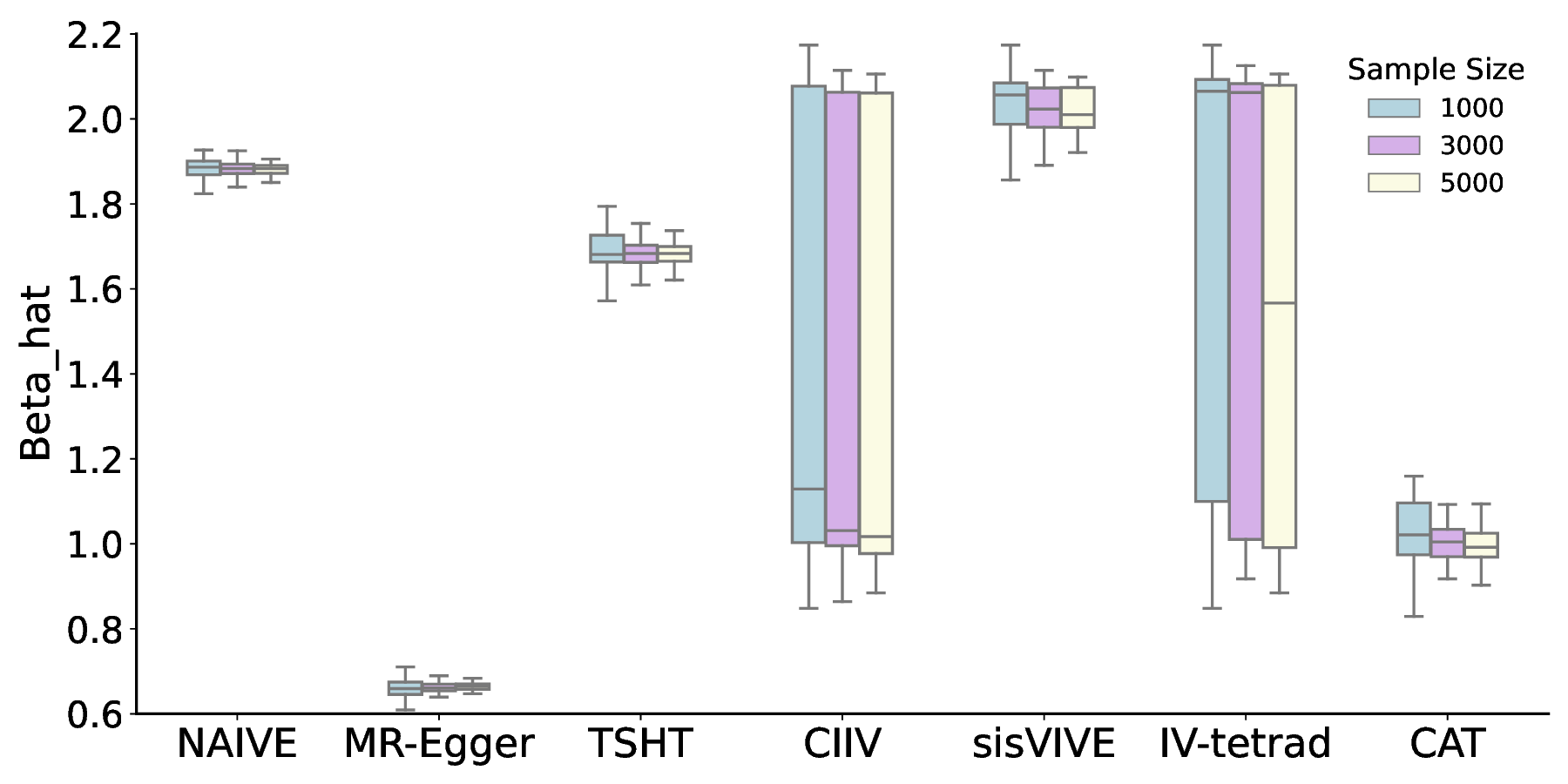}%
    \label{fig:Case1_nonlinear}}
\hfil
\subfloat[Case 2]{%
    \includegraphics[width=0.33\textwidth]{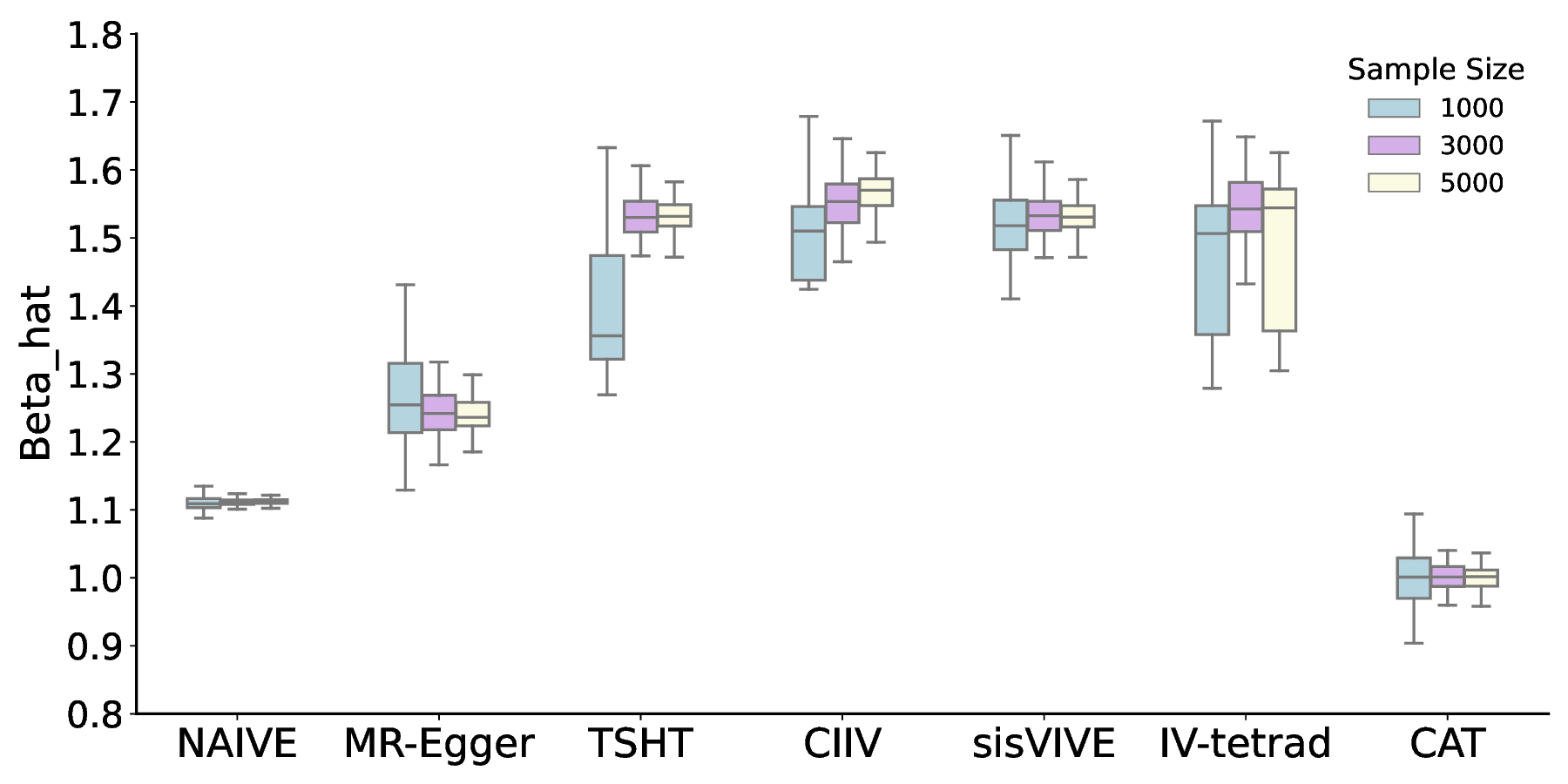}%
    \label{fig:Case2_nonlinear}}
\hfil
\subfloat[Case 3]{%
    \includegraphics[width=0.33\textwidth]{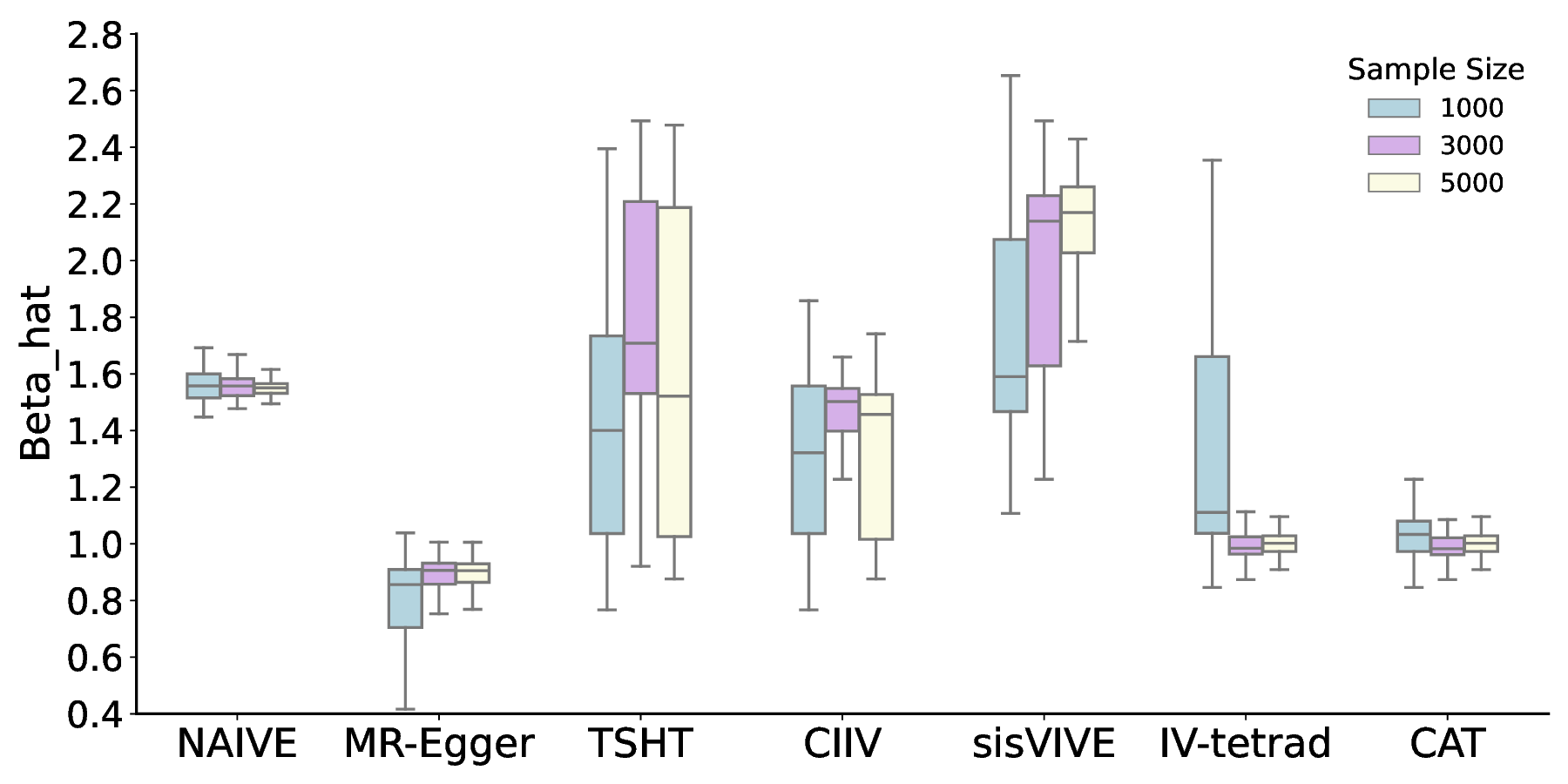}%
    \label{Case3_nonlinear}}
\caption{Performance of NAIVE, MR-Egger, TSHT, CIIV, sisVIVE, IV-tetrad, and CAT across three different cases in the CAM-CE framework.}
\label{fig:compare-method-constant-nonlinear-model}
\end{figure*}

\begin{figure*}[h]
\centering
\subfloat[Case 1]{%
    \includegraphics[width=0.33\textwidth]{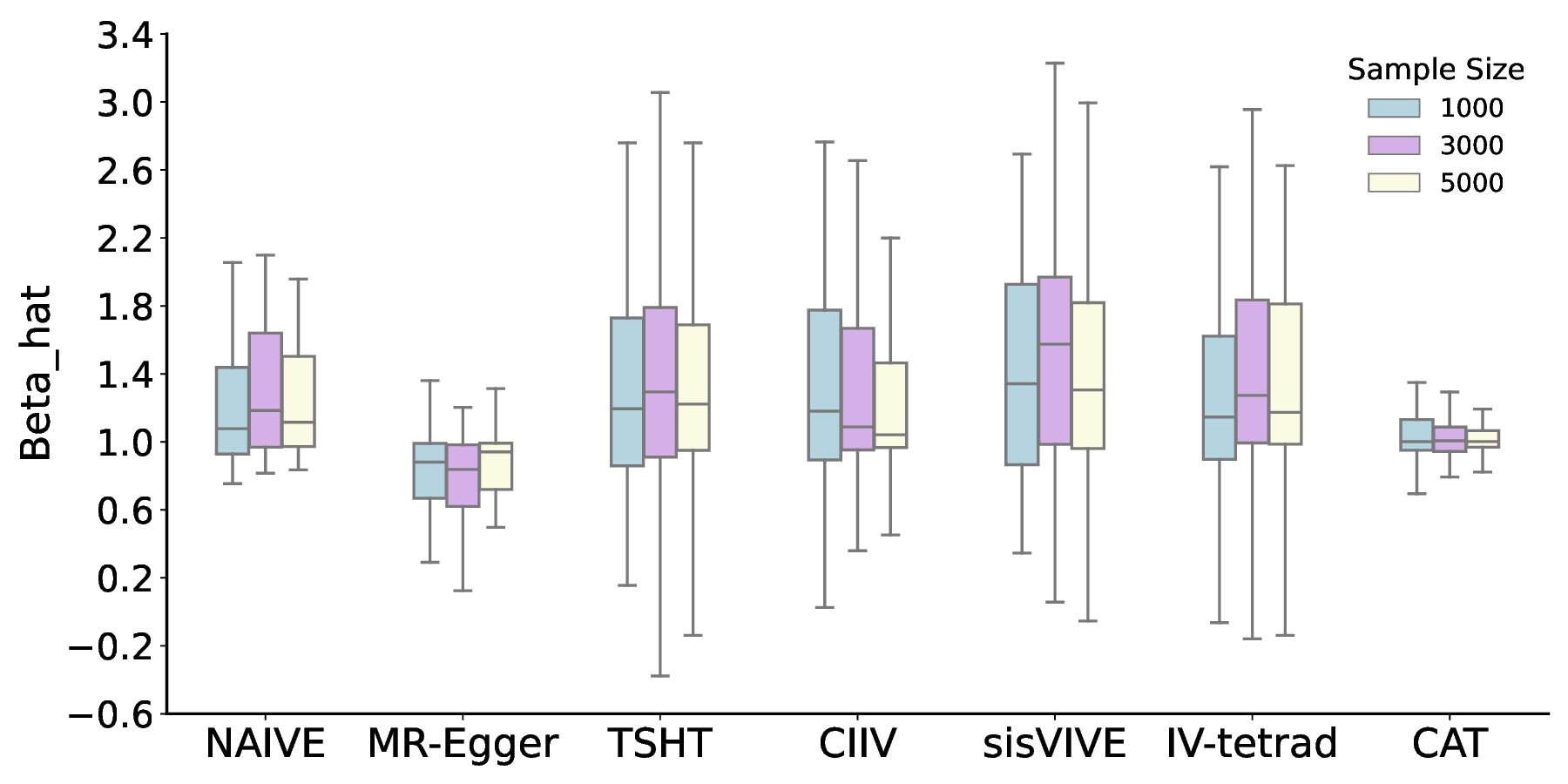}%
    \label{fig:Case1_covariates}}
\hfil
\subfloat[Case 2]{%
    \includegraphics[width=0.33\textwidth]{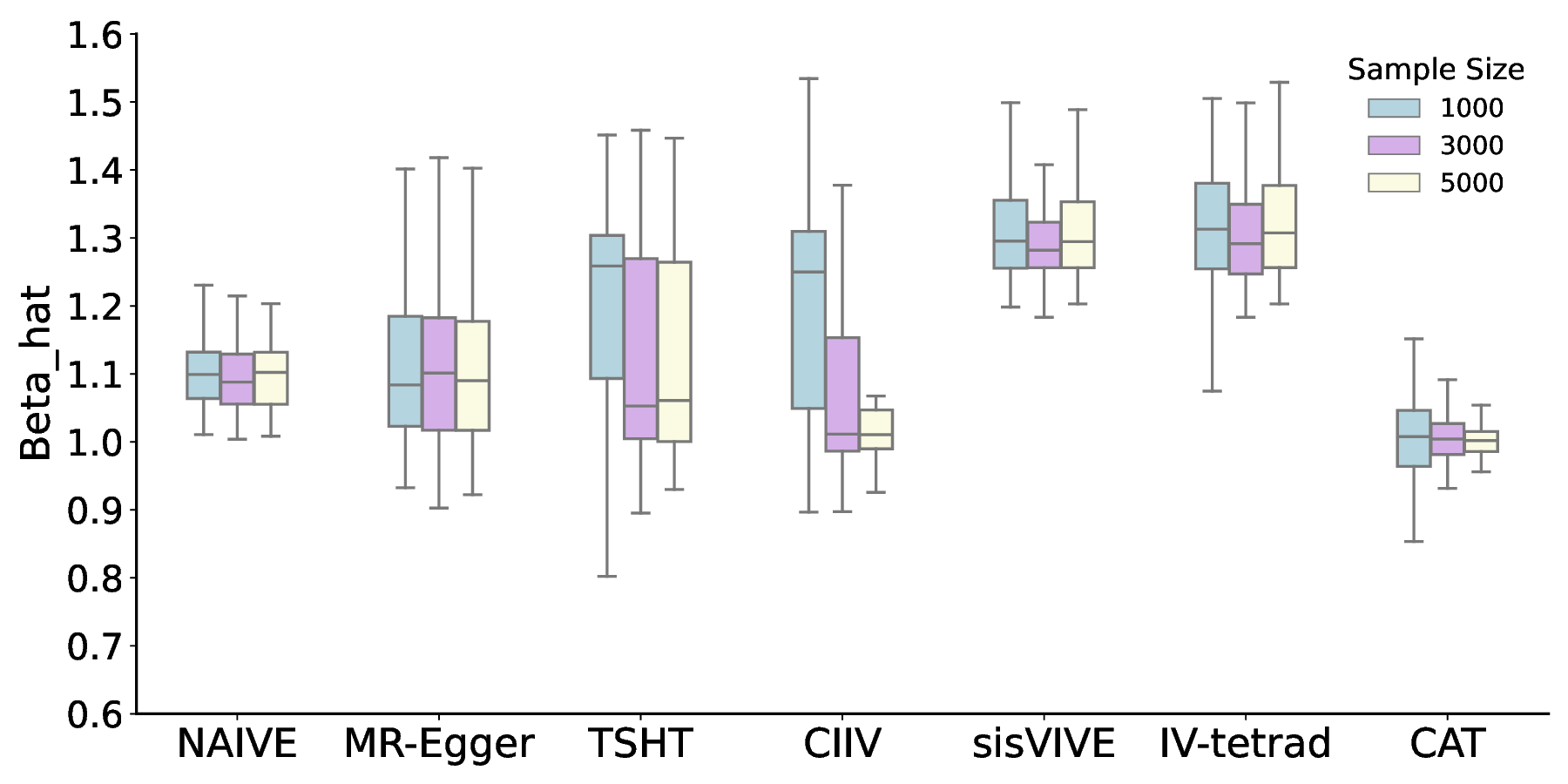}%
    \label{fig:Case2_covariates}}
\hfil
\subfloat[Case 3]{%
    \includegraphics[width=0.33\textwidth]{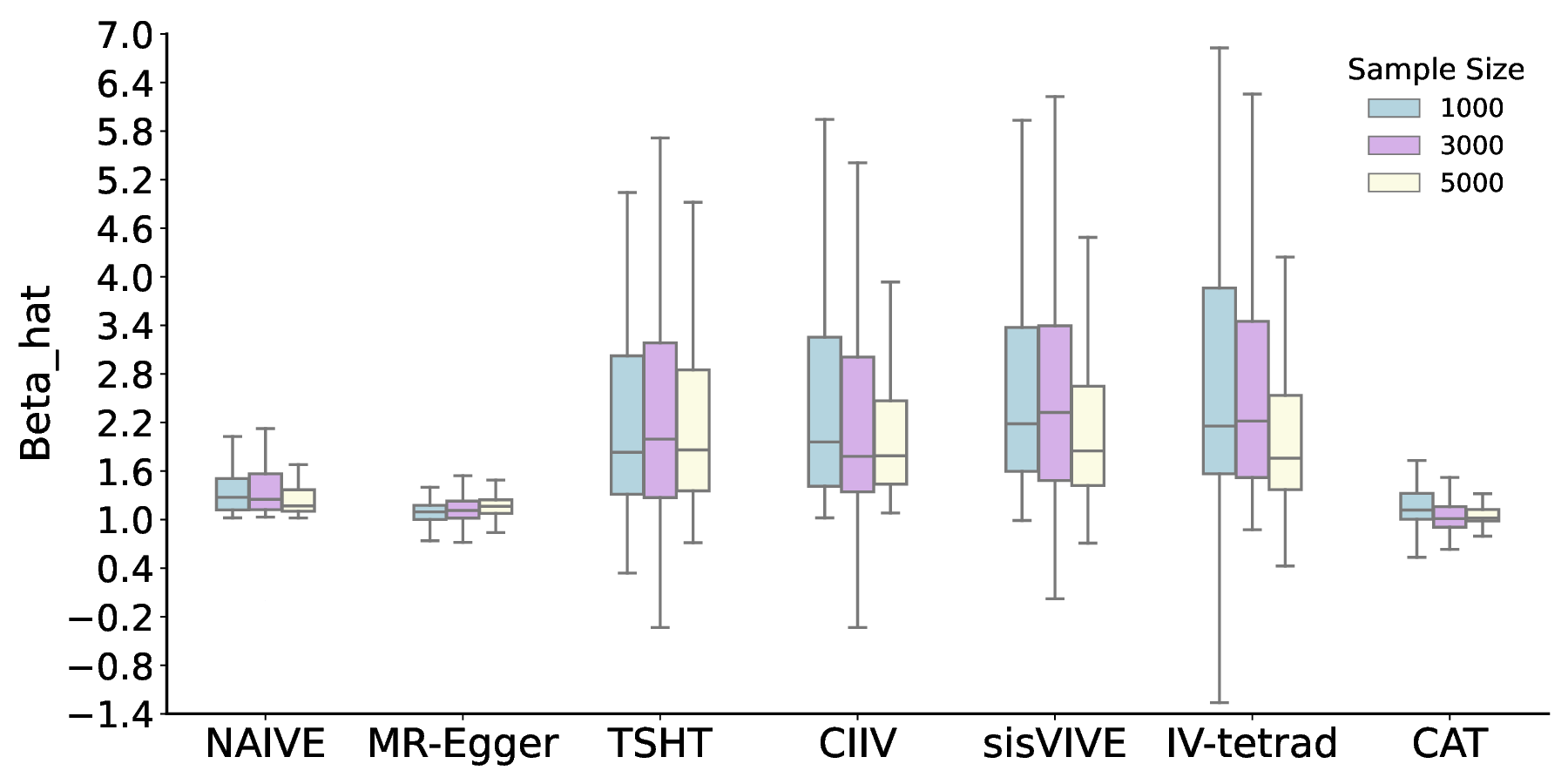}%
    \label{Case3_covariates}}

\caption{Performance of NAIVE, MR-Egger, TSHT, CIIV, sisVIVE, IV-tetrad, and CAT across three different cases with covariates $\mathbf{W}$ in the CAM-CE framework.}
\label{fig:compare-method-constant-covariates-model}
\end{figure*}

\begin{figure*}[h]
\centering
\subfloat[Case 1]{%
    \includegraphics[width=0.32\textwidth]{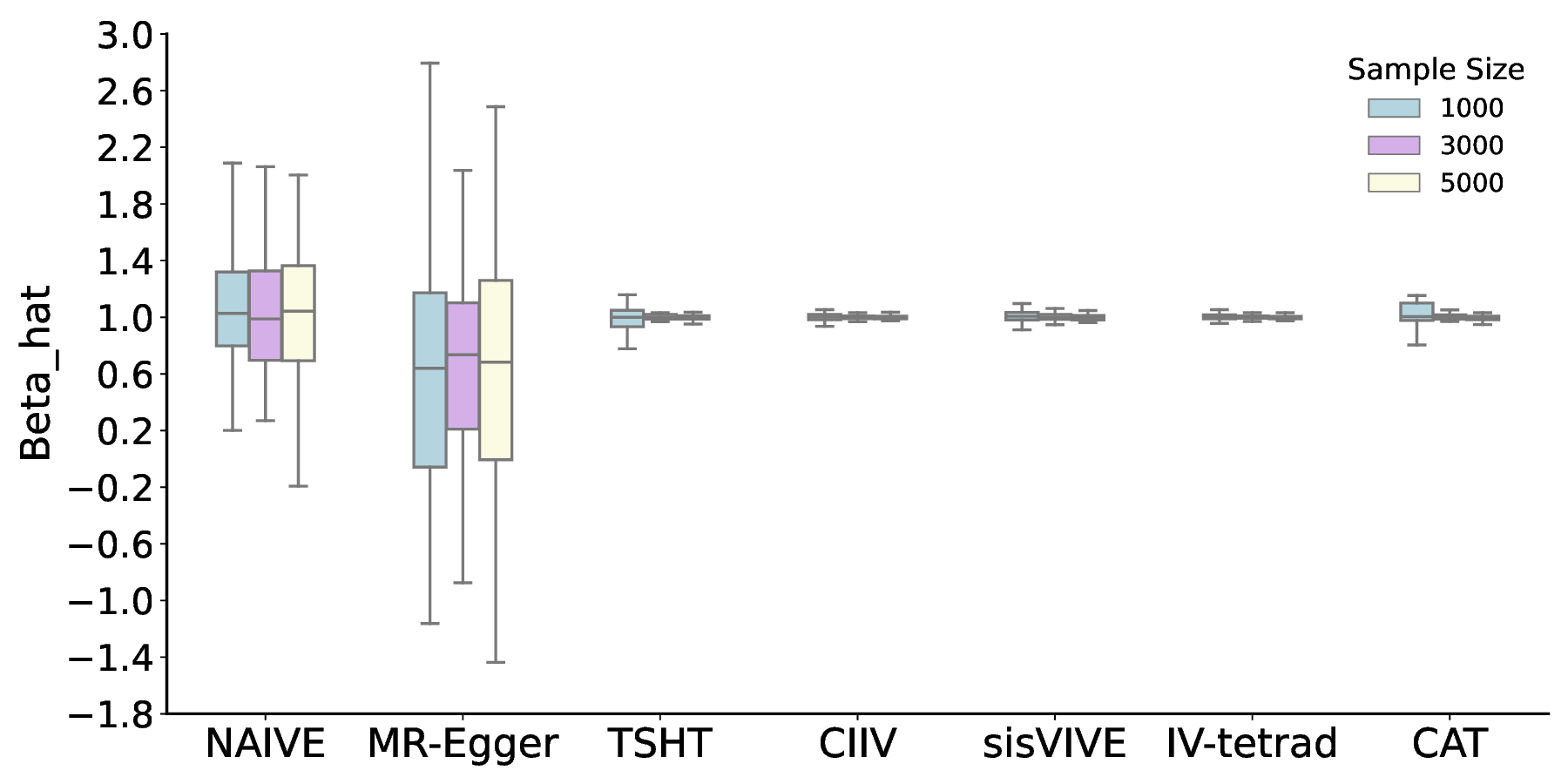}%
    \label{fig:Case1_linear}}
\hfil
\subfloat[Case 2]{%
    \includegraphics[width=0.32\textwidth]{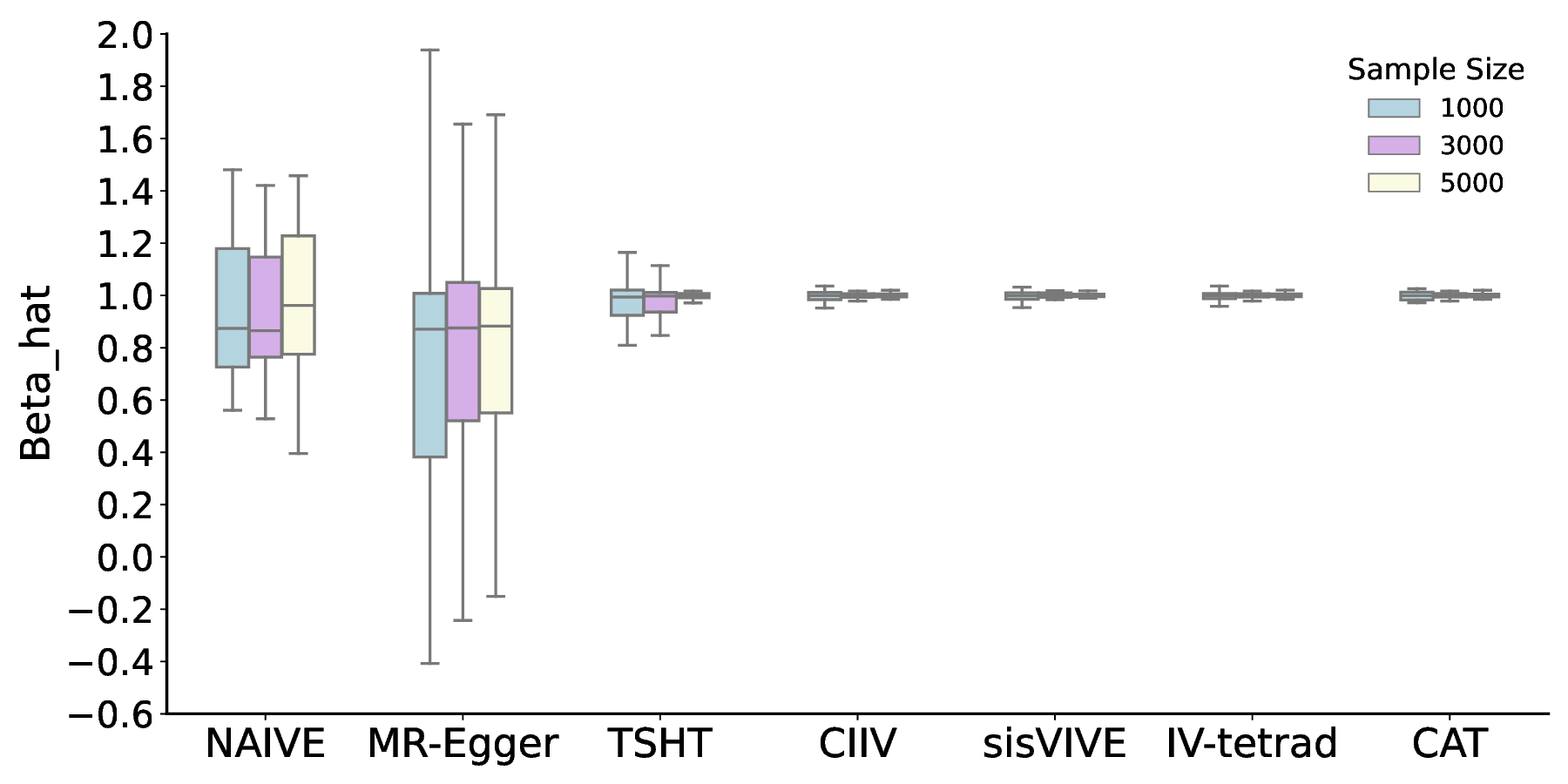}%
    \label{fig:Case2_linear_uniform}}
\hfil
\subfloat[Case 3]{%
    \includegraphics[width=0.32\textwidth]{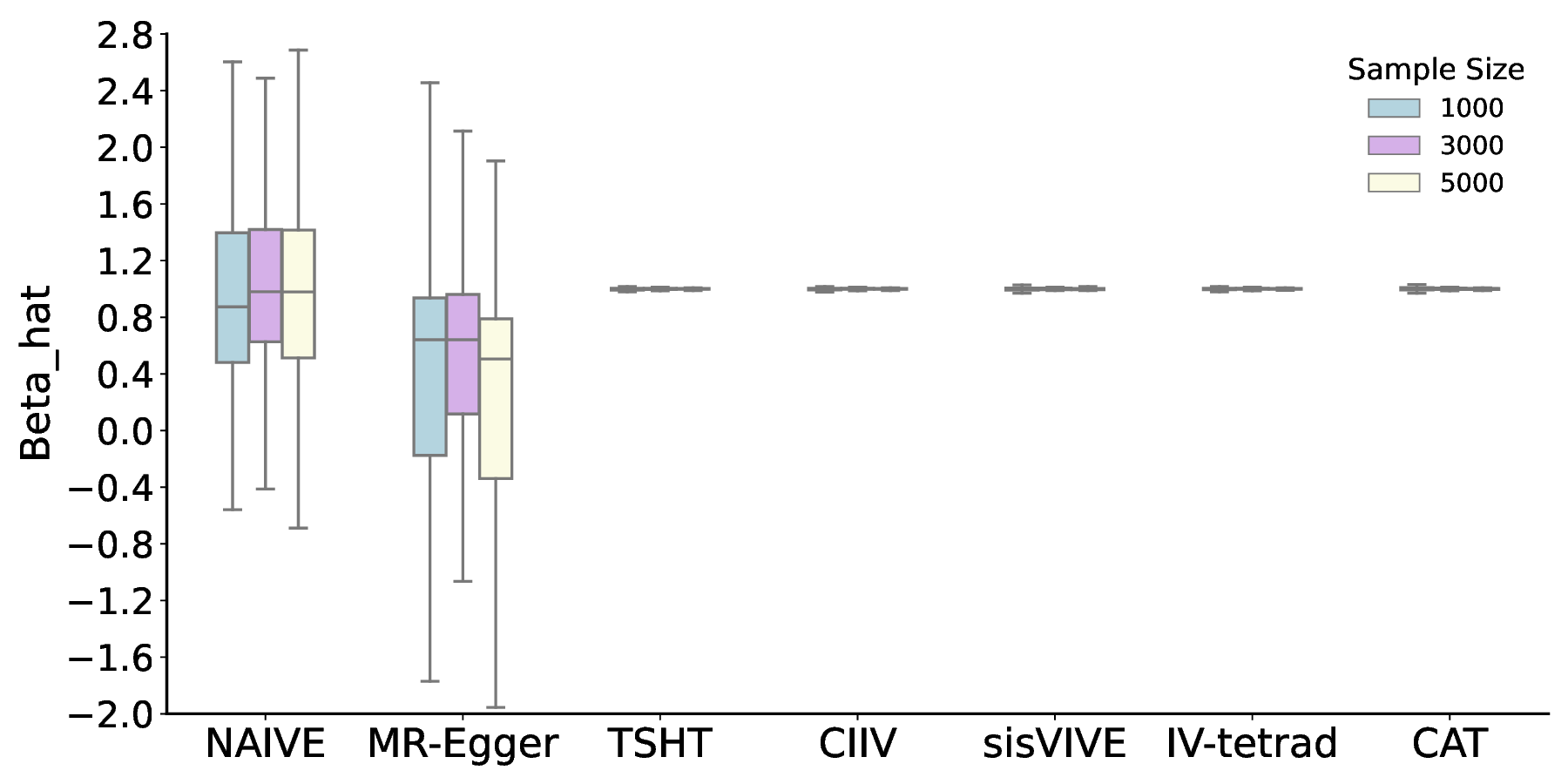}%
    \label{Case3_linear_uniform}}

\caption{Performance of NAIVE, MR-Egger, TSHT, CIIV, sisVIVE, IV-tetrad, and CAT across three different cases in the ALICE model.}
\label{fig:compare-method-constant-linear-model}
\end{figure*}

\noindent\textbf{Results.} 
The results are presented in Figures~
\ref{fig:compare-method-constant-nonlinear-model}--\ref{fig:compare-method-constant-linear-model}. Figures~\ref{fig:compare-method-constant-nonlinear-model} and \ref{fig:compare-method-constant-covariates-model} report the results under nonlinear structural components with constant treatment effects, without and with covariates, respectively. As expected, the proposed CAT algorithm consistently outperforms the other methods across all three cases and sample sizes, exhibiting relatively small variance and producing estimates closest to the true causal effect. In contrast, the NAIVE method performs poorly in all cases due to unmeasured confounders $\mathbf{U}$. We observed that the other comparison methods perform poorly across all cases because they rely on the assumption of linearity, whereas the data generation process is partially nonlinear. Additionally, we found that the MR-Egger algorithm yields inaccurate results. A possible reason for this is that, in addition to the linearity assumption, this method requires the InSIDE assumption, which states that the instruments' pleiotropic effects on the outcome $Y$ are uncorrelated with their effects on the treatment $X$. 
Furthermore, Figure~\ref{fig:compare-method-constant-linear-model} reports the results under the ALICE model. In this linear constant-effect setting, CAT performs well across all three cases and yields estimates close to the true causal effect. Its performance is comparable to TSHT, CIIV, sisVIVE, and IV-tetrad, which is expected because these methods are designed for linear constant-effect settings. In contrast, NAIVE and MR-Egger exhibit larger variability or bias across all cases.

\subsection{Synthetic Data under Non-Constant Effects}\label{sec_synthetic_non_constant}

We next evaluate CAT under the CAM-NCE framework, where $f(\cdot)$ is nonlinear, and the causal effect varies with the treatment level.
Existing IV-selection methods considered above rely on linear or constant-effect specifications and are not designed for the CAM-NCE setting. Applying them here would therefore evaluate the methods under model misspecification rather than provide a meaningful comparison of IV-set identification. To the best of our knowledge, no existing method provides a directly comparable IV-set identification procedure under the CAM-NCE framework considered here. We thus focus on systematically evaluating CAT across a range of non-constant-effect settings.
Specifically, we assess the performance of CAT from three perspectives: 
(i) varying the causal effect function from \(X\) to \(Y\), including \(\log\), \(\sin\), \(\cos\), \(\mathrm{quadratic\ polynomial}\), \(\mathrm{cubic\ polynomial}\), \(\log(\mathrm{quad})\), and \(\exp(\mathrm{quad})\), with \(K=2\) valid IVs among five candidate IVs; 
(ii) varying the number of valid IVs, \(K\in\{2,3,5\}\), among ten candidate IVs; and 
(iii) varying the number of covariates, \(|\mathbf W|\in\{2,3,5\}\), with \(K=2\) valid IVs among five candidate IVs.

\noindent\textbf{Metrics.} We evaluate IV-set identification performance using the Error Rate (ER). 
A trial is regarded as successful only when the IV set identified by the method exactly matches the true valid IV set. Accordingly, ER is defined as the proportion of trials for which the identified IV set differs from the true valid IV set, with a lower ER indicating better identification performance. 

\begin{table}[!h]
\centering
\small
\caption{Performance of IV-validity Testing under Different Causal Effect Functions in the CAM-NCE Framework}
\setlength{\tabcolsep}{5pt} 
\begin{tabular}{ccccc}
\toprule
 & \multicolumn{1}{c}{Sample sizes} & Size=1000 & Size=3000 & Size=5000 \\
\cmidrule(lr){2-2} \cmidrule(lr){3-5}
Cases & Functions $f(X)$ 
& \textbf{ER}$\boldsymbol{\downarrow}$ & \textbf{ER}$\boldsymbol{\downarrow}$ & \textbf{ER}$\boldsymbol{\downarrow}$ \\
\midrule

% -------- Case 1 -------- %
\multirow{7}{*}{Case 1}
& Log                  & 0.00 & 0.00 & 0.00 \\
& Sin                  & 0.00 & 0.00 & 0.00 \\
& Cos                  & 0.01 & 0.00 & 0.00 \\
& Quadratic Poly       & 0.00 & 0.00 & 0.00 \\
& Cubic Poly           & 0.00 & 0.00 & 0.00 \\
& Log(quad)            & 0.00 & 0.00 & 0.00 \\
& Exp(quad)            & 0.00 & 0.00 & 0.00 \\
\midrule

% -------- Case 2 -------- %
\multirow{7}{*}{Case 2}
& Log                  & 0.00 & 0.00 & 0.00 \\
& Sin                  & 0.00 & 0.00 & 0.00 \\
& Cos                  & 0.00 & 0.00 & 0.00 \\
& Quadratic Poly       & 0.00 & 0.00 & 0.00 \\
& Cubic Poly           & 0.00 & 0.00 & 0.00 \\
& Log(quad)            & 0.00 & 0.00 & 0.00 \\
& Exp(quad)            & 0.00 & 0.00 & 0.00 \\
\midrule

% -------- Case 3 -------- %
\multirow{7}{*}{Case 3}
& Log                  & 0.00 & 0.00 & 0.00 \\
& Sin                  & 0.00 & 0.00 & 0.00 \\
& Cos                  & 0.00 & 0.00 & 0.00 \\
& Quadratic Poly       & 0.00 & 0.00 & 0.00 \\
& Cubic Poly           & 0.00 & 0.00 & 0.00 \\
& Log(quad)            & 0.04 & 0.00 & 0.00 \\
& Exp(quad)            & 0.00 & 0.00 & 0.00 \\
\bottomrule
\end{tabular}

\vspace{2pt}
\begin{tablenotes}
\footnotesize
\item Note: ``Quadratic Poly/quad'' and ``Cubic Poly'' denote quadratic and cubic polynomial functions, respectively. $\downarrow$ indicates that lower values are better.

\end{tablenotes}

\label{Table-nonconstant}
\end{table}

\begin{table}[!h]
\centering
\small
\caption{Performance of IV-validity Testing for Different Numbers \(K\) of Valid IVs in the CAM-NCE Framework}
\begin{tabular}{ccccc}
\toprule
& \multicolumn{1}{c}{Sample sizes} & Size=1000 & Size=3000 & Size=5000 \\
\cmidrule(lr){2-2} \cmidrule(lr){3-5}
$K$ & Cases 
& \textbf{ER}$\boldsymbol{\downarrow}$ 
& \textbf{ER}$\boldsymbol{\downarrow}$ 
& \textbf{ER}$\boldsymbol{\downarrow}$ \\
\midrule

% -------- |W| = 2 -------- %
\multirow{3}{*}{2}
& Case 1 & 0.04  & 0.00  & 0.00 \\
& Case 2 & 0.00  & 0.00  & 0.00 \\
& Case 3 & 0.01  & 0.01  & 0.00 \\
\midrule

% -------- |W| = 3 -------- %
\multirow{3}{*}{3}
& Case 1 & 0.01  & 0.00  & 0.00 \\
& Case 2 & 0.00  & 0.00  & 0.00 \\
& Case 3 & 0.00  & 0.00  & 0.00 \\
\midrule

% -------- |W| = 5 -------- %
\multirow{3}{*}{5}
& Case 1 & 0.00  & 0.00  & 0.00 \\
& Case 2 & 0.00  & 0.00  & 0.00 \\
& Case 3 & 0.00  & 0.00  & 0.00 \\

\bottomrule
\end{tabular}

\vspace{2pt}
\begin{tablenotes}
\footnotesize
\item Note: $K$ denotes the number of valid IVs to consider. $\downarrow$ indicates that lower values are better.

\end{tablenotes}

\label{Table-nonconstant-K}
\end{table}

\begin{table}[h]
\centering
\small
\caption{Performance of IV-validity Testing with Covariates in the CAM-NCE Framework}
\begin{tabular}{ccccc}
\toprule
 & \multicolumn{1}{c}{Sample sizes} & Size=1000 & Size=3000 & Size=5000 \\
\cmidrule(lr){2-2} \cmidrule(lr){3-5}
$|\mathbf{W}|$ & Cases 
& \textbf{ER}$\boldsymbol{\downarrow}$ 
& \textbf{ER}$\boldsymbol{\downarrow}$ 
& \textbf{ER}$\boldsymbol{\downarrow}$ \\
\midrule

% -------- |W| = 2 -------- %
\multirow{3}{*}{2}
& Case 1 & 0.00 & 0.00 & 0.00 \\
& Case 2 & 0.08 & 0.07 & 0.05 \\
& Case 3 & 0.04 & 0.04 & 0.01 \\
\midrule

% -------- |W| = 3 -------- %
\multirow{3}{*}{3}
& Case 1 & 0.02 & 0.00 & 0.00 \\
& Case 2 & 0.10 & 0.09 & 0.08 \\
& Case 3 & 0.14 & 0.10 & 0.04 \\
\midrule

% -------- |W| = 5 -------- %
\multirow{3}{*}{5}
& Case 1 & 0.00 & 0.00 & 0.00 \\
& Case 2 & 0.14 & 0.10 & 0.08 \\
& Case 3 & 0.15 & 0.12 & 0.10 \\

\bottomrule
\end{tabular}

\vspace{2pt}
\begin{tablenotes}
\footnotesize
\item Note: $|\mathbf{W}|$ denotes the number of covariates. $\downarrow$ indicates that lower values are better.

\end{tablenotes}

\label{Table-nonconstant-covariates}
\end{table}

\noindent\textbf{Results.} The results are summarized in Tables~\ref{Table-nonconstant}--\ref{Table-nonconstant-covariates}.
Overall, the proposed method achieves consistently low error rates in all considered settings, suggesting that it can reliably identify the valid IV set under non-constant causal effects. 
Table~\ref{Table-nonconstant} reports the results for different forms of the causal effect function from \(X\) to \(Y\). 
Across logarithmic, trigonometric, polynomial, and exponential functions, the ER is close to zero for all sample sizes, indicating that the proposed method is insensitive to the specific functional form of the non-constant causal effect. 
Table~\ref{Table-nonconstant-K} further examines the influence of the number of valid IVs in the candidate IV set. 
The ER remains low for different choices of \(K\) and generally decreases as the sample size increases, showing that the method is stable across different proportions of valid IVs. 
This is in contrast to methods relying on the \emph{Majority Rule} or \emph{Plurality Rule} assumptions (see Section~\ref{Sec-introduction}), which require restrictions on the proportion of valid IVs. 
Finally, Table~\ref{Table-nonconstant-covariates} presents the results when covariates \(\mathbf W\) are included. 
Although the ER slightly increases as the number of covariates grows, it decreases with larger sample sizes. This suggests that the proposed method remains effective in the presence of covariates, while benefiting from increased sample size.

\section{Applications to Real-world Data}\label{Section-real-world}

In this section, we apply CAT to three real-world datasets spanning sociology, economics, and behavioral science to examine its practical utility. Since the ground-truth validity of candidate IVs is not directly observable in real-world applications, we use the IV specifications proposed in prior studies as reference benchmarks and compare the conclusions obtained by CAT with those reported in the literature. We report both the CAT-based distance-correlation scores and the corresponding p-values from distance-correlation independence tests.
% , where $H_0: A \CI B$ is tested against $H_1: A \nCI B$. 
Here, the significance level is set to $\alpha=10/n$, where $n$ denotes the sample size used in the test.

\subsection{{Colonial Origins Data \citep{acemoglu2001colonial}}}
This dataset examines the impact of social systems on economic development. 
After excluding observations with missing values, it contains five key variables across 63 countries: \emph{Mortality} $(M_{or})$, \emph{Euro1990} $(E_{uro})$, \emph{Latitude} $(L_{at})$, \emph{Institutions} $(I_{ns})$, and \emph{Economic Development} $(E_d)$. 
We take the IV specification in~\cite{acemoglu2001colonial} as a reference benchmark and evaluate it under the constant-effect setting. 
The hypothesized model proposed by~\cite{acemoglu2001colonial} is illustrated in Figure~\ref{Fig:real-data-example2}, and the hypothesized data generation mechanism is described as follows: 
\begin{equation}\nonumber
    \begin{aligned}
        I_{ns} &= \alpha_0 + \alpha_1 M_{or} + \alpha_2 E_{uro}+ \alpha_3 L_{at} + \delta,\\
        E_d &= \beta_0 + \beta_1 I_{ns} + \beta_2 L_{at} + \epsilon,
    \end{aligned}
\end{equation}
where $\delta$ and $\epsilon$ are dependent. 
Specifically, we assess the candidate IV set $\{M_{or},E_{uro}\}$ for the causal relation $I_{ns}\to E_d$, conditioning on the covariate $L_{at}$.
\begin{figure}[h]
\centering
\begin{tikzpicture}[
    node distance=1.0cm and 1.4cm,
    line width=0.8pt,
    >=latex,
    font=\footnotesize,
    ivblue/.style={
        rectangle,
        rounded corners=2pt,
        draw=black!65,
        fill=blue!8,
        text width=1.05cm,
        align=center,
        minimum height=0.5cm
    },
    ivyellow/.style={
        rectangle,
        rounded corners=2pt,
        draw=black!65,
        fill=yellow!18,
        text width=1.05cm,
        align=center,
        minimum height=0.5cm
    },
    mainnode/.style={
        rectangle,
        rounded corners=2pt,
        draw=black!65,
        fill=white,
        text width=1.05cm,
        align=center,
        minimum height=0.5cm
    },
    uinode/.style={
        rectangle,
        rounded corners=2pt,
        draw=black!65,
        fill=gray!10,
        text width=2cm,
        align=center,
        minimum height=1cm
    },
    solidarrow/.style={draw=black, ->}
]

% ---------------- Nodes ----------------

% Latitude
\node[mainnode] (Lat) at (-3.0,0) {$L_{at}$};

% IVs
\node[ivyellow] (Euro) at (-1.25,0.8) {$E_{uro}$};
\node[ivblue]   (Mor)  at (-1.25,1.7) {$M_{or}$};

% Treatment and outcome
\node[mainnode] (Ins) at (0.8,0) {$I_{ns}$};
\node[mainnode] (Ed)  at (3.0,0) {$E_d$};

% Unmeasured confounder
\node[uinode] (U) at (1.9,1.55)
{\emph{Cultural}\\\emph{difference}};

% ---------------- Edges ----------------

% Latitude -> IVs
\draw[solidarrow] (Lat) -- (Euro);
\draw[solidarrow] (Lat) -- (Mor);

% IVs -> institutions
\draw[solidarrow] (Euro) -- (Ins);
\draw[solidarrow] (Mor) -- (Ins);

% Latitude -> institutions
\draw[solidarrow] (Lat) -- (Ins);

% Institutions -> development
\draw[solidarrow] (Ins) -- (Ed);

% Unmeasured confounder
\draw[solidarrow] (U) -- (Ins);
\draw[solidarrow] (U) -- (Ed);

% Latitude -> development
\draw[solidarrow, bend right=13]
    (Lat.south) to (Ed.south);

\end{tikzpicture}

\caption{Graphical illustration of an IV model for estimating the causal effect
of institutions ($I_{ns}$) on economic development ($E_d$)
\citep{acemoglu2001colonial}.}
\label{Fig:real-data-example2}
\end{figure}

\noindent\textbf{Results.}
Since there are only two candidate IVs, we apply CAT directly to this pair and obtain a CAT-based distance-correlation score of $0.50$. 
We then conduct distance-correlation independence tests for the two directed CAT relations, namely between $\mathcal{V}_{M_{or}}$ and $E_{uro}$, and between $\mathcal{V}_{E_{uro}}$ and $M_{or}$, obtaining $p$-values of $0.21$ and $0.25$, respectively. 
Therefore, we do not reject the corresponding CAT independence relations, providing no evidence against $\{M_{or},E_{uro}\}$ as a valid IV set for $I_{ns}\to E_d$. 
This result is consistent with the findings of~\cite{acemoglu2001colonial}.

\subsection{\textbf{Children and Mothers' Labor Supply Data~\citep{angrist1996children}}}
This dataset comes from an empirical study on the effect of childbearing on mothers' labor supply. 
After applying the filtering criteria, it contains 254,652 observations. 
We use \emph{more than two children} $(\textit{morekids})$ as the treatment and \emph{weeks worked} $(\textit{weeksm1})$ as the outcome. 
The candidate IVs include \emph{two boys} $(\textit{boys2})$, \emph{two girls} $(\textit{girls2})$, \textit{AGEQK}, \textit{AGEQ2ND}, \textit{KIDCOUNT}, \textit{YOBM}, \textit{nonmomil}, \textit{educm}, \textit{hsormore}, \textit{nonmomi}, \textit{ageqm}, and \textit{agefstd}. 
The covariates $\mathbf W$ include \emph{mother's age at first birth} $(\textit{agem1})$, \emph{father's age at first birth} $(\textit{agefstm})$, \emph{whether the first child is a boy} $(\textit{boy1st})$, \emph{whether the second child is a boy} $(\textit{boy2nd})$, \emph{black mother indicator} $(\textit{blackm})$, \emph{Hispanic mother indicator} $(\textit{hispm})$, and \emph{other-race mother indicator} $(\textit{othracem})$.
We take the IV specification in~\cite{angrist1996children} as a reference benchmark and evaluate it under the constant-effect setting. 
% In particular, Angrist and Evans~\cite{angrist1996children} used the sex composition of the first two children, including \textit{boys2} and \textit{girls2}, as instruments for childbearing. 
Due to the large sample size and the quadratic computational cost of distance correlation, we randomly subsample $5\%$ of the data and average the results over 10 repeated tests. The valid IVs hypothesized model proposed by~\cite{angrist1996children} is illustrated in Figure~\ref{Fig:real-data-example}, and the valid IVs hypothesized data generation mechanism is described as follows: 
\begin{equation}\nonumber
    \begin{aligned}
        morekids &= \gamma_0 boys2 + \gamma_1 girls2 + \boldsymbol{\gamma_2}^\top \mathbf{W} + \delta,\\
        weeksm1 &= \beta \cdot morekids + \boldsymbol{\beta_1}^\top (\mathbf{W}\setminus \{boy2nd\}) + \epsilon, 
    \end{aligned}
\end{equation}
where $\delta$ and $\epsilon$ are dependent, $\mathbf{W}\setminus \{boy2nd\}$ represents the set of all elements in covariates $\mathbf{W}$ after removing the variable $boy2nd$. 

\begin{figure}[h]
\centering
\begin{tikzpicture}[
    node distance=1.0cm and 1.4cm,
    line width=0.8pt,
    >=latex,
    font=\footnotesize,
    ivblue/.style={
        rectangle,
        rounded corners=2pt,
        draw=black!65,
        fill=blue!8,
        text width=1.3cm,
        align=center,
        minimum height=0.6cm
    },
    ivyellow/.style={
        rectangle,
        rounded corners=2pt,
        draw=black!65,
        fill=yellow!18,
        text width=1.3cm,
        align=center,
        minimum height=0.6cm
    },
    mainnode/.style={
        rectangle,
        rounded corners=2pt,
        draw=black!65,
        fill=white,
        minimum width=1.9cm,
        align=center,
        minimum height=0.6cm
    },
    uinode/.style={
        rectangle,
        rounded corners=2pt,
        draw=black!65,
        fill=gray!10,
        text width=2.3cm,
        align=center,
        minimum height=1cm
    },
    solidarrow/.style={
        draw=black,
        ->
    }
]

% ---------------- Nodes ----------------

% Covariates
\node[mainnode] (W) at (-3.0,0)
{$\mathbf{W}\setminus \{boy2nd\}$};

% \node[mainnode] (b) at (0.8,1.7)
% {${boy2nd}$};

% IVs
\node[ivyellow] (Boys) at (-1.25,0.8)
{$boys2$};

\node[ivblue] (Girls) at (-1.25,1.7)
{$girls2$};

% Treatment and outcome
\node[mainnode] (Morekids) at (0.8,0)
{$morekids$};

\node[mainnode] (Weeksm) at (3.0,0)
{$weeksm1$};

% Unobserved confounder
\node[uinode] (U) at (1.9,1.55)
{\emph{Unobserved}\\\emph{confounder}};

% ---------------- Edges ----------------

% Covariates -> IVs
\draw[solidarrow] (W) -- (Boys);
\draw[solidarrow] (W) -- (Girls);

% IVs -> treatment
\draw[solidarrow] (Boys) -- (Morekids);
\draw[solidarrow] (Girls) -- (Morekids);

% Covariates -> treatment
\draw[solidarrow] (W) -- (Morekids);

% Treatment -> outcome
\draw[solidarrow] (Morekids) -- (Weeksm);

% Unobserved confounder
\draw[solidarrow] (U) -- (Morekids);
\draw[solidarrow] (U) -- (Weeksm);

% Covariates -> outcome
\draw[solidarrow, bend right=13]
    (W.south) to (Weeksm.south);

\end{tikzpicture}

\caption{Graphical illustration of a valid IV model for estimating the causal
effect of childbearing ($morekids$) on mother's labor supply ($weeksm1$)
\citep{angrist1996children}.}
\label{Fig:real-data-example}
\end{figure}

\noindent\textbf{Results.} Using CAT with $K=2$, we find that the candidate set $\{\textit{boys2},\textit{girls2}\}$ achieves the smallest CAT-based distance-correlation score, with $\operatorname{dCor}=0.022$. 
We further conduct distance-correlation independence tests for the two directed CAT relations: between $\mathcal{V}_{\textit{boys2}}$ and $\textit{girls2}$, and between $\mathcal{V}_{\textit{girls2}}$ and $\textit{boys2}$, obtaining $p$-values of $0.32$ and $0.34$, respectively. 
Thus, we do not reject the corresponding CAT independence relations, and CAT selects $\{\textit{boys2},\textit{girls2}\}$ as the estimated valid IV set for $\textit{morekids}\to\textit{weeksm1}$. 
This result is consistent with the conclusion of~\cite{angrist1996children}.

\subsection{\textbf{Conflict and Time Preference Data \citep{voors2012violent}}}
This dataset comes from an empirical study on the effect of violent conflict on individual time preferences. 
In our analysis, we focus on the causal effect of \emph{Violence} on \emph{Patience}.
After removing observations with missing values, the dataset contains 266 observations and 15 variables, including the treatment variable \emph{Violence} $(V_{io})$, the outcome variable \emph{Patience} $(P_{at})$, two candidate IVs, \emph{Distance} $(D_{ist})$ and \emph{Altitude} $(A_{lti})$, and 11 covariates $\mathbf W$. 
The covariates include \emph{literate}, \emph{age}, \emph{sex}, \emph{total land holding per capita}, \emph{land Gini coefficient}, \emph{distance to market}, \emph{conflict over land}, \emph{ethnic homogeneity}, \emph{socioeconomic homogeneity}, \emph{population density}, and \emph{per capita total expenditure}. 
We take the IV specification in~\cite{voors2012violent} as a reference benchmark and evaluate it under the non-constant effect setting. The hypothesized model from~\cite{guo2016control} is illustrated in Figure~\ref{Fig:real-data-conflict-Preference}, and the hypothesized generation mechanism is as follows: 
\begin{equation}\nonumber
\begin{aligned}
V_{io}
&=
\alpha_0+ \alpha_1 D_{ist}+ \alpha_2 A_{lti}+ \alpha_3 D_{ist}^2+ \alpha_4 A_{lti}^2\\
&+ \alpha_5 D_{ist} \cdot A_{lti}
+ \boldsymbol{\alpha}_6^{\top}\mathbf{W}
+ \delta, \\
P_{at}
&=
\beta_0
+ \boldsymbol{\beta}_1^{\top}\mathbf{W}
+ \beta_2 V_{io}
+ \beta_3 V_{io}^2
+ \epsilon,
\end{aligned}
\end{equation}
where $\delta$ and $\epsilon$ are dependent. 

\begin{figure}[htp!]
\centering
\begin{tikzpicture}[
    node distance=1.0cm and 1.4cm,
    line width=0.8pt,
    >=latex,
    font=\footnotesize,
    ivblue/.style={
        rectangle,
        rounded corners=2pt,
        draw=black!65,
        fill=blue!8,
        text width=1.05cm,
        align=center,
        minimum height=0.6cm
    },
    ivyellow/.style={
        rectangle,
        rounded corners=2pt,
        draw=black!65,
        fill=yellow!18,
        text width=1.05cm,
        align=center,
        minimum height=0.6cm
    },
    mainnode/.style={
        rectangle,
        rounded corners=2pt,
        draw=black!65,
        fill=white,
        text width=1.05cm,
        align=center,
        minimum height=0.6cm
    },
    uinode/.style={
        rectangle,
        rounded corners=2pt,
        draw=black!65,
        fill=gray!10,
        text width=3.4cm,
        align=center,
        minimum height=1cm
    },
    solidarrow/.style={
        draw=black,
        ->
    }
]

% ---------------- Nodes ----------------

% Covariates
\node[mainnode] (W) at (-3.0,0)
{$\mathbf{W}$};

% IVs
\node[ivyellow] (Dist) at (-1.25,0.8)
{$D_{ist}$};

\node[ivblue] (Alti) at (-1.25,1.7)
{$A_{lti}$};

% Treatment and outcome
\node[mainnode] (Violence) at (0.8,0)
{$V_{io}$};

\node[mainnode] (Patience) at (3.0,0)
{$P_{at}$};

% Unobserved confounder
\node[uinode] (U) at (1.9,1.55)
{\emph{Social and Political}\\
\emph{confounder}};

% ---------------- Edges ----------------

% Covariates -> IVs
\draw[solidarrow] (W) -- (Dist);
\draw[solidarrow] (W) -- (Alti);

% IVs -> treatment
\draw[solidarrow] (Dist) -- (Violence);
\draw[solidarrow] (Alti) -- (Violence);

% Covariates -> treatment
\draw[solidarrow] (W) -- (Violence);

% Treatment -> outcome
\draw[solidarrow] (Violence) -- (Patience);

% Unobserved confounder
\draw[solidarrow] (U) -- (Violence);
\draw[solidarrow] (U) -- (Patience);

% Covariates -> outcome
\draw[solidarrow, bend right=13]
    (W.south) to (Patience.south);

\end{tikzpicture}

\caption{Graphical illustration of an IV model for estimating the causal effect
of \emph{Violence} on a person's \emph{Patience}
\citep{voors2012violent}.}
\label{Fig:real-data-conflict-Preference}
\end{figure}

\noindent\textbf{Results.} Since the dataset contains only two candidate IVs, we apply CAT directly to the pair $\{D_{ist},A_{lti}\}$ and obtain a CAT-based distance-correlation score of $\operatorname{dCor}=0.28$. 
We then conduct distance-correlation independence tests for the two directed CAT relations, namely between $\mathcal{V}_{D_{ist}}$ and $A_{lti}$, and between $\mathcal{V}_{A_{lti}}$ and $D_{ist}$, obtaining $p$-values of $0.10$ in both cases. 
Thus, we do not reject the corresponding CAT independence relations, providing no evidence against $\{D_{ist},A_{lti}\}$ as a valid IV set for $V_{io}\to P_{at}$. 
This result is consistent with the findings of~\cite{voors2012violent}.

\section{Conclusion}\label{Sec-conclusion}

In this paper, we studied the problem of testing the validity of IV sets from observational data under causal additive models with non-constant effects (CAM-NCE). 
Under the completeness condition (Assumption~\ref{Ass-completeness}), we introduced a testable necessary condition, termed the Cross Auxiliary-based Independence Test (CAT) condition, for assessing IV set validity. 
Furthermore, under the cross distributional non-degeneracy condition (Assumption~\ref{Ass-higher-order-condition}), we established a necessary and sufficient characterization of valid IV sets within the CAM-NCE framework. 
We also extended the CAT condition to settings with covariates and developed a practical finite-sample algorithm for selecting candidate IV sets that are most consistent with the CAT condition. 
Experimental results on both synthetic and real-world datasets demonstrate the effectiveness and practical utility of the proposed method.
One promising direction for future work is to extend the proposed framework to more general causal models, such as models with multiple treatment variables.

\bibliographystyle{plainnat}
 % argument is your BibTeX string definitions and bibliography database(s)
\bibliography{reference.bib}

\clearpage
% \onecolumn
\appendix

\crefalias{section}{appendix}
\crefalias{subsection}{appendix}
\crefalias{subsubsection}{appendix}

\section*{Appendix Contents}
\addcontentsline{toc}{section}{Appendix Contents}
% Start a new ToC that only includes what follows
\startcontents[appendix]
\printcontents[appendix]{}{1}{}

\clearpage
\section{Theoretical Foundations}
Before presenting the proofs, we introduce several technical facts that will be used repeatedly. 
We begin with a standard property of independent random variables from~\cite{meester2008natural}, which is used in the proof of Theorem~\ref{Theorem-Necessary-Condition-IV-set} and its covariate-adjusted extension.

\begin{Theorem}[Theorem 2.2.5 in~\cite{meester2008natural}]\label{The_function_indep} 
Let $X_1, X_2, \ldots, X_n$ be independent random variables, and for $i = 1, \ldots, n$, $g_i$ be a function $g_i: \mathbb{R} \to \mathbb{R}$. Then the random variables $g_1(X_1), g_2(X_2), \ldots, g_n(X_n)$ are also independent. 
\end{Theorem}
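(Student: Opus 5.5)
The statement is the classical fact that measurable functions of independent random variables are again independent. I will prove it directly from the definition of independence via preimages; I will read ``function'' as Borel measurable, since that is exactly what is needed for each $g_i(X_i)$ to be a random variable.

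\emph{Step 1 (reduction to preimages).} Fix Borel sets $B_1,\dots,B_n\subseteq\mathbb{R}$. For every $i$,
\[
\{g_i(X_i)\in B_i\}=\{X_i\in g_i^{-1}(B_i)\},
\]
and by measurability of $g_i$ the set $A_i\coloneqq g_i^{-1}(B_i)$ is again a Borel set.

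\emph{Step 2 (apply independence of the $X_i$).} Independence of $X_1,\dots,X_n$ means that the joint probability factorizes for every choice of Borel sets. Applying this to $A_1,\dots,A_n$ gives
\[
\Pr\Bigl(\bigcap_{i=1}^n\{g_i(X_i)\in B_i\}\Bigr)=\Pr\Bigl(\bigcap_{i=1}^n\{X_i\in A_i\}\Bigr)=\prod_{i=1}^n\Pr(X_i\in A_i)=\prod_{i=1}^n\Pr(g_i(X_i)\in B_i).
\]
Since $B_1,\dots,B_n$ were arbitrary Borel sets, this is precisely the definition of independence of $g_1(X_1),\dots,g_n(X_n)$.

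An equivalent way to phrase the argument is through generated $\sigma$-fields: $\sigma(g_i(X_i))\subseteq\sigma(X_i)$, and subfamilies of independent $\sigma$-fields are independent.

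The argument has no real difficulty. The only point that needs care is measurability of the $g_i$, which guarantees that the preimages $A_i$ are Borel sets to which the independence of the $X_i$ can be applied. The same proof carries over verbatim to vector-valued $X_i$ and to $g_i:\mathbb{R}^{d_i}\to\mathbb{R}$. This vector-valued version is the form used later for functions of disjoint blocks of variables, such as $g_Y(\widetilde{\mathbf Z})+\varphi_Y(\mathbf U)+\varepsilon_Y$ versus $Z_j$.
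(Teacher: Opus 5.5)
Your proof is correct. It is essentially the same preimage-factorization argument the paper uses. The paper cites this theorem from Meester without proof, but its proof of Lemma~\ref{lemma_multi_function_indep} pulls Borel sets back through measurable $g,h$ and factorizes in exactly the way you do, and you rightly make the Borel-measurability assumption on the $g_i$ explicit, which the statement leaves implicit.
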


We also use the following direct extension, which states that measurable functions of disjoint subsets of mutually independent random variables remain independent.

\begin{Lemma}\label{lemma_multi_function_indep}
Let \( X_1, \dots, X_n \) be independent random variables. Suppose that \( g: \mathbb{R}^k \to \mathbb{R} \) is a measurable function of \( (X_1, \dots, X_k) \), 
\( h: \mathbb{R}^{n-k} \to \mathbb{R} \) is a measurable function of \( (X_{k+1}, \dots, X_n) \), then \( g(X_1, \dots, X_k) \) and \( h(X_{k+1}, \dots, X_n) \) are independent random variables. 
\end{Lemma}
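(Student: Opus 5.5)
The statement to prove is Lemma~\ref{lemma_multi_function_indep}. The plan is to reduce it to the independence of the two random vectors $\mathbf{A}=(X_1,\dots,X_k)$ and $\mathbf{B}=(X_{k+1},\dots,X_n)$. Once that is in hand, the result follows because measurable functions of independent random vectors are independent. The argument is a vector version of Theorem~\ref{The_function_indep}.

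\textbf{Step 1 (independence of the two blocks).} We show that for all Borel sets $C\subseteq\mathbb{R}^k$ and $D\subseteq\mathbb{R}^{n-k}$,
\[
P(\mathbf{A}\in C,\mathbf{B}\in D)=P(\mathbf{A}\in C)\,P(\mathbf{B}\in D).
\]
\begin{itemize}
\item First take rectangles $C=\prod_{i\le k}C_i$ and $D=\prod_{i>k}D_i$. Mutual independence of $X_1,\dots,X_n$ gives $P(\mathbf{A}\in C,\mathbf{B}\in D)=\prod_{i\le k}P(X_i\in C_i)\prod_{i>k}P(X_i\in D_i)$.
\item Applying mutual independence again within each block, the two partial products equal $P(\mathbf{A}\in C)$ and $P(\mathbf{B}\in D)$, respectively.
\item Extend from rectangles to all Borel sets by Dynkin's $\pi$--$\lambda$ theorem.
\begin{itemize}
\item Fix a rectangle $D$. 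The class of Borel sets $C$ for which the identity holds is a $\lambda$-system, since it is closed under complements and countable disjoint unions by additivity of probability.
\item This class contains the $\pi$-system of measurable rectangles, which generates $\mathcal{B}(\mathbb{R}^k)$, so it contains all of $\mathcal{B}(\mathbb{R}^k)$.
\item Now fix an arbitrary Borel $C$ and repeat the argument in $D$.
\end{itemize}
\end{itemize}

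\textbf{Step 2 (pass to functions).} For Borel sets $E,F\subseteq\mathbb{R}$, measurability of $g$ and $h$ makes $g^{-1}(E)$ and $h^{-1}(F)$ Borel. Step~1 then gives
\[
P(g(\mathbf{A})\in E,h(\mathbf{B})\in F)=P(\mathbf{A}\in g^{-1}(E),\mathbf{B}\in h^{-1}(F))=P(g(\mathbf{A})\in E)\,P(h(\mathbf{B})\in F),
\]
which is the claimed independence.

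The only nontrivial step is the $\pi$--$\lambda$ extension in Step~1, and it is standard. An alternative is to argue via joint distribution functions or characteristic functions: the joint characteristic function factorizes over the coordinates, hence over the two blocks, and this implies block independence.
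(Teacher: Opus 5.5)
Your proposal is correct and follows essentially the same route as the paper: first establish independence of the blocks $(X_1,\dots,X_k)$ and $(X_{k+1},\dots,X_n)$, then pull back Borel sets through the measurable maps $g$ and $h$. The only difference is that you justify the block-independence step, via rectangles and the $\pi$--$\lambda$ theorem, whereas the paper simply asserts that it follows from mutual independence.
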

% \begin{proof}
% This follows by applying the same argument using independence and factoring of the joint distribution. 
% \end{proof}

\begin{proof}
Let 
\[
\mathbf{X}_1=(X_1,\dots,X_k), 
\qquad 
\mathbf{X}_2=(X_{k+1},\dots,X_n).
\]
Since \(X_1,\dots,X_n\) are mutually independent, the random vectors
\(\mathbf{X}_1\) and \(\mathbf{X}_2\) are independent. Indeed, for any Borel sets
\(A\subseteq\mathbb{R}^k\) and \(B\subseteq\mathbb{R}^{n-k}\), the independence of
\(X_1,\dots,X_n\) implies
\[
\mathbb{P}(\mathbf{X}_1\in A,\mathbf{X}_2\in B)
=
\mathbb{P}(\mathbf{X}_1\in A)\mathbb{P}(\mathbf{X}_2\in B).
\]
Now, for any Borel sets \(C,D\subseteq\mathbb{R}\), we have
\[
\begin{aligned}
&\mathbb{P}\big(g(\mathbf{X}_1)\in C,\ h(\mathbf{X}_2)\in D\big) \\
&\quad =
\mathbb{P}\big(\mathbf{X}_1\in g^{-1}(C),\ \mathbf{X}_2\in h^{-1}(D)\big).
\end{aligned}
\]
Since \(g\) and \(h\) are measurable, \(g^{-1}(C)\) and \(h^{-1}(D)\) are Borel sets. 
Using the independence of \(\mathbf{X}_1\) and \(\mathbf{X}_2\), we obtain
\[
\begin{aligned}
&\mathbb{P}\big(\mathbf{X}_1\in g^{-1}(C),\ \mathbf{X}_2\in h^{-1}(D)\big) \\
&\quad =
\mathbb{P}\big(\mathbf{X}_1\in g^{-1}(C)\big)
\mathbb{P}\big(\mathbf{X}_2\in h^{-1}(D)\big) \\
&\quad =
\mathbb{P}\big(g(\mathbf{X}_1)\in C\big)
\mathbb{P}\big(h(\mathbf{X}_2)\in D\big).
\end{aligned}
\]
Therefore, $g(X_1,\dots,X_k)$ and $h(X_{k+1},\dots,X_n)$ are independent. 
\end{proof}

Next, we recall a separability result for twice continuously differentiable functions and apply it to log-densities to connect independence with cross second-order partial derivatives. 
This result will be used in the proofs of Proposition~\ref{Pro-Test-Higher-CAM-NCE} and Theorem~\ref{Theorem-necessary-sufficient-condition-CAM-NCE}. 
\begin{Theorem}[\cite{lin1997factorizing}]\label{Theorem-lin}
    The Hessian $H_f$ of function $f$ is block diagonal everywhere, $\partial_i \partial_j f \big|_{\vec{s_0}} = 0$ for all points $\vec{s_0}$ and all $i \le  k$, $j > k$, if and only if f is separable into a sum $f(s_1,...,s_n) = g(s_1,...,s_k) + h(s_{k+1},...,s_n)$ for some functions $g$ and $h$. 
\end{Theorem}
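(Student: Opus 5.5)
The plan is to prove the two directions separately, working on a domain of product form $\Omega=\Omega_1\times\Omega_2\subseteq\mathbb{R}^k\times\mathbb{R}^{n-k}$ with $\Omega_1,\Omega_2$ convex (e.g.\ all of $\mathbb{R}^n$, which is the case needed when $f=\log p$ has full support). Throughout, $f$ is twice continuously differentiable, as in Assumption~\ref{Ass-higher-order-condition}. Write $\mathbf{s}=(\mathbf{x},\mathbf{y})$ with $\mathbf{x}=(s_1,\dots,s_k)$ and $\mathbf{y}=(s_{k+1},\dots,s_n)$.

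The ``if'' direction is immediate. If $f(\mathbf{x},\mathbf{y})=g(\mathbf{x})+h(\mathbf{y})$, then for $i\le k$ we have $\partial_i f=\partial_i g(\mathbf{x})$. This does not depend on $\mathbf{y}$, so $\partial_j\partial_i f\equiv 0$ for every $j>k$. Hence the Hessian is block diagonal everywhere.

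For the ``only if'' direction, I will fix a base point $(\mathbf{a},\mathbf{b})\in\Omega$ and define
\[
g(\mathbf{x})\coloneqq f(\mathbf{x},\mathbf{b}),\qquad h(\mathbf{y})\coloneqq f(\mathbf{a},\mathbf{y})-f(\mathbf{a},\mathbf{b}).
\]
The claim then reduces to showing that $D(\mathbf{x},\mathbf{y})\coloneqq f(\mathbf{x},\mathbf{y})-f(\mathbf{x},\mathbf{b})$ does not depend on $\mathbf{x}$. This goes in two steps.

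\begin{enumerate}
\item Show that $\nabla_{\mathbf{x}}f(\mathbf{x},\mathbf{y})$ does not depend on $\mathbf{y}$. For each $i\le k$, apply the fundamental theorem of calculus along the segment $t\mapsto(\mathbf{x},\mathbf{b}+t(\mathbf{y}-\mathbf{b}))$, which stays in $\Omega$ by convexity of $\Omega_2$:
\[
\partial_i f(\mathbf{x},\mathbf{y})-\partial_i f(\mathbf{x},\mathbf{b})=\int_0^1\sum_{j>k}\partial_j\partial_i f\big(\mathbf{x},\mathbf{b}+t(\mathbf{y}-\mathbf{b})\big)(y_j-b_j)\,dt=0,
\]
since the integrand vanishes by the block-diagonal hypothesis.
\item Conclude that $D$ depends only on $\mathbf{y}$. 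By the first step, $\nabla_{\mathbf{x}}D(\mathbf{x},\mathbf{y})=0$ for every fixed $\mathbf{y}$. Integrating along the segment from $\mathbf{a}$ to $\mathbf{x}$ in the convex set $\Omega_1$ gives $D(\mathbf{x},\mathbf{y})=D(\mathbf{a},\mathbf{y})=h(\mathbf{y})$. Therefore $f=g+h$.
\end{enumerate}

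The main obstacle is the domain, not the calculus. The segment arguments require $\Omega$ to be a product of convex (or at least path-connected) sets. On a non-product or disconnected region, vanishing mixed partials only give separability locally, and the global statement can fail. I would therefore state the result for densities that are positive and $C^2$ on a product of intervals (typically $\mathbb{R}^n$), so that both integrations are legitimate. If only path-connectedness is available, the straight segments would be replaced by piecewise-linear paths.
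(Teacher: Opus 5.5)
Your proof is correct. With $f\in C^2$ on $\Omega_1\times\Omega_2$ and each factor convex (or just open and connected), both directions go through as written. In the ``if'' direction, add one sentence noting that $g(\mathbf{x})=f(\mathbf{x},\mathbf{b})-h(\mathbf{b})$ inherits the smoothness of $f$. That makes $\partial_i f=\partial_i g$ legitimate even though the statement only asks for ``some functions'' $g,h$.

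The paper gives no proof to compare with. It quotes the result from \cite{lin1997factorizing} without domain or regularity hypotheses, and only ever uses the trivial direction. In the proofs of Propositions~\ref{Pro-violate-assumption2} and~\ref{Pro-Test-Higher-CAM-NCE}, independence gives $\log p(v,z)=\log p(v)+\log p(z)$, hence a vanishing cross partial. That use is safe, because the support of an independent pair is automatically a product of the marginal supports.

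What your self-contained argument adds is that it exposes the hypothesis the quoted statement hides, and your product-domain caveat is genuinely necessary. On the connected but non-product set $\mathbb{R}^2\setminus\{(x,0):x\ge 0\}$, take the function equal to $x^3$ on $\{x>0,\,y>0\}$ and $0$ elsewhere. It is $C^2$ with $\partial_x\partial_y f\equiv 0$, yet $f(1,1)-f(1,-1)-f(-1,1)+f(-1,-1)=1$, which is impossible for any $g(x)+h(y)$. The same product-support requirement, together with positivity of the density, is what one would need for the converse, that is, to read identically vanishing cross partials of $\log p$ as independence.
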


Theorem~\ref{Theorem-lin} implies that if the log-density of two groups of variables is additively separable, then the corresponding cross second-order partial derivatives vanish. 
% Conversely, a nonzero cross second-order partial derivative rules out such additive separability and hence rules out independence under the differentiability condition.

% The above proposition states that function $f$ is separable if and only if its mixed second-order partial derivative is zero. 

% \subsection{Proofs of Theorems, Propositions, and Corollaries}
% \label{app:proofs-main-text}

\section{Proofs}
\subsection{Proof of Theorem \ref{Theorem-Necessary-Condition-IV-set}}\label{proof-theroem-necessary}

\begin{proof}
To prove Theorem \ref{Theorem-Necessary-Condition-IV-set}, we need to show that if the candidate IV set $\mathbf{Z}'$ is a valid IV set relative to $X \to Y$ under CAM-NCE, then for any pair $\{Z_i, Z_j\} \subseteq \mathbf{Z}'$, $\{X, Y||\{Z_i, Z_j\}\}$ will satisfy the CAT condition. 

Under CAM-NCE, the data-generating process is
\begin{equation}\label{Eq_valid_iv_set}
    \begin{aligned}
        X &= g(\mathbf{Z}) + \varphi_{X}(\mathbf{U}) + \varepsilon_{X},\\
        Y &= f(X) + g_Y(\widetilde{\mathbf{Z}}) + \varphi_{Y}(\mathbf{U}) + \varepsilon_{Y},  
    \end{aligned}
\end{equation}
where $\mathbf{U} = \boldsymbol{\varepsilon_U}$.
Since $\mathbf{Z}'$ is a valid IV set, the valid IV set $\mathbf{Z}'$ is not a subset of $\widetilde{\mathbf{Z}}$, and each variable in $\mathbf{Z}'$ is generated solely from its corresponding noise term, i.e., $Z_k = \varepsilon_{Z_k}$, for any $k \in \{1, \dots, |\mathbf{Z}'|\}$. 
% $Z_i = \varepsilon_{Z_i}$, $Z_j = \varepsilon_{Z_j}$. 

Consider any pair of distinct IVs $\{Z_i,Z_j\}\subseteq\mathbf{Z}'$.  According to the definition of auxiliary variable \wrt $X \to Y$ relative to $Z_i$, 
$\mathcal{V}_{X \to Y||Z_i} \coloneqq Y - h_i(X)$, 
where $h_i(\cdot)$ satisfies $\mathbb{E}[\mathcal{V}_{X \to Y||Z_i} | Z_i] = 0 $ and $h_i(\cdot) \neq 0$. 
Because $Z_i$ is a valid IV and Assumption~\ref{Ass-completeness} holds, the standard identification argument for nonparametric IV models implies that the solution is unique and coincides with the structural response function $f(\cdot)$~\citep{newey2003instrumental,bennett2019deep,singh2019kernel}; that is, $h_i(\cdot) = f(\cdot)$.
Similarly, since $Z_j$ is also a valid IV, we have $h_j(\cdot) = f(\cdot)$.

Therefore, we can further express the auxiliary variable as:
\begin{equation}\label{Eq: Theorem-Necessary-Condition-Valid-IV-A}
\begin{aligned}
    \left\{\begin{matrix}
    \mathcal{V}_{X \to Y||Z_i} = Y -h_i(X) = g_Y(\widetilde{\mathbf{Z}}) + \varphi_Y(\mathbf{U}) + \varepsilon_{Y},\\
    \mathcal{V}_{X \to Y||Z_j} = Y - h_j(X) = g_Y(\widetilde{\mathbf{Z}}) + \varphi_Y(\mathbf{U}) + \varepsilon_{Y}.  
    \end{matrix}\right.
\end{aligned}
\end{equation} 
By Theorem \ref{The_function_indep} and its extension Lemma \ref{lemma_multi_function_indep}, if random variables are mutually independent, then any measurable functions applied to disjoint subsets of them yield independent random variables (see Theorem \ref{The_function_indep}, and Lemma~\ref{lemma_multi_function_indep} for further details). Based on this result, we next show that the auxiliary variable $\mathcal{V}_{X \to Y||Z_j}$ and $Z_i$ are statistically independent. Specifically, since all noise terms of variables are mutually independent and $Z_j \in \mathbf{Z}^{\prime} \nsubseteq \widetilde{\mathbf{Z}}$, we can obtain that $\varepsilon_{Z_j}$ is independent of $g_Y(\widetilde{\mathbf{Z}}) + \varphi_Y(\boldsymbol{\varepsilon_U}) + \varepsilon_Y$. Furthermore, combining Equations \eqref{Eq_valid_iv_set} and \eqref{Eq: Theorem-Necessary-Condition-Valid-IV-A}, we conclude that $Z_j$ is independent of $\mathcal{V}_{X \to Y||{Z_i}}$, \ie, $Z_j \CI \mathcal{V}_{X \to Y||{Z_i}}$. 

Likewise, for pairwise ($\mathcal{V}_{X \to Y||Z_j}, Z_i$), we can derive that $\mathcal{V}_{X \to Y||Z_j}$ is independent of $Z_i$, \ie, $Z_i \CI \mathcal{V}_{X \to Y||{Z_j}}$. To sum up, $\{X, Y||\{Z_i, Z_j\}\}$ always satisfies the CAT condition. 

Similarly, the same argument applies to any variable pair in $ \{Z_i,Z_j\} \subseteq \mathbf{Z}'$, implying that every such pair satisfies the CAT condition. Therefore, $\{X, Y||\mathbf{Z}'\}$ satisfies the CAT condition. 
\end{proof}

\subsection{Proof of Proposition \ref{Pro-violate-assumption2}}\label{proof-Coro-same-constant}
\begin{proof}
We prove the proposition for the candidate IV set ${\mathbf{Z}}'\subseteq \mathbf{Z}$. Since the candidate IVs in ${\mathbf{Z}}'$ violate only the exclusion restriction, they are relevant and exogenous, but directly affect the outcome. The data-generating process can be written as 
\begin{equation}\label{Eq:invalid-IV-set-general}
\begin{aligned}
    X &= g_{X}({\mathbf{Z}}') + \phi_X(\mathbf{Z}\setminus{\mathbf{Z}}')
        + \varphi_X(\mathbf U) + \varepsilon_X,\\
    Y &= f(X) + g_{Y}({\mathbf{Z}}') + \phi_Y(\widetilde{\mathbf{Z}}\setminus {\mathbf{Z}}')
        + \varphi_Y(\mathbf U) + \varepsilon_Y, 
\end{aligned}
\end{equation}
where $g_{Y}({\mathbf{Z}}')=a\cdot g_{X}({\mathbf{Z}}')+b$, and $a\neq 0$. 
Substituting this relation into the structural equation of $Y$ gives 
\[
\begin{aligned}
Y&= f(X) + a\cdot g_{X}({\mathbf{Z}}')
   + b + \phi_Y(\mathbf{Z} \setminus {\mathbf{Z}}') + \varphi_Y(\mathbf U) + \varepsilon_Y \\
&= f(X) + a\cdot \{X-\phi_X(\mathbf{Z} \setminus {\mathbf{Z}}')-\varphi_X(\mathbf U)-\varepsilon_X\}\\
   & \quad+ b + \phi_Y(\mathbf{Z} \setminus {\mathbf{Z}}') + \varphi_Y(\mathbf U) + \varepsilon_Y \\
&= f'(X) - a\cdot \{\phi_X(\mathbf{Z}\setminus {\mathbf{Z}}')+\varphi_X(\mathbf U)+\varepsilon_X\} \\
   & \quad + b + \phi_Y(\mathbf{Z} \setminus {\mathbf{Z}}')
   + \varphi_Y(\mathbf U) + \varepsilon_Y,
\end{aligned}
\]
where $f'(X)=f(X)+a\cdot X$. 

Now construct an alternative structural representation with $Z_i'=Z_i$ for $i \in [|\mathbf{Z}|]$, \ie, $\mathbf{Z}_{alter}=\mathbf{Z}$, $X'=X$, and 
\begin{equation}
    \begin{aligned}\nonumber
Y'=&f'(X')- a\cdot \{\phi_X(\{\mathbf{Z}\setminus {\mathbf{Z}}'\}_{alter})+\varphi_X(\mathbf U)+\varepsilon_X\} \\
   & + b + \phi_Y(\{\mathbf{Z} \setminus {\mathbf{Z}}'\}_{alter})
   + \varphi_Y(\mathbf U) + \varepsilon_Y.
    \end{aligned}
\end{equation}
Then $Y'=Y$, and hence 
\[
    (X,Y,Z_1,\ldots,Z_{|\mathbf{Z}|})
    \overset{d}{=}
    (X',Y',Z_1',\ldots,Z_{|\mathbf{Z}|}') .
\]
Furthermore, we have 
\[
    (X,Y,\mathbf{Z}')
    \overset{d}{=}
    (X',Y',\mathbf{Z}_{alter}').
\]
In this alternative representation, the variables $\mathbf{Z}_{alter}'$ affect $Y'$ only through $X'$ and have no direct effects on $Y'$. Therefore, $\mathbf{Z}_{alter}'$ forms a valid IV set with respect to $X'\to Y'$. 

By Theorem~\ref{Theorem-Necessary-Condition-IV-set}, the CAT condition holds for the alternative representation. For any pair $\{Z_i',Z_j'\} \subseteq \mathbf{Z}_{alter}'$ with $i\neq j$, the corresponding auxiliary variable is 
\begin{equation}\nonumber
    \begin{aligned}
        \mathcal{V}_{X'\to Y'\|Z_i'} &= Y'-f'(X') \\
        &= \phi_Y(\{\mathbf{Z}\setminus {\mathbf{Z}}'\}_{alter}) + \varphi_Y(\mathbf U) + \varepsilon_Y  + b\\
        &\quad - a\{\phi_X(\{\mathbf{Z}\setminus {\mathbf{Z}}'\}_{alter})+\varphi_X(\mathbf U)+\varepsilon_X\}.
    \end{aligned}
\end{equation}

Since each $Z_j'$ is independent of $\mathbf U$, $\varepsilon_X$, $\varepsilon_Y$, and other variables $\{\mathbf{Z}\setminus\mathbf{Z}'\}_{alter}$, it follows from Lemma \ref{lemma_multi_function_indep} that 
\[ \mathcal V_{X'\to Y'\|Z_i'} \CI Z_j', \qquad \forall i\neq j. \] 
Consequently, $p(\mathcal V_{X'\to Y'\|Z_i'},Z_j') = p(\mathcal V_{X'\to Y'\|Z_i'}) \,p(Z_j')$, which implies 
\[
\log p(\mathcal V_{X'\to Y'\|Z_i'}, Z_j') = \log p(\mathcal V_{X'\to Y'\|Z_i'}) + \log p(Z_j').
\]
Therefore, 
\[ \frac{\partial^2 \log p(\mathcal V_{X'\to Y'\|Z_i'},Z_j')} {\partial \mathcal V_{X'\to Y'\|Z_i'}\partial Z_j'} =0. \] 
Because the original and alternative representations induce the same observational distribution, we also have \[ \frac{\partial^2 \log p(\mathcal V_{X\to Y\|Z_i},Z_j)} {\partial \mathcal V_{X\to Y\|Z_i}\partial Z_j} =0, \qquad \forall i\neq j. \]
The same argument applies after exchanging $Z_i$ and $Z_j$. Consequently, all the cross second-order partial derivatives are zero. As a result, $\{X,Y||\mathbf{Z}'\}$ violates 
Assumption~\ref{Ass-higher-order-condition}. 
\end{proof}

\subsection{Proof of Proposition \ref{Pro-Test-Higher-CAM-NCE}}\label{prooof-Pro-Test-Higher-CAM-NCE}

\begin{proof}
Suppose that the candidate IV set $\mathbf{Z}'$ is invalid. 
By Assumption~\ref{Ass-higher-order-condition}, there exists a pair of distinct candidate IVs $\{Z_i,Z_j\}\subseteq\mathbf{Z}'$ such that the joint densities 
$p(\mathcal{V}_{X\to Y\|Z_i},Z_j)$ and 
$p(\mathcal{V}_{X\to Y\|Z_j},Z_i)$ are twice continuously differentiable, and at least one of the following cross second-order partial derivatives is nonzero on a set with nonzero Lebesgue measure:
\[
\frac{\partial^2 \log p(\mathcal{V}_{X\to Y\|Z_i},Z_j)}
{\partial \mathcal{V}_{X\to Y\|Z_i}\,\partial Z_j}
\quad \text{or} \quad
\frac{\partial^2 \log p(\mathcal{V}_{X\to Y\|Z_j},Z_i)}
{\partial \mathcal{V}_{X\to Y\|Z_j}\,\partial Z_i}.
\]

We prove the result by contradiction. 
Assume that $\{X,Y\|\mathbf{Z}'\}$ satisfies the CAT condition. 
Then, for every pair of distinct IVs $\{Z_i,Z_j\}\subseteq\mathbf{Z}'$, we have
\[
    \mathcal{V}_{X\to Y\|Z_i}\CI Z_j,
    \qquad
    \mathcal{V}_{X\to Y\|Z_j}\CI Z_i.
\]
In particular, for the pair specified by Assumption~\ref{Ass-higher-order-condition}, the first independence relation implies
\[
p(\mathcal{V}_{X\to Y\|Z_i},Z_j)
=
p(\mathcal{V}_{X\to Y\|Z_i})p(Z_j).
\]
Taking logarithms yields the additive decomposition
\[
\log p(\mathcal{V}_{X\to Y\|Z_i},Z_j)
=
\log p(\mathcal{V}_{X\to Y\|Z_i})+\log p(Z_j).
\]
Therefore, the corresponding cross second-order partial derivative must vanish:
\[
\frac{\partial^2 \log p(\mathcal{V}_{X\to Y\|Z_i},Z_j)}
{\partial \mathcal{V}_{X\to Y\|Z_i}\,\partial Z_j}
=
0.
\]
Similarly, the second independence relation 
$\mathcal{V}_{X\to Y\|Z_j}\CI Z_i$ implies
\[
\frac{\partial^2 \log p(\mathcal{V}_{X\to Y\|Z_j},Z_i)}
{\partial \mathcal{V}_{X\to Y\|Z_j}\,\partial Z_i}
=
0.
\]
Thus, both cross second-order partial derivatives vanish, which contradicts Assumption~\ref{Ass-higher-order-condition}. 
Hence, $\{X,Y\|\mathbf{Z}'\}$ cannot satisfy the CAT condition. 
Therefore, if $\mathbf{Z}'$ is invalid, then $\{X,Y\|\mathbf{Z}'\}$ violates the CAT condition.
\end{proof}

\subsection{Proof of Theorem \ref{Theorem-necessary-sufficient-condition-CAM-NCE}}\label{proof-Theorem-necessary-sufficient-condition-CAM-NCE}
\begin{proof}
We prove the necessary and sufficient characterization of valid IV sets under CAM-NCE by establishing the following two implications.

(i): Assume the candidate IV set $\mathbf{Z}'$ is a valid IV set relative to $X \to Y$. By Theorem \ref{Theorem-Necessary-Condition-IV-set}, under Assumption \ref{Ass-completeness}, it directly follows that if the candidate IV set $\mathbf{Z}'$ is a valid IV set relative to $X \to Y$, then $\{X, Y||\mathbf{Z}'\}$ always satisfies the CAT condition.

(ii): Assume the candidate IV set $\mathbf{Z}'$ is an invalid IV set relative to $X \to Y$. By Proposition \ref{Pro-Test-Higher-CAM-NCE}, under Assumptions \ref{Ass-completeness} and \ref{Ass-higher-order-condition}, if the candidate IV set $\mathbf{Z}'$ is invalid, then $\{X, Y||\mathbf{Z}'\}$ consequently violates the CAT condition. 

Combining (i) and (ii), we conclude that $\mathbf{Z}'$ is a valid IV set w.r.t. $X \to Y$ if and only if $\{X,Y\|\mathbf{Z}'\}$ satisfies the CAT condition.
\end{proof}

\subsection{Proof of Corollary \ref{Corollary-Necessary-Condition-IV-set-covariates}}\label{proof-Corollary-Necessary-Condition-IV-set-covariates}

\begin{proof}
Suppose that $\mathbf{Z}'$ is a valid IV set w.r.t. $X \to Y$ given $\mathbf{W}$. 
Under CAM-NCE with covariates, the data-generating mechanism can be written as
\begin{equation}\label{Eq:CAM-NCE-covariates}
\begin{aligned}
    \mathbf{U} &= \boldsymbol{\varepsilon_U}, 
    \qquad 
    \mathbf{W} = t_W(\mathbf{PA}_{\mathbf{W}}) + \boldsymbol{\varepsilon_W},  \\ 
    \mathbf{Z}' &= \psi_{\mathbf{Z}'}(\mathbf{Z}')+ t_{\mathbf{Z}'}(\mathbf{W})  + \varepsilon_{\mathbf{Z}'}, \\
    X &= g_X(\mathbf{Z}) + t_X(\mathbf{W}) + \varphi_X(\mathbf{U}) + \varepsilon_X, \\ 
    Y &= f(X) + t_Y(\mathbf{W}) + g_Y(\widetilde{\mathbf{Z}}) + \varphi_Y(\mathbf{U}) + \varepsilon_Y,
\end{aligned}
\end{equation}
where $\mathbf{PA}_{\mathbf{W}}$ denotes the set of parent variables of $\mathbf{W}$, and $\psi_{\mathbf{Z}'}$ denotes the causal relationship among variables in $\mathbf{Z}'$. 
Since $\mathbf{Z}'$ is valid given $\mathbf{W}$, no variable in $\mathbf{Z}'$ directly affects $Y$; equivalently, 
\[
    \mathbf{Z}' \cap \widetilde{\mathbf{Z}} = \emptyset .
\]

Let $\mathcal{X}$, $\mathcal{Y}$, and $\boldsymbol{\mathcal{Z}}'$ denote the residuals obtained by regressing $X$, $Y$, and $\mathbf{Z}'$ on $\mathbf{W}$, respectively. 
Under the additive covariate structure in Equation \eqref{Eq:CAM-NCE-covariates} and the regression adjustment in Definition~\ref{Definition-CAT-condition-set-covariates}, the residualized variables remove the effects of $\mathbf{W}$ and follow the same structural form as the covariate-free CAM-NCE model. 
Moreover, because $\mathbf{Z}'$ is a valid IV set w.r.t. $X\to Y$ given $\mathbf{W}$, the residualized candidate IV set $\boldsymbol{\mathcal{Z}}'$ satisfies the corresponding relevance, exclusion restriction, and exogeneity conditions in the residualized system. 

Therefore, applying Theorem~\ref{Theorem-Necessary-Condition-IV-set} to the residualized variables yields that every pair of distinct residualized IVs in $\boldsymbol{\mathcal{Z}}'$ satisfies the CAT condition. 
Equivalently, $\{X,Y\|(\mathbf{Z}',\mathbf{W})\}$ satisfies the CAT condition.
\end{proof}

\subsection{Proof of Corollary \ref{Corollary-necessary-sufficient-condition-CAM-NCE-covariates}}\label{proof-Corollary-necessary-sufficient-condition-CAM-NCE-covariates}

\begin{proof}
Let $\mathcal{X}$, $\mathcal{Y}$, and $\boldsymbol{\mathcal{Z}}'$ denote the residuals obtained by regressing $X$, $Y$, and $\mathbf{Z}'$ on $\mathbf{W}$, respectively. 
As shown in the proof of Corollary~\ref{Corollary-Necessary-Condition-IV-set-covariates}, under the additive covariate structure and the regression adjustment in Definition~\ref{Definition-CAT-condition-set-covariates}, the residualized variables $\mathcal{X}$, $\mathcal{Y}$, and $\boldsymbol{\mathcal{Z}}'$ follow the same structural form as the covariate-free CAM-NCE model.

Furthermore, Assumption~\ref{Ass-completeness} and Assumption~\ref{Ass-higher-order-condition} are assumed to hold for these covariate-adjusted residual variables. 
Therefore, the residualized system satisfies the conditions required by Theorem~\ref{Theorem-necessary-sufficient-condition-CAM-NCE}. 
Applying Theorem~\ref{Theorem-necessary-sufficient-condition-CAM-NCE} to the residualized variables yields that the residualized candidate IV set $\boldsymbol{\mathcal{Z}}'$ is valid for the causal relation $\mathcal{X}\to\mathcal{Y}$ if and only if the corresponding CAT condition holds.

By Definition~\ref{Definition-CAT-condition-set-covariates}, this is equivalent to saying that $\mathbf{Z}'$ is a valid IV set w.r.t. $X\to Y$ given $\mathbf{W}$ if and only if $\{X,Y\|(\mathbf{Z}',\mathbf{W})\}$ satisfies the CAT condition.
\end{proof}

\subsection{Proof of Theorem \ref{The-algorithm-correctness}}\label{proof-The-algorithm-correctness}

\begin{proof}
Let \(\mathbf Z_V\subseteq \mathbf Z\) denote the set of all valid IVs among the
candidate IVs, and let \(K_V=|\mathbf Z_V|\). By assumption, \(K_V\ge K\).
We assume \(K\ge 2\), so that the pairwise CAT criterion is informative.

Algorithm~\ref{algorithm-CAT} first removes the effect of covariates
\(\mathbf W\), if present, and then constructs, for each candidate IV
\(\mathcal Z_i\), the auxiliary variable $\widehat{\mathcal V}_{\mathcal X\to \mathcal Y\|\mathcal Z_i}$. By the assumed consistency of the estimators used for covariate adjustment,
auxiliary-variable construction, and distance correlation, for every unordered
pair \(\{\mathcal Z_i,\mathcal Z_j\}\subseteq \boldsymbol{\mathcal{Z}}\),
\[
\widehat{\mathbf{M}}_{ij}
=
\widehat{\operatorname{dCor}}
\bigl(
\widehat{\mathcal V}_{\mathcal X\to \mathcal Y\|\mathcal Z_i},
\widehat{\mathcal Z}_j
\bigr)
+
\widehat{\operatorname{dCor}}
\bigl(
\widehat{\mathcal V}_{\mathcal X\to \mathcal Y\|\mathcal Z_j},
\widehat{\mathcal Z}_i
\bigr)
\]
converges in probability to
\[
\mathbf{M}_{ij}
=
\operatorname{dCor}
\bigl(
\mathcal V_{\mathcal X\to \mathcal Y\|\mathcal Z_i},
\mathcal Z_j
\bigr)
+
\operatorname{dCor}
\bigl(
\mathcal V_{\mathcal X\to \mathcal Y\|\mathcal Z_j},
\mathcal Z_i
\bigr).
\]

For any candidate subset \(\mathcal S_c\subseteq \boldsymbol{\mathcal{Z}}\) with
\(|\mathcal S_c|=K\), define the population objective
\[
T_{\mathcal S_c}
=
\sum_{\substack{\{\mathcal Z_i,\mathcal Z_j\}\subseteq \mathcal S_c\\ i<j}}
\mathbf{M}_{ij},
\]
and let \(\widehat T_{\mathcal S_c}\) be the corresponding empirical objective
computed by Algorithm~\ref{algorithm-CAT}. Since the number of candidate subsets
of size \(K\) is finite and each \(\widehat{\mathbf{M}}_{ij}\) is consistent, we have the
uniform convergence
\[
\max_{\mathcal S_c:\,|\mathcal S_c|=K}
\bigl|
\widehat T_{\mathcal S_c}-T_{\mathcal S_c}
\bigr|
\overset{p}{\longrightarrow}0.
\]

Now consider any subset \(\mathcal S_c\subseteq \boldsymbol{\mathcal{Z}}_V\) with
\(|\mathcal S_c|=K\). Every element of \(\mathcal S_c\) is a valid IV. Hence,
by Theorem~\ref{Theorem-necessary-sufficient-condition-CAM-NCE}, the CAT
condition holds for every pair
\(\{\mathcal Z_i,\mathcal Z_j\}\subseteq \mathcal S_c\), namely
\[
\mathcal V_{\mathcal X\to \mathcal Y\|\mathcal Z_i}\CI \mathcal Z_j,
\qquad
\mathcal V_{\mathcal X\to \mathcal Y\|\mathcal Z_j}\CI \mathcal Z_i.
\]
Since distance correlation is zero if and only if independence holds, it follows
that $\mathbf{M}_{ij}=0$ for every pair in \(\mathcal S_c\). Therefore, $T_{\mathcal S_c}=0$. 

Conversely, let \(\mathcal S_c\) be any subset of size \(K\) that contains at
least one invalid IV. Then \(\mathcal S_c\) is not a valid IV set. By the
necessity and sufficiency result in
Theorem~\ref{Theorem-necessary-sufficient-condition-CAM-NCE}, under
Assumptions~\ref{Ass-completeness}--\ref{Ass-higher-order-condition}, the CAT
condition cannot hold for all pairs in \(\mathcal S_c\). Hence there exists at
least one pair \(\{\mathcal Z_i,\mathcal Z_j\}\subseteq \mathcal S_c\) such that
\[
\mathcal V_{\mathcal X\to \mathcal Y\|\mathcal Z_i}\nCI \mathcal Z_j
\quad
\text{or}
\quad
\mathcal V_{\mathcal X\to \mathcal Y\|\mathcal Z_j}\nCI \mathcal Z_i .
\]
For this pair, at least one of the two distance correlations is strictly positive. Thus $\mathbf{M}_{ij}>0$, and consequently $T_{\mathcal S_c}>0$. 

Therefore, every valid subset of size \(K\) attains the population minimum
value \(0\), whereas every subset of size \(K\) containing at least one invalid
IV has a strictly positive population objective value.

Because the number of candidate subsets is finite, the collection of invalid
subsets of size \(K\) is also finite. If this collection is nonempty, define
\[
\Delta
=
\min_{\substack{\mathcal S_c:\,|\mathcal S_c|=K\\
\mathcal S_c\not\subseteq \mathbf Z_V}}
T_{\mathcal S_c}.
\]
From the preceding argument, \(\Delta>0\). On the event
\[
\max_{\mathcal S_c:\,|\mathcal S_c|=K}
\bigl|
\widehat T_{\mathcal S_c}-T_{\mathcal S_c}
\bigr|
<\frac{\Delta}{2},
\]
every valid subset \(\mathcal S_c\subseteq \boldsymbol{\mathcal{Z}}_V\) satisfies $\widehat T_{\mathcal S_c}<\frac{\Delta}{2}$, whereas every invalid subset satisfies $\widehat T_{\mathcal S_c}>\frac{\Delta}{2}$. Hence any empirical minimizer selected in Line~26 of Algorithm~\ref{algorithm-CAT} must be a subset of \(\boldsymbol{\mathcal{Z}}_V\). Since the
above event has probability tending to one (\ie, $\mathbb{P}(\max_{\mathcal S_c:\,|\mathcal S_c|=K}
\bigl|
\widehat T_{\mathcal S_c}-T_{\mathcal S_c}
\bigr|
<\frac{\Delta}{2})\to 1$), the algorithm outputs, with probability tending to one, a valid IV subset \(\widehat{\mathcal S} \subseteq \boldsymbol{Z}_V\) with $|\widehat{\mathcal S}|=K$. 

Finally, if the candidate set \(\mathbf Z\) contains exactly \(K\) valid IVs,
then \(K_V=K\), and the only subset of size \(K\) consisting entirely of valid
IVs is \(\mathbf Z_V\) itself. Therefore,
\[
\widehat{\mathcal S}\overset{p}{\longrightarrow}\mathbf Z_V,
\]
in the sense that the probability that Algorithm~\ref{algorithm-CAT} outputs the full valid IV set tends to one. 

This proves the stated correctness of Algorithm~\ref{algorithm-CAT}.
\end{proof}

\end{document}